\pdfoutput=1
\documentclass[
  12pt,
  secnumarabic,
  notitlepage,
  tightenlines,
  nofootinbib,
  superscriptaddress,
  longbibliography,
  aps,
  pra
]{revtex4-2}

\usepackage{newpxtext,newpxmath}

\usepackage[utf8]{inputenc}
\usepackage{newunicodechar}
\newunicodechar{ρ}{\rho}
\newunicodechar{β}{\beta}
\newunicodechar{σ}{\sigma}
\newunicodechar{λ}{\lambda}
\newunicodechar{Λ}{\Lambda}
\newunicodechar{μ}{\mu}
\newunicodechar{ν}{\nu}
\newunicodechar{ψ}{\psi}
\newunicodechar{ϕ}{\varphi}
\newunicodechar{Φ}{\Phi}
\newunicodechar{φ}{\phi}
\newunicodechar{π}{\pi}
\newunicodechar{α}{\alpha}
\newunicodechar{ϵ}{\epsilon}
\newunicodechar{ε}{\varepsilon}
\newunicodechar{δ}{\delta}
\newunicodechar{ω}{\omega}
\newunicodechar{Δ}{\Delta}
\newunicodechar{Σ}{\Sigma}

\usepackage[thmtools-compat]{keytheorems}
\usepackage{amsthm}

\usepackage{amssymb}
\usepackage{amsmath}
\usepackage{bm,bbm}
\usepackage{braket}
\usepackage{dsfont}
\usepackage{mathdots}
\usepackage{mathtools}
\usepackage{enumerate}
\usepackage[shortlabels]{enumitem}
\usepackage{csquotes}
\usepackage{stmaryrd}
\usepackage{graphicx}
\usepackage{stackengine}
\usepackage{scalerel}
\usepackage{tensor}       %\tensor[_n]{\braket{j|\psi}}{_{1\ldots n}}
\usepackage[dvipsnames,svgnames]{xcolor}
\usepackage{xfrac}
\usepackage{centernot}
\usepackage{comment}
\usepackage{chngcntr}
\usepackage{booktabs}
\usepackage{tabularray}
\usepackage[normalem]{ulem}
\UseTblrLibrary{booktabs}

\usepackage[pdftex]{hyperref}
\hypersetup{
    bookmarksnumbered=true, % If Acrobat bookmarks are requested, include section numbers
    breaklinks=true,
    unicode=false, % non-Latin characters in Acrobat bookmarks
    pdfstartview={FitH}, % fits the width of the page to the window
    pdfnewwindow=true, % links in new window
    colorlinks=true, % false: boxed links; true: colored links
    allcolors=MidnightBlue!70!black!70!TealBlue
}
\usepackage{zref-clever}
\zcsetup{cap}

\makeatletter
\def\thmhead@plain#1#2#3{%
  \thmname{#1}\thmnumber{\@ifnotempty{#1}{ }\@upn{#2}}%
  \thmnote{ {\the\thm@notefont#3}}}
\let\thmhead\thmhead@plain
\makeatother

\theoremstyle{plain}
\newtheorem{theorem}{Theorem}

\newtheorem*{theorem*}{Theorem}
\newtheorem*{thm*}{Theorem}

\newtheorem{lemma}{Lemma}
\newtheorem*{lemma*}{Lemma}

\newtheorem{proposition}{Proposition}

\newtheorem*{proposition*}{Proposition}
\newtheorem*{prop*}{Proposition}

\newtheorem{corollary}{Corollary}

\newtheorem*{corollary*}{Corollary}
\newtheorem*{cor*}{Corollary}

\newtheorem*{conjecture*}{Conjecture}
\newtheorem*{cj*}{Conjecture}

\theoremstyle{definition}
\newtheorem{definition}{Definition}

\newtheorem*{definition*}{Definition}
\newtheorem*{Def*}{Definition}

\newtheorem*{example*}{Example}
\newtheorem*{ex*}{Example}

\newtheorem*{question*}{Question}

\newtheorem*{problem*}{Problem}

\newtheorem*{axiom*}{Axiom}

\newtheorem*{fact*}{Fact}

\newtheorem*{claim*}{Claim}

\newtheorem*{assumption*}{Assumption}

\newtheorem*{observation*}{Observation}

\theoremstyle{remark}
\newtheorem{remark}{Remark}

\newtheorem*{remark*}{Remark}
\newtheorem*{rem*}{Remark}

\newtheorem*{note*}{Note}

\makeatletter

\renewcommand{\section}{%
  \@startsection
    {section}{1}{\z@}%
    {2.5ex \@plus 1ex \@minus .2ex}%
    {1.2ex \@plus .2ex}%
    {\normalfont\large\bfseries\raggedright}%
}

\renewcommand{\subsection}{%
  \@startsection
    {subsection}{2}{\z@}%
    {2ex \@plus 1ex \@minus .2ex}%
    {.8ex \@plus .2ex}%
    {\normalfont\normalsize\bfseries\raggedright}%
}

\renewcommand{\subsubsection}{%
  \@startsection
    {subsubsection}{3}{\z@}%
    {1.5ex \@plus 1ex \@minus .2ex}%
    {.6ex \@plus .2ex}%
    {\normalfont\normalsize\itshape\raggedright}%
}

\def\@hangfrom@section#1#2#3{\@hangfrom{#1#2}#3}
\def\@hangfroms@section#1#2{#1#2}
\def\@hangfrom@subsection#1#2#3{\@hangfrom{#1#2}#3}
\def\@hangfroms@subsection#1#2{#1#2}
\def\@hangfrom@subsubsection#1#2#3{\@hangfrom{#1#2}#3}
\def\@hangfroms@subsubsection#1#2{#1#2}

\makeatother

\theoremstyle{plain}

\newcommand{\bb}{\begin{equation}\begin{aligned}\hspace{0pt}}
\newcommand{\bbb}{\begin{equation*}\begin{aligned}}
\newcommand{\ee}{\end{aligned}\end{equation}}
\newcommand{\eee}{\end{aligned}\end{equation*}}

\renewcommand{\epsilon}{\varepsilon}

\newcommand{\ve}{\varepsilon}

\DeclareMathOperator{\Tr}{Tr}

\DeclareMathAlphabet{\pazocal}{OMS}{zplm}{m}{n}

\DeclareMathOperator{\supp}{supp}

\newcommand{\lsmatrix}{\left(\begin{smallmatrix}}
\newcommand{\rsmatrix}{\end{smallmatrix}\right)}

\stackMath

\stackMath

\makeatletter
\newcommand*\rel@kern[1]{\kern#1\dimexpr\macc@kerna}
\newcommand*\widebar[1]{%
  \begingroup
  \def\mathaccent##1##2{%
    \rel@kern{0.8}%
    \overline{\rel@kern{-0.8}\macc@nucleus\rel@kern{0.2}}%
    \rel@kern{-0.2}%
  }%
  \macc@depth\@ne
  \let\math@bgroup\@empty \let\math@egroup\macc@set@skewchar
  \mathsurround\z@ \frozen@everymath{\mathgroup\macc@group\relax}%
  \macc@set@skewchar\relax
  \let\mathaccentV\macc@nested@a
  \macc@nested@a\relax111{#1}%
  \endgroup
}

\counterwithin*{equation}{part}
\counterwithin*{thm}{part}
\counterwithin*{figure}{part}

\definecolor{Blues5seq1}{RGB}{239,243,255}
\definecolor{Blues5seq2}{RGB}{189,215,231}
\definecolor{Blues5seq3}{RGB}{107,174,214}
\definecolor{Blues5seq4}{RGB}{49,130,189}
\definecolor{Blues5seq5}{RGB}{8,81,156}

\definecolor{Greens5seq1}{RGB}{237,248,233}
\definecolor{Greens5seq2}{RGB}{186,228,179}
\definecolor{Greens5seq3}{RGB}{116,196,118}
\definecolor{Greens5seq4}{RGB}{49,163,84}
\definecolor{Greens5seq5}{RGB}{0,109,44}

\definecolor{Reds5seq1}{RGB}{254,229,217}
\definecolor{Reds5seq2}{RGB}{252,174,145}
\definecolor{Reds5seq3}{RGB}{251,106,74}
\definecolor{Reds5seq4}{RGB}{222,45,38}
\definecolor{Reds5seq5}{RGB}{165,15,21}

\allowdisplaybreaks

\let\nc\newcommand

\nc{\lset}{\left\{\left.}
\nc{\rset}{\right\}}
\nc{\lsetr}{\left\{\,}
\nc{\rsetr}{\right.\right\}}
\nc{\barr}{\;\rule{0pt}{9.5pt}\left|\;}

\usepackage[most,breakable]{tcolorbox}
  {\expandafter\ifstrequal\expandafter{#1}{filled}{\begin{tcolorbox}[colback=MidnightBlue!70!black!70!TealBlue!2!white,colframe=MidnightBlue!70!black!70!TealBlue!30!white,breakable,enhanced,left=5.75pt,right=5.75pt,grow sidewards by=10pt]}{\begin{tcolorbox}[colback=white,colframe=gray!15,breakable,enhanced,left=5.75pt,right=5.75pt,grow sidewards by=10pt]}}%
  {\end{tcolorbox}}

\nc{\R}{\mathcal{R}}
\nc{\W}{\mathcal{W}}
\nc{\T}{\mathcal{T}}
\nc{\B}{\mathcal{B}}
\nc{\C}{\mathcal{C}}
\nc{\U}{\mathcal{U}}
\nc{\E}{\mathcal{E}}
\nc{\EE}{\mathscr{E}}
\nc{\K}{\mathcal{K}}
\nc{\V}{\mathcal{V}}
\nc{\X}{\mathcal{X}}
\nc{\F}{\mathcal{F}}
\nc{\G}{\mathcal{G}}
\nc{\D}{\mathcal{D}}
\nc{\Y}{\mathcal{Y}}

\nc{\M}{\mathcal{M}}
\nc{\N}{\mathcal{N}}
\nc{\I}{\mathcal{I}}
\nc{\Q}{\mathbb{Q}}
\nc{\RR}{\mathbb{R}}
\nc{\CC}{\mathbb{C}}

\nc{\HH}{\mathbb{H}}
\nc{\MM}{\mathbb{M}}
\nc{\NN}{\mathbb{N}}
\nc{\DD}{\mathbb{D}}

\nc{\id}{\mathbbm{1}}
\nc{\idc}{\mathrm{id}}

\nc{\norm}[2]{\left\lVert#1\right\rVert_{\,#2}}
\nc{\proj}[1]{\ket{#1}\!\bra{#1}}
\nc{\lnorm}[2]{\left\lVert#1\right\rVert_{\ell_{#2}}}

\let\oldproofname\proofname
\renewcommand{\proofname}{\rm\bf{\oldproofname}}

\makeatletter
\renewenvironment{proof}[1][\proofname]{\par
\pushQED{\qed}%
\normalfont \topsep6\p@\@plus6\p@\relax
\trivlist
\item\relax
{\bfseries  %<- font shape
#1\@addpunct{.}}\hspace\labelsep\ignorespaces %<- includes punctuation code
}{%
\popQED\endtrivlist\@endpefalse
}
\makeatother

\nc{\rhos}{\rho'}

\nc{\gm}{\mathbin\#}

\newcommand{\sPnorm}{P,=}
\newcommand{\sPsub}{P,\leq}
\newcommand{\sTnorm}{T,=}
\newcommand{\sTsub}{T,\leq}
\newcommand{\smoothing}[2]{#2,\,#1}
\NewDocumentCommand{\s}{m m O{\ve}}{%
 \lowercase{\def\sdist{#1}}
 \lowercase{\def\snorm{#2}}
 \IfEqCase{\sdist}{%
  {t}{%
   \IfEqCase{\snorm}{%
    {norm}{ \smoothing{\sTnorm}{#3} }%
    {n}{ \smoothing{\sTnorm}{#3} }%
    {sub}{ \smoothing{\sTsub}{#3}}%
    {s}{ \smoothing{\sTsub}{#3}}%
   }%
  }%
  {p}{%
   \IfEqCase{\snorm}{%
    {norm}{ \smoothing{\sPnorm}{#3} }%
    {n}{ \smoothing{\sPnorm}{#3} }%
    {sub}{ \smoothing{\sPsub}{#3}}%
    {s}{ \smoothing{\sPsub}{#3}}%
   }%
  }%
 }%
}%
\NewDocumentCommand{\Dmax}{m m O{\ve}}{%
 D_{\max}^{\s{#1}{#2}[#3]}
}%
\makeatletter

\def\@sect@ltx#1#2#3#4#5#6[#7]#8{%
    \@ifnum{#2>\c@secnumdepth}{%
        \def\H@svsec{\phantomsection}%
        \let\@svsec\@empty
    }{%
        \H@refstepcounter{#1}%
        \def\H@svsec{%
            \phantomsection
        }%
        \protected@edef\@svsec{{#1}}%
        \@ifundefined{@#1cntformat}{%
            \prepdef\@svsec\@seccntformat
        }{%
            \expandafter\prepdef
            \expandafter\@svsec
            \csname @#1cntformat\endcsname
        }%
    }%
    \@tempskipa #5\relax
    \@ifdim{\@tempskipa>\z@}{%
        \begingroup
        \interlinepenalty \@M
        #6{%
            \@ifundefined{@hangfrom@#1}{\@hang@from}{\csname @hangfrom@#1\endcsname}%
            {\hskip#3\relax\H@svsec}{\@svsec}{#8}%
        }%
        \@@par
        \endgroup
        \@ifundefined{#1mark}{\@gobble}{\csname #1mark\endcsname}{#7}%
        \addcontentsline{toc}{#1}{%
            \@ifnum{#2>\c@secnumdepth}{%
                \protect\numberline{}%
            }{%
                \protect\numberline{\csname the#1\endcsname}%
            }%
            #7}% <<<<<<<<<<<<<<<<<<<<<<<< changed from #8
    }{%
        \def\@svsechd{%
            #6{%
                \@ifundefined{@runin@to@#1}{\@runin@to}{\csname @runin@to@#1\endcsname}%
                {\hskip#3\relax\H@svsec}{\@svsec}{#8}%
            }%
            \@ifundefined{#1mark}{\@gobble}{\csname #1mark\endcsname}{#7}%
            \addcontentsline{toc}{#1}{%
                \@ifnum{#2>\c@secnumdepth}{%
                    \protect\numberline{}%
                }{%
                    \protect\numberline{\csname the#1\endcsname}%
                }%
                #8}%
        }%
    }%
    \@xsect{#5}}%

\makeatother

\providecommand{\Tr}{\operatorname{Tr}}
\providecommand{\supp}{\operatorname{supp}}

\providecommand{\id}{\mathrm{id}}
\providecommand{\Ad}{\operatorname{Ad}}

\begin{document}
\count\footins=1000

\title{Robust classical and quantum communication over almost-iid channels}
\author{Liuhang Ye}
\email{ly400@cam.ac.uk}
\affiliation{Department of Applied Mathematics and Theoretical Physics, Centre for Mathematical Sciences, University of Cambridge, Cambridge CB3 0WA, United Kingdom}

\author{Bjarne Bergh}
\email{bb536@cam.ac.uk}
\affiliation{Department of Applied Mathematics and Theoretical Physics, Centre for Mathematical Sciences, University of Cambridge, Cambridge CB3 0WA, United Kingdom}

\author{Nilanjana Datta}
\email{n.datta@damtp.cam.ac.uk}
\affiliation{Department of Applied Mathematics and Theoretical Physics, Centre for Mathematical Sciences, University of Cambridge, Cambridge CB3 0WA, United Kingdom}
% \date{\today}
%%%%%%%%%%%%%%%%%%%%%%%%%%%%%%%%%%%%%%%%%%%%%%%%%%%%%%%%%%%%%%%%%%%%%%%%%%%%%%%%
\begin{abstract}
\vspace{0.8em}
{\centering\bfseries Abstract\par}
\vspace{0.4em}
\noindent

The independent and identically distributed (iid) paradigm for asymptotic
communication through quantum channels describes \(n\) uses of a channel
\(\Phi\) by \(\Phi^{\otimes n}\). Since exact tensor-power behaviour is an
idealization, we ask whether the asymptotic capacities of \(\Phi\) remain
achievable for channel sequences that are almost-iid along \(\Phi\). We
consider sequences of Choi-almost-iid channels whose normalized Choi
states are Mazzola--Sutter--Renner (MSR) almost-iid along the normalized
Choi state of \(\Phi\), with defect parameters \(r_n=o(n)\). In Renner's
exponential de Finetti theorem for quantum states, almost-iid states
appear in the approximating mixture and motivate the MSR almost-iid
class. Choi-almost-iid channels play the corresponding role in an
exponential de Finetti theorem for quantum channels established in a companion paper. In this paper, we prove that the universal unassisted classical and quantum capacities of the class of Choi-almost-iid channels along $\Phi$ equal those of the channel~$\Phi$. Here universality is within 
this class: for each rate below the relevant
capacity, a code sequence depending on \(\Phi\) and the rate, but not on
the particular channel sequence, achieves vanishing error for every
sequence in the class.
\end{abstract}

%%%%%%%%%%%%%%%%%%%%%%%%%%%%%%%%%%%%%%%%%%%%%%%%%%%%%%%%%%%%%%%%%%%%%%%%%%%%%%%%
\makeatletter
\@booleanfalse\titlepage@sw
\makeatother
\maketitle

\tableofcontents
%%%%%%%%%%%%%%%%%%%%%%%%%%%%%%%%%%%%%%%%%%%%%%%%%%%%%%%%%%%%%%%%%%%%%%%%%%%%%%%%
%%%%%%%%%%%%%%%%%%%%%%%%%%%%%%%%%%%%%%%%%%%%%%%%%%%%%%%%%%%%%%%%%%%%%%%%%%%%%%%%
%%%%%%%%%%%%%%%%%%%%%%%%%%%%%%%%%%%%%%%%%%%%%%%%%%%%%%%%%%%%%%%%%%%%%%%%%%%%%%%%

\section{Introduction}
The memoryless, or independent and identically distributed (iid),
setting has long been the standard paradigm for determining optimal
asymptotic rates in information theory. For a memoryless quantum source characterized at the single-use level
by a density operator $\rho$, the corresponding $n$-use description
is $\rho^{\otimes n}$, while $n$ independent uses of a quantum
channel $\Phi$ are described by $\Phi^{\otimes n}$. This tensor-power
structure underlies the asymptotic analysis of a broad range of
information-theoretic tasks. For quantum states, representative
examples include quantum data compression, entanglement concentration
and dilution, and distributed protocols such as quantum state merging
and state redistribution
\cite{schumacher1995quantum,bennett1996concentrating,
horodecki2007quantum,devetak2008exact}. For quantum channels,
it underlies the classical, quantum, and private capacities, as well
as entanglement-assisted and other assisted capacities
\cite{schumacher1997sending,holevo1998capacity,lloyd1997capacity,shor2002quantum,
devetak2005private,bennett2002entanglement}. The tensor-power
assumption provides concentration and typicality properties that are
central to many of the corresponding coding theorems.

Exact iid behaviour is, however, an idealization. The successive outputs of a physical source may be correlated, while
multiple uses of a communication device may be affected by slow
variations in its physical parameters, memory effects arising from
unavoidable system--environment interactions, or a small number of
faulty components.
Moreover, a single state on a growing number of systems cannot
generally be subjected to full tomography without destroying the
resource that is to be used. It is therefore difficult to justify
operationally that a source or channel has an exact tensor-power
structure for arbitrarily large blocklengths. This has motivated
several approaches to information theory beyond the iid setting. The
information-spectrum method provides general asymptotic formulas for
arbitrary sequences of sources and channels
\cite{han1993approximation,verdu1994general,hayashi2006quantum,bowen2006beyond}, while
one-shot information theory characterizes finite-resource tasks in
terms of operational quantities such as smooth entropies and
hypothesis-testing divergences
\cite{renner2008security,datta2009min,tomamichel2015quantum}. These frameworks apply far
beyond the memoryless regime, but they generally characterize
performance in terms of the full non-iid sequence or the particular
finite-blocklength resource. A complementary question is whether the
familiar iid rates, and protocols depending only on an iid reference
resource, remain valid under structured but otherwise unknown
deviations from the tensor-power model.

This question has recently led to the study of almost-iid quantum
sources. Three inequivalent notions have emerged. In increasing order
of generality, these are Mazzola--Sutter--Renner (MSR) almost-iid
sources, Wasserstein almost-iid sources, and weakly almost-iid sources
\cite{mazzola2026almost,girardi2026new}. They form a
strict hierarchy. An MSR almost-iid state along $\rho$ admits a
permutation-invariant extension whose support is contained in the
span of vectors having the tensor-power structure of a fixed
purification of $\rho$ on all but a sublinear number of systems.
Wasserstein almost-iid sources are characterized instead by a
sublinear quantum Wasserstein distance of order $1$ from $\rho^{\otimes n}$,
whereas weakly almost-iid sources require only that, for every fixed
$k$, their $k$-body marginals converge on average to
$\rho^{\otimes k}$. Thus, MSR almost-iid sources possess the strongest
global structure, while weakly almost-iid sources may retain only the
local statistics of the reference state.

{A primary motivation for the MSR almost iid class comes from Renner's
exponential quantum de Finetti theorem
\cite{renner2008security,renner2007symmetry}. The standard finite quantum de Finetti
theorem approximates the reduced state of a symmetric pure state by a
convex mixture of pure iid states, but its error vanishes
only polynomially and only when asymptotically more systems are discarded than
retained.\footnote{For a symmetric pure state on $n$ systems of
local dimension $d$, the standard bound~\cite{christandl2007one} is
$2dk/n$. For fixed $d$, this tends to zero only if
$k/n\to0$; if $k=n^\alpha$ with $\alpha<1$, it decreases only
polynomially, as $O(n^{\alpha-1})$.}
The exponential de Finetti theorem obtains a much more favourable
trade-off by replacing exact tensor-power states in the mixture by
pure almost-power states with a controlled number of
defects. Each such state is a coherent superposition of vectors that
agree with $|\theta\rangle$ on all but at most $r$
positions, with the defect positions allowed to vary between the
vectors in the superposition. In return for this relaxation, the
approximation error contains an exponentially decaying factor and can
vanish even when both the number of discarded systems and the number
of defects are sublinear in $n$; this exponential dependence is the
origin of the theorem's name.\footnote{More precisely, the error is
bounded by
$3(n-k)^d\exp[-(n-k)(r+1)/n]$. For example, $k=n-n^{3/4}$ and
$r=n^{1/2}$ give an error of order
$\exp[-\Omega(n^{1/4})]$ for fixed $d$.}
Since these almost-power states are particular pure members of the
MSR almost-iid class, the exponential de Finetti theorem may be viewed
as representing almost the entire symmetric state by a convex mixture
of MSR almost-iid states rather than exact iid states.
}

This structural motivation raises an operational question: to what
extent are the optimal rates of information-theoretic tasks robust
when iid sources are replaced by almost-iid ones? Mazzola, Sutter and
Renner proved the asymptotic stability of conditional entropy for MSR
almost-iid states and, as an application, the robustness of squashed
entanglement \cite{mazzola2026almost}. Subsequent work
investigated the robustness of hypothesis-testing exponents, generalized quantum Stein's lemma, classical and quantum data-compression rates, and classical
communication rates under different notions of almost-iid structure
\cite{mazzola2026robust,girardi2026quantum,girardi2026generalised}. Entropy concentration, universal
typicality, quantum data compression, and asymmetric hypothesis
testing have also been studied for the substantially larger class of
weakly almost-iid sources \cite{datta2026entropy}. Since MSR
sources retain considerably more global structure, stronger
conclusions can be obtained for tasks that depend sensitively on
coherence and entanglement. In particular, robustness results for
entanglement concentration of pure MSR sources and entanglement
dilution of mixed MSR sources were established in
\cite{datta2026robustness}.

These developments naturally raise the question of whether analogous
robustness results hold for quantum channels. In the present work, we call a channel
$\mathcal N_{A^n\to B^n}$ \emph{Choi-almost-iid} along a reference
channel $\Phi_{A\to B}$ if its normalized Choi state
$J(\mathcal N_{A^n\to B^n})$ is an MSR almost-iid state along $J(\Phi)$. The Choi state is obtained by acting the channel on one-half of the maximally entangled state between a reference system and the channel input; accounting for this reference is essential when studying transmission of entanglement and hence quantum communication.

Further motivation for this definition comes from a companion work \cite{companion-paper},
in which we prove an exponential de Finetti theorem for quantum
channels\footnote{A limitation of our channel exponential de Finetti theorem is that exponential decay is guaranteed only in regimes where a large fraction of the systems still need to be discarded. It does \emph{not} yet reach the characteristic regime of Renner's exponential de Finetti theorem for states, in which nearly all systems can be retained while the error remains exponentially small.}. In the standard channel de Finetti representation, permutation symmetry and no-signalling condition lead to an approximation of the reduced channel by a convex mixture of iid channels~\cite{berta2022semidefinite}. In the exponential counterpart, the iid channels occurring in this
mixture are replaced by almost-iid channels of the type
considered here. The passage from iid to Choi-almost-iid channels
therefore mirrors the passage from tensor-power states to
almost-power states in the exponential quantum de Finetti theorem for states.

{A different notion of almost-iid behaviour for sequences of quantum
channels was introduced in~\cite{girardi2026quantum}, based
on the quantum Wasserstein distance of order one \cite{de2021quantum}. The {\em{club distance}} between two channels
$\mathcal N,\mathcal M\colon A^n\to B^n$ was defined there by
\begin{equation}\label{eq:club}
    \|\mathcal N-\mathcal M\|_{\clubsuit}
    \coloneqq
    \sup_{\rho_{A^n}\in\mathcal D(A^n)}
    \left\|
        \mathcal N(\rho_{A^n})
        -
        \mathcal M(\rho_{A^n})
    \right\|_{W_1},
\end{equation}
where, for states $\rho$ and $\sigma$ on $B^n$,
$\|\rho-\sigma\|_{W_1}$ denotes their quantum Wasserstein distance
of order~$1$. Unlike the diamond distance, the club distance does not
optimize over inputs entangled with an ancillary reference system.
Girardi et al.\ do not consider its stabilized version, since their
application concerns unassisted classical communication. Consequently,
in its present form, this distance is not suited to the study of
quantum capacity or of entanglement-assisted capacities. By contrast,
our Choi-state formulation includes an input reference system by
construction, making it suitable for both classical and quantum
communication; it is also the notion naturally singled out by the
exponential de Finetti theorem for channels.
}

The purpose of this paper is to determine whether the unassisted classical and
quantum capacities of a memoryless reference channel are robust under
arbitrary Choi-almost-iid deviations with a sublinear number of
defects. Let $\mathfrak S_\Phi$ denote the class of all channel
sequences
$\mathbf N=\{\mathcal N^{(n)}\}_{n\in\mathbb N}$ for which
$\mathcal N^{(n)}$ is Choi-almost-iid along $\Phi$ with defect size
$r_n=o(n)$. We require universality: for each target rate, the same sequence of
encoding and decoding maps must work for every
$\mathbf N\in\mathfrak S_\Phi$, without knowledge of the particular
channel sequence or of the extension and defect structure witnessing
its Choi-almost-iid property. We denote the corresponding universal classical and quantum
capacities by $C_{\mathrm{univ}}(\mathfrak S_\Phi)$ and
$Q_{\mathrm{univ}}(\mathfrak S_\Phi)$, see \zcref{Def:universal-classical-capacity} and \zcref{Def:universal-quantum-capacity}, respectively; the subscript
$\mathrm{univ}$ records this universality requirement. Our main results establish that
\begin{equation}
    C_{\mathrm{univ}}(\mathfrak S_\Phi)
    =
    \inf_{\mathbf N\in\mathfrak S_\Phi} C(\mathbf N)
    =
    C(\Phi)
\end{equation}
and
\begin{equation}
    Q_{\mathrm{univ}}(\mathfrak S_\Phi)
    =
    \inf_{\mathbf N\in\mathfrak S_\Phi} Q(\mathbf N)
    =
    Q(\Phi).
\end{equation}
Thus, allowing arbitrary Choi-almost-iid deviations with a sublinear
number of defects causes no loss in either the universally achievable
classical rate or the universally achievable quantum rate {within this class}. We expect analogous robustness results for entanglement-assisted communication and leave their proof to future work.

The proofs concatenate a finite-blocklength inner code for the
reference channel $\Phi$ with an appropriate outer classical or
quantum error-correcting code. The Choi-almost-iid structure implies
that, after blocking, the deviations lie in the span of errors
supported on a sublinear number of blocks, including coherent
superpositions of different defect locations. The outer code corrects both the residual errors of the inner code
and these deviations from perfect iid-ness, with vanishing overall error and
asymptotically negligible rate loss.

The remainder of the paper is organized as follows. In \zcref{sec:math-prel}, we
introduce the notation and preliminary material used throughout. In \zcref{sec:def-almost-iid-channel}, we define Choi-almost-iid channels, establish an equivalent characterization in terms of
defect-structured channel extensions, and show that they map iid
input states to MSR almost-iid output states. We also compare our definition of almost-iid channels with that of \cite{girardi2026quantum}. In \zcref{Sec:robust-classical-communication}, we prove the universal robustness of classical communication. \zcref{Sec:robust-quantum-communication} establishes the corresponding result for quantum communication.
%%%%%%%%%%%%%%%%%%%%%%%%%%%%%%%%%%%%%%%%%%%%%%%%%%%%%%%%%%%%%%%%%%%%%%%%%%%%%%%%
\section{Mathematical preliminaries}
\label{sec:math-prel}

Let $\mathcal{H}$ denote a finite-dimensional Hilbert space, and let $\mathcal{D}(\mathcal{H})$ denote the set of density matrices on $\mathcal{H}$, i.e., the set of positive semidefinite operators of unit trace. We label different quantum systems by capital Roman letters ($A,B,C$, etc.) and often use these letters interchangeably with the corresponding Hilbert space (i.e., we write $A$ instead of $\mathcal{H}_A$). We also concatenate these letters to denote tensor products of systems, i.e., we write $AR$ for $\mathcal{H}_A \otimes \mathcal{H}_R$. We write $d_A\coloneqq\dim\mathcal H_A$. For systems $A$ and $B$, let
$\mathcal L(A,B)$ denote the vector space of linear maps from
$\mathcal H_A$ to $\mathcal H_B$, and abbreviate
$\mathcal L(A)\coloneqq\mathcal L(A,A)$. The identity operator in $\mathcal{L}(\mathcal{H})$ is denoted by $\id_{\mathcal{H}}$, whereas the identity map acting on $\mathcal{L}(\mathcal{H})$ is denoted by $\idc_{\mathcal{H}}$. We use the convention $\mathcal H^{\otimes0}=\mathbb C$ and interpret linear maps between zero-fold tensor products as scalars. We use the shorthand notation \(\phi=|\phi\rangle\!\langle\phi|\) for pure states. The set of quantum channels from $A$ to $B$ is denoted by
$\mathrm{CPTP}(A,B)$. Thus, a channel
$\Phi_{A\to B}\in\mathrm{CPTP}(A,B)$ is a completely positive (CP), trace-preserving (TP) linear map from $\mathcal L(A)$ to $\mathcal L(B)$.
Its normalized Choi state is defined by
\begin{equation}
    J(\Phi_{A\to B})
    \coloneqq
    (\Phi_{A\to B}\otimes\idc_{A'})
    (|\Omega\rangle\langle\Omega|_{AA'}), \qquad
    |\Omega\rangle_{AA'}
    \coloneqq
    \frac{1}{\sqrt{d_A}}
    \sum_{i=1}^{d_A}|i\rangle_A|i\rangle_{A'},
\end{equation}
where $A'\simeq A$, and $\{|i\rangle_A\}_{i=1}^{d_A}$ is a fixed
orthonormal basis of $A$, and
$\{|i\rangle_{A'}\}_{i=1}^{d_A}$ is the corresponding basis of
$A'$. With this normalization, $J(\Phi_{A\to B})$ is a
state satisfying
$\Tr_BJ(\Phi_{A\to B})=\id_{A'}/d_A$. Conversely, a state
$\tau_{BA'}$ is the normalized Choi state of a unique channel from
$A$ to $B$ if and only if $\tau_{A'}=\id_{A'}/d_A$.

A state $\rho_{AB}\in\mathcal D(AB)$ is called an \emph{extension} of
$\rho_A\in\mathcal D(A)$ if $\Tr_B\rho_{AB}=\rho_A$. Analogously, a
channel $\mathcal N_{A\to BE}$ is called an \emph{extension} of
$\mathcal N_{A\to B}$ if
$\Tr_E\circ\mathcal N_{A\to BE}=\mathcal N_{A\to B}$. This is
equivalent to the corresponding state-extension property for their
Choi states:
$\Tr_EJ(\mathcal N_{A\to BE})=J(\mathcal N_{A\to B})$. For integers \(n\geq 1\), let \(S_n\) denote the symmetric group of degree \(n\); for every \(\pi\in S_n\), we denote by
\(U_{\mathcal H^{\otimes n}}^\pi\) the unitary on \(\mathcal H^{\otimes n}\)
that permutes the \(n\) tensor factors according to \(\pi\). We use \([n]\) to denote the set \(\{1,\ldots,n\}\). For operators
$X,Y\in\mathcal L(A,B)$, the \emph{normalized} Hilbert--Schmidt inner product is defined by
$\langle X,Y\rangle_{\mathrm{nHS}}
\coloneqq d_A^{-1}\Tr(X^\dagger Y)$.

\subsection{MSR almost-iid states}
Here we briefly recall the MSR notion of almost-iid states \cite{mazzola2026almost}. Let $\mathcal H_A$ be a Hilbert space, let $|\theta\rangle\in\mathcal H_A$ be a unit vector, and let $n\geq1$ and $0\leq m\leq n$ be integers. Define
\begin{equation}
\label{eq:def-vector-defect-set}
    \mathcal V\!\left(
        \mathcal H_A^{\otimes n},
        |\theta\rangle^{\otimes m}
    \right)
    \coloneqq
    \left\{
         U^\pi_{A^n}
        \left(
            |\theta\rangle^{\otimes m}
            \otimes
            |\omega^{(n-m)}\rangle
        \right)
        :
        \pi\in S_n,\;
        |\omega^{(n-m)}\rangle
        \in\mathcal H_A^{\otimes(n-m)}
    \right\}.
\end{equation}

\begin{definition}[(MSR-almost-iid state~\cite{mazzola2026almost})]
\label{def:MSR almost-iid state}
Let $\sigma_A\in\mathcal D(A)$, let $n\geq1$, and let
$0\leq r\leq n$ be an integer. A state
$\rho_{A^n}\in\mathcal D(A^n)$ is called an
$\binom{n}{r}$-almost-iid state along $\sigma_A$ if there exist a
purification $|\theta\rangle_{AE}$ of $\sigma_A$ and an extension
$\rho_{A^nE^n}$ of $\rho_{A^n}$ such that
\begin{equation}
\label{def:MSR-state}
\begin{aligned}
\textnormal{(i)}\quad&
    U^\pi_{(AE)^n}\rho_{A^nE^n}
    \bigl(U^\pi_{(AE)^n}\bigr)^\dagger
    =
    \rho_{A^nE^n},
    \quad\forall\,\pi\in S_n;                                         \\
\textnormal{(ii)}\quad&
    \supp(\rho_{A^nE^n})
    \subseteq
    \operatorname{span}
    \mathcal V\!\left(
        \mathcal H_{AE}^{\otimes n},
        |\theta\rangle_{AE}^{\otimes(n-r)}
    \right).
\end{aligned}
\end{equation}
Here $ U^\pi_{(AE)^n}$ simultaneously permutes the $n$ composite
systems $A_iE_i$. The set of all
$\binom{n}{r}$-almost-iid states along $\sigma_A$ is denoted by
$\mathcal S^n(\mathcal H_A,\sigma_A^{\otimes(n-r)})$.
\end{definition}

%%%%%%%%%%%%%%%%%%%%%%%%%%%%%%%%%%%%%%%%%%%%%%%%%%%%%%%%%%%%%%%%%%%%%%%%%%%%%%%%
\section{Choi-almost-iid channels}
\label{sec:def-almost-iid-channel}

In this section, we formally introduce a new definition of almost-iid channels, called \emph{Choi-almost-iid} channels, because it is obtained by extending the MSR notion of almost-iid states to quantum channels via the Choi–Jamiołkowski isomorphism. We then discuss a few properties of this class of almost-iid channels.

\begin{definition}[(Choi-almost-iid channel)]
\label{Def:Choi-almost-iid-channel}
Let $\Phi_{A\to B}\in\mathrm{CPTP}(A,B)$, let $n\geq1$, and let
$0\leq r\leq n$ be an integer. We refer to $\Phi_{A\to B}$ as the
\emph{reference channel}. A channel
$\mathcal N_{A^n\to B^n}\in\mathrm{CPTP}(A^n,B^n)$ is called an
$\binom{n}{r}$-Choi-almost-iid channel along $\Phi_{A\to B}$ if
\begin{equation}
    J(\mathcal N_{A^n\to B^n})
    \in
    \mathcal S^n\!\left(
        \mathcal H_{BA'},
        J(\Phi_{A\to B})^{\otimes(n-r)}
    \right).
\end{equation}
Here $J(\mathcal N_{A^n\to B^n})$, initially written as a state on
$B^nA'^n$, is regarded as a state on $(BA')^{\otimes n}$ by grouping
$B_i$ with $A_i'$ for each $i\in[n]$. Thus,
$\mathcal N_{A^n\to B^n}$ is $\binom{n}{r}$-Choi-almost-iid along $\Phi_{A\to B}$
precisely when its normalized Choi state is an
$\binom{n}{r}$-almost-iid state along $J(\Phi_{A\to B})$.
\end{definition}

\subsection{Properties and an operator characterization of Choi-almost-iid channels}
Every vector in
$\mathcal V(\mathcal H_A^{\otimes n},|\theta\rangle^{\otimes m})$ in \zcref{eq:def-vector-defect-set} has $m$ factors equal to
the reference vector $|\theta\rangle$ and an arbitrary joint vector
on the remaining $n-m$ factors. We introduce an analogous construction at the operator level, which will be useful for characterizing Choi-almost-iid channels. Let $V\in\mathcal{L}(A,C)$ be an isometry, that is $V^{\dagger}V=\id_A$, and let $n\geq1$ and
$0\leq m\leq n$ be integers. Define a subset $\mathcal K$ of $\mathcal{L}(A^n,C^n)$:
\begin{equation}
    \mathcal K(A^n\to C^n,V^{\otimes m})
    \coloneqq
    \left\{
         U^\pi_{C^n}
        \bigl(V^{\otimes m}\otimes L^{(n-m)}\bigr)
        \bigl( U^\pi_{A^n}\bigr)^\dagger
        :
        \pi\in S_n,\;
        L^{(n-m)}
        \in\mathcal L(A^{n-m},C^{n-m})
    \right\}.
\end{equation}
Thus, up to the same permutation of the input and output tensor
factors, each operator in $\mathcal K(A^n\to C^n,V^{\otimes m})$ acts as $V$ on $m$ factors and as an
arbitrary joint operator on the remaining $n-m$ factors. We henceforth refer to
\begin{equation}
\begin{aligned}
    \operatorname{span}
    \mathcal V\!\left(
        \mathcal H_A^{\otimes n},
        |\theta\rangle^{\otimes m}
    \right)
    &\subseteq
    \mathcal H_A^{\otimes n},\\
    \operatorname{span}
    \mathcal K(A^n\to C^n,V^{\otimes m})
    &\subseteq
    \mathcal L(A^n,C^n)
\end{aligned}
\end{equation}
as the \emph{vector defect subspace} and the
\emph{operator defect subspace}, respectively. This terminology
refers to the generating vectors and operators: in each generator,
the reference tensor-power structure may fail on at most $n-m$
positions. A general element of either defect subspace may, however,
be a linear combination involving different choices of defect
locations and therefore need not have a definite set of defective
tensor factors.

For integers $k\geq1$ and
$X\in\mathcal L(A^k,C^k)$, we define its \emph{vectorization} by
\begin{equation}
    |X\rangle\!\rangle_{C^kA'^k}
    \coloneqq
    \bigl(X_{A^k\to C^k}\otimes \id_{A'^k}\bigr)
    |\Omega\rangle_{AA'}^{\otimes k}, \qquad
    |\Omega\rangle_{AA'}
    \coloneqq
    \frac{1}{\sqrt{d_A}}
    \sum_{i=1}^{d_A}|i\rangle_A|i\rangle_{A'}.
\end{equation}
With this normalization, vectorization is an isometric isomorphism
between $\mathcal L(A^k,C^k)$, equipped with the
normalized Hilbert--Schmidt inner product, and
$\mathcal H_{C^kA'^k}$. The following lemma identifies the vector and operator defect subspaces under this isomorphism.

\begin{lemma}[(Isomorphism between operator and vector defect subspaces)]
\label{lem:Defect subspace correspondence}
Let $V\in\mathcal{L}(A,C)$ be an isometry, and let $n\geq1$ and
$0\leq m\leq n$ be integers. Then
\begin{equation}
    \left\{
        |K\rangle\!\rangle_{C^nA'^n}
        :
        K\in\mathcal K(A^n\to C^n,V^{\otimes m})
    \right\}
    =
    \mathcal V\!\left(
        \mathcal H_{CA'}^{\otimes n},
        |V\rangle\!\rangle_{CA'}^{\otimes m}
    \right).
\end{equation}
Consequently, vectorization restricts to an isometric isomorphism
\begin{equation}
    \operatorname{span}
    \mathcal K(A^n\to C^n,V^{\otimes m})
    \simeq
    \operatorname{span}
    \mathcal V\!\left(
        \mathcal H_{CA'}^{\otimes n},
        |V\rangle\!\rangle_{CA'}^{\otimes m}
    \right).
\end{equation}
\end{lemma}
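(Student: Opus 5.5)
The argument establishes the set equality from two facts about vectorization: it is multiplicative under tensor products, and it intertwines permutations. Linear-algebraic consequences then give the isometric isomorphism of spans.

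\emph{Tensor products.} For $X\in\mathcal L(A^k,C^k)$ and $Y\in\mathcal L(A^l,C^l)$, after regrouping the factors as $(CA')^{\otimes(k+l)}$,
\begin{equation}
    |X\otimes Y\rangle\!\rangle_{C^{k+l}A'^{k+l}}
    =
    |X\rangle\!\rangle_{C^kA'^k}\otimes|Y\rangle\!\rangle_{C^lA'^l}.
\end{equation}
This is immediate because $|\Omega\rangle^{\otimes(k+l)}=|\Omega\rangle^{\otimes k}\otimes|\Omega\rangle^{\otimes l}$. In particular, $|V^{\otimes m}\otimes L\rangle\!\rangle=|V\rangle\!\rangle^{\otimes m}\otimes|L\rangle\!\rangle$.

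\emph{Permutations.} For $K\in\mathcal L(A^n,C^n)$,
\begin{equation}
    \bigl|U^\pi_{C^n}K(U^\pi_{A^n})^\dagger\bigr\rangle\!\bigr\rangle
    =
    U^\pi_{(CA')^n}|K\rangle\!\rangle .
\end{equation}
To see this, use $(U^\pi_{A^n})^\dagger\otimes\id_{A'^n}\,|\Omega\rangle^{\otimes n}=\id_{A^n}\otimes U^\pi_{A'^n}|\Omega\rangle^{\otimes n}$. This holds because $|\Omega\rangle^{\otimes n}$ is invariant under the simultaneous permutation $U^\pi_{A^n}\otimes U^\pi_{A'^n}$. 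Then $U^\pi_{C^n}\otimes U^\pi_{A'^n}$ is exactly $U^\pi_{(CA')^n}$ under the grouping $C_iA'_i$.

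\emph{Set equality.} Combining the two facts, every $|K\rangle\!\rangle$ with $K\in\mathcal K(A^n\to C^n,V^{\otimes m})$ has the form $U^\pi(|V\rangle\!\rangle^{\otimes m}\otimes|L\rangle\!\rangle)$, so it lies in $\mathcal V(\mathcal H_{CA'}^{\otimes n},|V\rangle\!\rangle^{\otimes m})$. Conversely, vectorization on $\mathcal L(A^{n-m},C^{n-m})$ is surjective onto $\mathcal H_{(CA')^{n-m}}$, since it is an isometric isomorphism. Hence every $|\omega^{(n-m)}\rangle$ equals $|L\rangle\!\rangle$ for some $L$, which gives the reverse inclusion. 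The degenerate cases $m=n$ and $m=0$ are handled by the conventions for zero-fold tensor products: a scalar $L$ corresponds to a scalar vector.

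\emph{Spans.} Vectorization is a linear isometric isomorphism $\mathcal L(A^n,C^n)\to\mathcal H_{C^nA'^n}$, with the normalized Hilbert--Schmidt inner product on the domain. It therefore maps the span of a set onto the span of the image set. Its restriction to $\operatorname{span}\mathcal K$ is consequently an isometric bijection onto $\operatorname{span}\mathcal V$.

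The only mildly delicate point is the permutation step, in particular the transpose trick and keeping track of the regrouping of $C^nA'^n$ as $(CA')^n$. Everything else is bookkeeping. Note that the isometry property of $V$ is not actually needed for the set equality itself; it only ensures that $|V\rangle\!\rangle$ is a unit vector, so that the right-hand side is a defect set of the type in \zcref{eq:def-vector-defect-set}.
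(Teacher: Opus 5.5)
Your proof is correct and follows the paper's argument: vectorization is multiplicative under tensor products, simultaneous input/output permutations are intertwined with $U^\pi_{(CA')^n}$, surjectivity of vectorization on $\mathcal L(A^{n-m},C^{n-m})$ gives the reverse inclusion, and linearity gives the statement about spans. The only cosmetic difference is in the permutation step: the paper uses the general rule $|UXW^\dagger\rangle\!\rangle=(U\otimes\overline W)|X\rangle\!\rangle$ together with the reality of permutation matrices, whereas you use the invariance of $|\Omega\rangle^{\otimes n}$ under $U^\pi_{A^n}\otimes U^\pi_{A'^n}$, which is the same fact.
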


\begin{proof}
Fix $\pi\in S_n$ and $L^{(n-m)}\in
\mathcal L(A^{n-m},C^{n-m})$, and let
\begin{equation}
    K
    =
     U^\pi_{C^n}
    \bigl(V^{\otimes m}\otimes L^{(n-m)}\bigr)
    \bigl( U^\pi_{A^n}\bigr)^\dagger.
\end{equation}
For operators of compatible dimensions, vectorization satisfies
$|UXW^\dagger\rangle\!\rangle
=(U\otimes\overline W)|X\rangle\!\rangle$, where
$\overline W$ denotes entrywise complex conjugation in the basis used
to define vectorization. This follows directly from the transpose trick $(M\otimes \id)|\Omega\rangle
=(\id\otimes M^{\mathsf T})|\Omega\rangle$.

In the corresponding product bases, permutation matrices are real, so
$\overline{ U^\pi_{A^n}}= U^\pi_{A'^n}$. Moreover, vectorization
respects tensor products. Therefore,
\begin{equation}
    |K\rangle\!\rangle
    =
     U^\pi_{(CA')^n}
    \left(
        |V\rangle\!\rangle^{\otimes m}
        \otimes
        |L^{(n-m)}\rangle\!\rangle
    \right)
    \in
    \mathcal V\!\left(
        \mathcal H_{CA'}^{\otimes n},
        |V\rangle\!\rangle^{\otimes m}
    \right),
\end{equation}
where
$ U^\pi_{(CA')^n}
= U^\pi_{C^n}\otimes U^\pi_{A'^n}$
under the prescribed grouping of the tensor factors. This proves one
inclusion.

Conversely, every vector on the right-hand side has the form
\begin{equation}
     U^\pi_{(CA')^n}
    \left(
        |V\rangle\!\rangle^{\otimes m}
        \otimes|\omega^{(n-m)}\rangle
    \right)
\end{equation}
for some $\pi\in S_n$ and some
$|\omega^{(n-m)}\rangle
\in\mathcal H_{CA'}^{\otimes(n-m)}$. Since vectorization is an
isomorphism, there exists a unique
$L^{(n-m)}
\in\mathcal L(A^{n-m},C^{n-m})$
such that
$|L^{(n-m)}\rangle\!\rangle
=|\omega^{(n-m)}\rangle$. Reversing the preceding calculation shows
that the given vector is the vectorization of
\begin{equation}
     U^\pi_{C^n}
    \bigl(V^{\otimes m}\otimes L^{(n-m)}\bigr)
    \bigl( U^\pi_{A^n}\bigr)^\dagger,
\end{equation}
which belongs to
$\mathcal K(A^n\to C^n,V^{\otimes m})$. This proves the reverse
inclusion. The statement concerning the spans follows by linearity.
\end{proof}

The operator defect subspace will be used to give an
operator-level representation of Choi-almost-iid channels. It is
therefore useful to know that this subspace admits an orthonormal basis
whose elements belong to the generating set $\mathcal K$ itself. Here
and below, orthonormality of operators is understood with respect to
the normalized Hilbert--Schmidt inner product defined in \zcref{sec:math-prel}.

\begin{lemma}[(Orthonormal basis of the defect subspace)]
\label{lem:Orthonormal basis of the defect subspace}
Let $V\in\mathcal{L}(A,C)$ be an isometry, let $n\geq1$, and let
$0\leq r\leq n$ be an integer. There exists an orthonormal basis
$\{K_t\}_{t\in\mathcal T}$ of
$\operatorname{span}
\mathcal K(A^n\to C^n,V^{\otimes(n-r)})$
such that
\begin{equation}
    K_t
    \in
    \mathcal K(A^n\to C^n,V^{\otimes(n-r)})
    \quad\forall\,t\in\mathcal T,
    \qquad
    |\mathcal T|
    =
    \sum_{j=0}^r
    \binom{n}{j}(d_Ad_C-1)^j
    \leq
    \binom{n}{r}(d_Ad_C)^r.
\end{equation}
\end{lemma}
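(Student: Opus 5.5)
The plan is to transport the problem to the vector side via the lemma on the isomorphism between operator and vector defect subspaces, and then build the orthonormal basis explicitly. Set $|v\rangle\coloneqq|V\rangle\!\rangle_{CA'}$. This is a unit vector, since $\langle V,V\rangle_{\mathrm{nHS}}=d_A^{-1}\Tr(V^\dagger V)=1$, and it lives in $\mathcal H_{CA'}$ of dimension $D\coloneqq d_Cd_A$. Because vectorization is an isometric isomorphism that maps $\mathcal K$ onto $\mathcal V$ elementwise, it suffices to find an orthonormal basis of $\operatorname{span}\mathcal V(\mathcal H_{CA'}^{\otimes n},|v\rangle^{\otimes(n-r)})$ consisting of elements of $\mathcal V$ itself. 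Pulling this basis back under vectorization then gives operators $K_t\in\mathcal K$ that are orthonormal in the normalized Hilbert--Schmidt inner product.

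\emph{Construction.} Extend $|v\rangle$ to an orthonormal basis $\{|e_0\rangle=|v\rangle,|e_1\rangle,\dots,|e_{D-1}\rangle\}$ of $\mathcal H_{CA'}$. The product vectors $|e_{\mathbf x}\rangle=|e_{x_1}\rangle\otimes\cdots\otimes|e_{x_n}\rangle$, with $\mathbf x\in\{0,\dots,D-1\}^n$, form an orthonormal basis of $\mathcal H_{CA'}^{\otimes n}$. Let $\mathcal T$ be the set of strings with Hamming weight $|\{i:x_i\neq0\}|\le r$; counting them gives exactly $\sum_{j=0}^r\binom{n}{j}(D-1)^j$.

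Each such $|e_{\mathbf x}\rangle$ lies in $\mathcal V$. Indeed, it has at least $n-r$ positions equal to $|v\rangle$, so some permutation $\pi$ brings $n-r$ of these positions to the front. The resulting vector is $U^\pi(|v\rangle^{\otimes(n-r)}\otimes|\omega\rangle)$, where $|\omega\rangle$ is the product of the remaining factors.

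\emph{Spanning.} Take a generator $U^\pi(|v\rangle^{\otimes(n-r)}\otimes|\omega\rangle)$ and expand $|\omega\rangle$ in the product basis of $\mathcal H_{CA'}^{\otimes r}$. Each term is a basis vector $|e_{\mathbf x}\rangle$ with zeros on the $n-r$ positions selected by $\pi$, so its weight is at most $r$. Hence $\operatorname{span}\mathcal V\subseteq\operatorname{span}\{|e_{\mathbf x}\rangle:\mathbf x\in\mathcal T\}$, and the reverse inclusion holds because each $|e_{\mathbf x}\rangle$ belongs to $\mathcal V$. The vectors $\{|e_{\mathbf x}\rangle\}_{\mathbf x\in\mathcal T}$ are therefore an orthonormal basis, and the cardinality formula is proved.

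\emph{Upper bound.} It remains to show $\sum_{j\le r}\binom{n}{j}(D-1)^j\le\binom{n}{r}D^r$. This is the only step needing a small combinatorial argument. Every string of weight at most $r$ is obtained by first choosing a set $S$ of $r$ positions that contains its support, and then choosing arbitrary symbols in $\{0,\dots,D-1\}$ on $S$ and zeros elsewhere. This map is surjective from a set of size $\binom{n}{r}D^r$ onto $\mathcal T$, which gives the bound. A support of size $j\le r$ can always be enlarged to some $S$ of size $r$ because $r\le n$, so the map really does hit every element of $\mathcal T$.

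I expect no real obstacle. The only points requiring care are the unit norm of $|V\rangle\!\rangle$ under the chosen normalization and the surjection argument for the bound.
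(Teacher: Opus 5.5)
Your proposal is correct and is essentially the paper's proof. The paper works directly on the operator side: it extends $F_0\coloneqq V$ to a normalized-Hilbert--Schmidt orthonormal basis $\{F_a\}$ of $\mathcal L(A,C)$ and takes the products $F_{a_1}\otimes\cdots\otimes F_{a_n}$ with at most $r$ nonzero indices, which is exactly the pullback of your product vectors $|e_{\mathbf x}\rangle$ under vectorization; your surjection argument for the bound is the double-counting form of the paper's identity $\binom{n}{j}\binom{n-j}{r-j}=\binom{n}{r}\binom{r}{j}$ combined with $\binom{n-j}{r-j}\geq1$.
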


\begin{proof}
Set $D\coloneqq d_Ad_C$. Since $V^\dagger V=\id_A$, we have
$\langle V,V\rangle_{\mathrm{nHS}}
=d_A^{-1}\Tr(\id_A)=1$. We may therefore set $F_0\coloneqq V$ and
extend it to an orthonormal basis
$\{F_a\}_{a=0}^{D-1}$ of $\mathcal L(A,C)$. Define
\begin{equation}
    \mathcal B_r
    \coloneqq
    \left\{
        F_{a_1}\otimes\cdots\otimes F_{a_n}
        :
        \bigl|\{i\in[n]:a_i\neq0\}\bigr|\leq r
    \right\}.
\end{equation}
Since the normalized Hilbert--Schmidt inner product is multiplicative
under tensor products, the elements of $\mathcal B_r$ are
orthonormal.

Every element of $\mathcal B_r$ contains at least $n-r$ factors equal
to $F_0=V$. A simultaneous permutation of the input and output tensor
factors can place any chosen $n-r$ of these factors in the first
$n-r$ positions, while the remaining $r$ factors can be incorporated
into $L^{(r)}$. Hence every element of $\mathcal B_r$ belongs to
$\mathcal K(A^n\to C^n,V^{\otimes(n-r)})$.

Conversely, consider
\begin{equation}
    K
    =
     U^\pi_{C^n}
    \bigl(V^{\otimes(n-r)}\otimes L^{(r)}\bigr)
    \bigl( U^\pi_{A^n}\bigr)^\dagger
    \in
    \mathcal K(A^n\to C^n,V^{\otimes(n-r)}).
\end{equation}
Expanding $L^{(r)}$ in the product basis
$\{F_{a_1}\otimes\cdots\otimes F_{a_r}\}$ expresses $K$ as a linear
combination of product operators containing at most $r$ factors
different from $V$. Every such product operator belongs to
$\mathcal B_r$. It follows that
\begin{equation}
    \operatorname{span}\mathcal B_r
    =
    \operatorname{span}
    \mathcal K(A^n\to C^n,V^{\otimes(n-r)}).
\end{equation}
Thus, $\mathcal B_r$ is an orthonormal basis of the operator defect
subspace, and every one of its elements belongs to the generating set
$\mathcal K$.

It remains to estimate its cardinality. For each $j$, one may choose
the $j$ positions whose indices are nonzero in $\binom{n}{j}$ ways and
choose one of $D-1$ nonzero indices at each such position. Hence
\begin{equation}
\begin{aligned}
    |\mathcal B_r|
    &=
    \sum_{j=0}^r\binom{n}{j}(D-1)^j\\
    &\leq
    \binom{n}{r}
    \sum_{j=0}^r\binom{r}{j}(D-1)^j
    =
    \binom{n}{r}D^r.
\end{aligned}
\end{equation}
The inequality follows from
$\binom{n}{j}\binom{n-j}{r-j}
=\binom{n}{r}\binom{r}{j}$
and $\binom{n-j}{r-j}\geq1$. Enumerating the elements of
$\mathcal B_r$ as $\{K_t\}_{t\in\mathcal T}$ completes the proof.
\end{proof}

The preceding lemmas give an operator-level characterization of Choi-almost-iid channels, analogous to the expansion of almost-iid states in~\cite[Remark~2.4(c)]{mazzola2026almost}. It makes the action of these channels on input states explicit and will be used repeatedly in later proofs.

\begin{proposition}[(Operator characterization of Choi-almost-iid channels)]
\label{Prop:alternative-characterization}
Let $\Phi_{A\to B}\in\mathrm{CPTP}(A,B)$, let $n\geq1$ and
$0\leq r\leq n$, and let
$\mathcal N_{A^n\to B^n}\in\mathrm{CPTP}(A^n,B^n)$. Then
$\mathcal N_{A^n\to B^n}$ is
$\binom{n}{r}$-Choi-almost-iid along $\Phi_{A\to B}$ if and only if
there exist a system $E$, a Stinespring isometry
$V_\Phi\in\mathcal{L}(A,BE)$ for $\Phi_{A\to B}$, and an extension channel
\begin{equation}
    {\mathcal N}_{A^n\to B^nE^n}
    \in\mathrm{CPTP}(A^n,B^nE^n)
\end{equation}
of $\mathcal N_{A^n\to B^n}$ satisfying the following conditions:

\begin{enumerate}

\item[(i)]
The channel ${\mathcal N}_{A^n\to B^nE^n}$ is permutation
covariant, i.e.,
\begin{equation}
    {\mathcal N}_{A^n\to B^nE^n}
    \left(
         U^\pi_{A^n}X( U^\pi_{A^n})^\dagger
    \right)
    =
     U^\pi_{ B^nE^n}
    {\mathcal N}_{A^n\to B^nE^n}(X)
    ( U^\pi_{ B^nE^n})^\dagger
\end{equation}
for every $\pi\in S_n$ and $X\in\mathcal L(A^n)$.

\item[(ii)]
There exist a positive semidefinite matrix
$\beta=(\beta_{tt'})_{t,t'\in\mathcal T}$ with unit trace, and an orthonormal basis
$\{K_t\}_{t\in\mathcal T}$ of
\(
    \operatorname{span}
    \mathcal K\big(
        A^n\to B^nE^n,
        V_\Phi^{\otimes(n-r)}
    \big),
\)
satisfying the constraint
\begin{equation}
    \sum_{t,t'\in\mathcal T}
    \beta_{tt'}K_{t'}^\dagger K_t
    =
    \id_{A^n},
\end{equation}
such that $\forall X\in\mathcal L(A^n)$,
\begin{equation}
    {\mathcal N}_{A^n\to B^nE^n}(X)
    =
    \sum_{t,t'\in\mathcal T}
    \beta_{tt'}K_tXK_{t'}^\dagger.
\end{equation}
Moreover, the basis can be chosen with $K_t\in\mathcal K(A^n\to B^nE^n,V_\Phi^{\otimes(n-r)})$ for all $t\in\mathcal{T}$.
\end{enumerate}
\end{proposition}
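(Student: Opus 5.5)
The plan is to pass through the Choi isomorphism, using \zcref{lem:Defect subspace correspondence} to turn the vector-level MSR support condition into the operator-level Kraus-type expansion, and \zcref{lem:Orthonormal basis of the defect subspace} to supply a convenient basis. The first step is to relate purifications and Stinespring isometries. A purification $|\theta\rangle_{BA'E}$ of $J(\Phi)$ is exactly a vector of the form $|V_\Phi\rangle\!\rangle_{BEA'}$ for a Stinespring isometry $V_\Phi\in\mathcal L(A,BE)$ of $\Phi$: every vector on $BEA'$ is $|V\rangle\!\rangle$ for a unique $V\in\mathcal L(A,BE)$. The reduced state on $A'$ is $\id/d_A$ if and only if $\overline{V^\dagger V}/d_A=\id/d_A$, which is the isometry condition. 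The reduced state on $BA'$ is $J(\Tr_E V\cdot V^\dagger)$, which equals $J(\Phi)$ if and only if $\Tr_E V\cdot V^\dagger=\Phi$.

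Next comes the correspondence between extensions. Extensions $\rho_{(BA'E)^n}$ of $J(\mathcal N)$ with $A'$-marginal $\id/d_A^n$ are precisely the Choi states $J(\mathcal N_{A^n\to B^nE^n})$ of channel extensions. Any extension of $J(\mathcal N)$ automatically has this marginal, since the marginal is inherited from $J(\mathcal N)$. This yields a bijection between extensions of the state and extensions of the channel.

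The forward direction then runs as follows. Permutation invariance of $J(\mathcal N_{A^n\to B^nE^n})$ under $U^\pi_{B^nE^n}\otimes U^\pi_{A'^n}$ is equivalent to permutation covariance (i), via the transpose trick. The permutation matrices are real, as in the proof of \zcref{lem:Defect subspace correspondence}, so $U^\pi_{A'^n}$ acting on the reference corresponds to conjugating the input by $U^\pi_{A^n}$. For (ii), take the orthonormal basis $\{K_t\}\subseteq\mathcal K$ from \zcref{lem:Orthonormal basis of the defect subspace}. By \zcref{lem:Defect subspace correspondence}, the vectors $|K_t\rangle\!\rangle$ form an orthonormal basis of $\operatorname{span}\mathcal V((\mathcal H_{BEA'})^{\otimes n},|V_\Phi\rangle\!\rangle^{\otimes(n-r)})$. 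Condition (ii) of \zcref{def:MSR almost-iid state} lets us expand the Choi state in this basis as
\[
J(\mathcal N_{A^n\to B^nE^n})=\sum_{t,t'}\beta_{tt'}|K_t\rangle\!\rangle\langle\!\langle K_{t'}|,
\]
where $\beta\ge0$ and $\Tr\beta=1$ follow from positivity, unit trace, and orthonormality. Inverting the Choi map, which is linear, gives $\mathcal N_{A^n\to B^nE^n}(X)=\sum\beta_{tt'}K_tXK_{t'}^\dagger$, because $|K_t\rangle\!\rangle\langle\!\langle K_{t'}|$ is the Choi operator of $X\mapsto K_tXK_{t'}^\dagger$. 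The trace-preservation constraint $\sum\beta_{tt'}K_{t'}^\dagger K_t=\id$ is just the TP condition of the extension channel written out.

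The converse reverses these steps. From (ii) the Choi state has the displayed form, so its support lies in the span of the $|K_t\rangle\!\rangle$. This span is the vector defect subspace for the purification $|V_\Phi\rangle\!\rangle$ of $J(\Phi)$. Condition (i) gives permutation invariance, and the extension property is inherited.

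The main obstacle I expect is bookkeeping rather than a conceptual gap. The grouping of $B_iE_iA'_i$ must match the grouping $(AE)$ used in \zcref{def:MSR almost-iid state}, with the state system being $BA'$ and the purifying system $E$. We also need the exact statement that every purification of $J(\Phi)$ arises from some Stinespring isometry, including the freedom to choose $E$ of any dimension. The ordering $BEA'$ versus $BA'E$ is handled by a fixed system swap that commutes with the simultaneous permutations. I would also check carefully that the definition's basis may be an arbitrary orthonormal basis of the span, since $\beta$ transforms covariantly under a change of basis. The final "moreover" clause then follows by choosing the basis from \zcref{lem:Orthonormal basis of the defect subspace}.
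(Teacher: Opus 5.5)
Your proposal is correct and follows the paper's proof essentially step for step. You identify the purification of $J(\Phi)$ with $|V_\Phi\rangle\!\rangle$ for a Stinespring isometry: the isometry property comes from the $A'$-marginal and the Stinespring property from the $BA'$-marginal. You identify the permutation-invariant extension with the Choi state of a permutation-covariant extension channel. You then use the defect-subspace correspondence and the orthonormal-basis lemma to expand that Choi state and invert the Choi--Jamio{\l}kowski map, with trace preservation giving the constraint on $\beta$. Your converse reverses these steps exactly as the paper does.
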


\begin{proof}
Suppose first that $\mathcal N_{A^n\to B^n}$ is
$\binom{n}{r}$-Choi-almost-iid along $\Phi_{A\to B}$. By
\zcref{Def:Choi-almost-iid-channel}, there exist a purification
$|\theta\rangle_{BEA'}$ of $J(\Phi_{A\to B})$ and a
permutation-invariant state
$\tau_{ B^nE^nA'^n}$ satisfying
\begin{equation}
    \Tr_{E^n}\tau_{ B^nE^nA'^n}
    =
    J(\mathcal N_{A^n\to B^n})
\end{equation}
and
\begin{equation}
    \supp(\tau_{ B^nE^nA'^n})
    \subseteq
    \operatorname{span}
    \mathcal V\!\left(
        \mathcal H_{BEA'}^{\otimes n},
        |\theta\rangle_{BEA'}^{\otimes(n-r)}
    \right).
\end{equation}
Here we have made the fixed within-site reordering
$BA'E\simeq BEA'$.

By the vectorization isomorphism, there exists a unique operator
$V_\Phi\in\mathcal{L}(A,BE)$ such that
$|\theta\rangle=|V_\Phi\rangle\!\rangle$. Since
$|\theta\rangle$ purifies $J(\Phi_{A\to B})$, its marginal on $A'$ is
$\id_{A'}/d_A$. On the other hand,
\begin{equation}
    \Tr_{BE}
    |V_\Phi\rangle\!\rangle
    \langle\!\langle V_\Phi|
    =
    \frac{(V_\Phi^\dagger V_\Phi)^{\mathsf T}}{d_A}.
\end{equation}
It follows that $V_\Phi^\dagger V_\Phi=\id_A$, so $V_\Phi$ is an
isometry. Moreover,
\begin{equation}
    \Tr_E
    |V_\Phi\rangle\!\rangle
    \langle\!\langle V_\Phi|
    =
    J(\Phi_{A\to B}),
\end{equation}
and hence, by injectivity of the Choi--Jamio{\l}kowski correspondence \cite{choi1975completely,jamiolkowski1972linear},
\begin{equation}
    \Phi_{A\to B}(X)
    =
    \Tr_E\!\left[V_\Phi XV_\Phi^\dagger\right].
\end{equation}
Thus, $V_\Phi$ is a Stinespring isometry for $\Phi_{A\to B}$.

Since $\tau_{ B^nE^nA'^n}$ extends
$J(\mathcal N_{A^n\to B^n})$, it has marginal
$\id_{A'^n}/d_A^n$ on $A'^n$. It is therefore the normalized Choi state
of a unique channel
${\mathcal N}_{A^n\to B^nE^n}$. Furthermore,
\begin{equation}
    \Tr_{E^n}J({\mathcal N}_{A^n\to B^nE^n})
    =
    J(\mathcal N_{A^n\to B^n}),
\end{equation}
so ${\mathcal N}_{A^n\to B^nE^n}$ is an extension of $\mathcal N_{A^n\to B^n}$.

The permutation invariance of $\tau$ is invariance under
$ U^\pi_{ B^nE^n}\otimes U^\pi_{A'^n}$. Since permutation matrices
are real in the chosen product bases, this invariance is equivalent,
under the Choi--Jamio{\l}kowski correspondence, to
\begin{equation}
    {\mathcal N}_{A^n\to B^nE^n}
    \left(
         U^\pi_{A^n}X( U^\pi_{A^n})^\dagger
    \right)
    =
     U^\pi_{ B^nE^n}
    {\mathcal N}_{A^n\to B^nE^n}(X)
    ( U^\pi_{ B^nE^n})^\dagger.
\end{equation}
This proves~(i).

By \zcref{lem:Defect subspace correspondence} and
$|\theta\rangle=|V_\Phi\rangle\!\rangle$, the support condition becomes
\begin{equation}
    \supp(\tau)
    \subseteq
    \operatorname{span}
    \left\{
        |K\rangle\!\rangle
        :
        K\in
        \mathcal K\!\left(
            A^n\to B^nE^n,
            V_\Phi^{\otimes(n-r)}
        \right)
    \right\}.
\end{equation}
Let $\{K_t\}_{t\in\mathcal T}$ be the orthonormal basis supplied by
\zcref{lem:Orthonormal basis of the defect subspace}. Since
vectorization is an isometry,
$\{|K_t\rangle\!\rangle\}_{t\in\mathcal T}$ is an orthonormal basis
of the corresponding vector defect subspace. Consequently,
\begin{equation}
    \tau
    =
    \sum_{t,t'\in\mathcal T}
    \beta_{tt'}
    |K_t\rangle\!\rangle
    \langle\!\langle K_{t'}|
\end{equation}
for some positive semidefinite matrix
$\beta=(\beta_{tt'})_{t,t'\in\mathcal T}$. Since $\tau$ is normalized
and the vectors $|K_t\rangle\!\rangle$ are orthonormal,
$\operatorname{Tr}\beta=1$.

For every $X\in\mathcal L(A^n)$, normalized vectorization gives
\begin{equation}
    d_A^n\Tr_{A'^n}
    \left[
        (\id_{B^nE^n}\otimes X^{\mathsf T})
        |K_t\rangle\!\rangle
        \langle\!\langle K_{t'}|
    \right]
    =
    K_tXK_{t'}^\dagger.
\end{equation}
Applying the inverse Choi--Jamio{\l}kowski correspondence therefore
yields
\begin{equation}
    {\mathcal N}_{A^n\to B^nE^n}(X)
    =
    \sum_{t,t'\in\mathcal T}
    \beta_{tt'}K_tXK_{t'}^\dagger.
\end{equation}
Since ${\mathcal N}_{A^n\to B^nE^n}$ is trace preserving, the coefficients necessarily satisfy
\begin{equation}
    \sum_{t,t'\in\mathcal T}
    \beta_{tt'}K_{t'}^\dagger K_t
    =
    \id_{A^n}.
\end{equation}
This proves~(ii).

Conversely, suppose that $V_\Phi$ and
${\mathcal N}_{A^n\to B^nE^n}$ satisfying~(i)--(ii) exist.
Since $V_\Phi$ is a Stinespring isometry for $\Phi$,
$|V_\Phi\rangle\!\rangle_{BEA'}$ is a purification of
$J(\Phi_{A\to B})$.

Let $\tau\coloneqq J({\mathcal N}_{A^n\to B^nE^n})$. Since
$ \mathcal N_{A^n\to B^nE^n}$ extends $\mathcal N_{A^n\to B^n}$, we have
$\Tr_{E^n}\tau=J(\mathcal N_{A^n\to B^n})$. Condition~(ii) and the
Choi--Jamio{\l}kowski correspondence give
\begin{equation}
    \tau
    =
    \sum_{t,t'\in\mathcal T}
    \beta_{tt'}
    |K_t\rangle\!\rangle
    \langle\!\langle K_{t'}|.
\end{equation}
It follows from
\zcref{lem:Defect subspace correspondence} that
\begin{equation}
    \supp(\tau)
    \subseteq
    \operatorname{span}
    \mathcal V\!\left(
        \mathcal H_{BEA'}^{\otimes n},
        |V_\Phi\rangle\!\rangle_{BEA'}^{\otimes(n-r)}
    \right).
\end{equation}
By~(i), $\tau$ is invariant under simultaneous permutations of the
$n$ composite systems $BEA'$. Thus, $\tau$ is a
permutation-invariant extension of $J(\mathcal N)$ satisfying the
required support condition. Hence
$J(\mathcal N_{A^n\to B^n})$ is an
$\binom{n}{r}$-almost-iid state along $J(\Phi_{A\to B})$, and therefore
$\mathcal N_{A^n\to B^n}$ is
$\binom{n}{r}$-Choi-almost-iid along $\Phi_{A\to B}$.
\end{proof}

In \zcref{Prop:alternative-characterization}, we only required the existence of a Stinespring dilation of the reference channel. The following corollary, whose proof can be found in \zcref{app:proof-of-fixed-Stinespring-representation}, shows that we can in fact pick \emph{any} minimal Stinespring isometry and choose the extension channel and decomposition using this isometry. This is analogous to~\cite[Remark~2.4(a)]{mazzola2026almost}.

\begin{corollary}[(Fixed Stinespring representation)]
\label{cor:fixed-Stinespring-representation}
Let $\Phi_{A\to B}\in\mathrm{CPTP}(A,B)$, and fix an arbitrary minimal Stinespring isometry
\begin{equation}
    V_\Phi\in\mathcal{L}(A,BE)
\end{equation}
for $\Phi$. If
$\mathcal N_{A^n\to B^n}$ is an
$\binom{n}{r}$-Choi-almost-iid channel along $\Phi$, then the
extension channel and the operator decomposition in
\zcref{Prop:alternative-characterization} may be chosen using this
fixed environment $E$ and this fixed Stinespring isometry $V_\Phi$.
\end{corollary}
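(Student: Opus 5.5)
Starting from \zcref{Prop:alternative-characterization}, we have some environment $E'$, a Stinespring isometry $W\in\mathcal L(A,BE')$ for $\Phi$, and a permutation-covariant extension $\mathcal N'_{A^n\to B^nE'^n}$ whose Kraus-type operators lie in $\operatorname{span}\mathcal K(A^n\to B^nE'^n,W^{\otimes(n-r)})$. The plan is to transport this data to the fixed minimal dilation $V_\Phi\in\mathcal L(A,BE)$ by a local map on each environment factor. By the uniqueness of Stinespring dilations up to the environment, since $V_\Phi$ is minimal there is a partial isometry, and in fact a co-isometry $P:\mathcal H_{E'}\to\mathcal H_E$ with $PP^\dagger=\id_E$, such that $(\id_B\otimes P)W=V_\Phi$. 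Equivalently, $W=(\id_B\otimes U)V_\Phi$ for some isometry $U:E\to E'$, with $P=U^\dagger$. (If $d_{E'}<d_E$ this is impossible, since minimality forces $d_E\le d_{E'}$; so this case does not occur.)

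\emph{Step 1: construct the new extension.} I will define the new extension channel
\[
\mathcal N_{A^n\to B^nE^n}(X)\coloneqq (\id_{B^n}\otimes P^{\otimes n})\,\mathcal N'(X)\,(\id_{B^n}\otimes P^{\otimes n})^\dagger+\Tr\!\left[(\id_{B^n}\otimes(\id_{E'}-P^\dagger P)^{\otimes n}\text{-complement})\,\mathcal N'(X)\right]\sigma .
\]
This naive version is awkward. The cleaner route is to show that $\mathcal N'(X)$ is already supported on $B^n\otimes(\operatorname{range}U)^{\otimes n}$, in which case conjugating by $P^{\otimes n}$ is trace preserving.

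\emph{Step 2: the support claim.} This is the main obstacle, and I expect it is where the real content lies. Conditioning on $E'$ alone does not force every defective factor to have environment in $\operatorname{range}U$. The key observation is that $\Tr_{E'^n}$ of the Choi state is fixed, but the defect operators $L^{(r)}$ are arbitrary, so their environment components can leave $\operatorname{range}U$. Hence I would not claim the support property. Instead I would fix the trace issue by a different local map: a channel $\mathcal R:E'\to E$, namely $\mathcal R(Y)=PYP^\dagger+\Tr[(\id-P^\dagger P)Y]\,|e_0\rangle\!\langle e_0|$, with Kraus operators $P$ and $|e_0\rangle\langle f_j|$ for an orthonormal basis $\{|f_j\rangle\}$ of $(\operatorname{range}U)^\perp$.

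\emph{Step 3: check the conditions.} Set $\mathcal N\coloneqq(\idc_{B^n}\otimes\mathcal R^{\otimes n})\circ\mathcal N'$. This is CPTP, and since $\mathcal R$ is trace preserving, $\Tr_{E^n}\circ\mathcal N=\mathcal N_{A^n\to B^n}$. Permutation covariance follows because $\mathcal R^{\otimes n}$ commutes with permutations.

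For the support condition, expand $\mathcal R^{\otimes n}$ in product Kraus operators $R_{\mathbf j}=R_{j_1}\otimes\cdots\otimes R_{j_n}$. Applied to a generator $U^\pi(W^{\otimes(n-r)}\otimes L)U^{\pi\dagger}$, the operator $R_{\mathbf j}$ yields on each of the $n-r$ reference sites either $PW=V_\Phi$ or $|e_0\rangle\langle f_j|W=0$, since $W$ maps into $B\otimes\operatorname{range}U$. On the defect sites it yields some operator. The result therefore lies in $\mathcal K(A^n\to B^nE^n,V_\Phi^{\otimes(n-r)})$ or vanishes. Hence the Choi state of $\mathcal N$ is a mixture of projectors onto vectors in the vector defect subspace for $|V_\Phi\rangle\!\rangle$, so its support lies there.

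Finally, apply the "if" direction of \zcref{Prop:alternative-characterization}, in its Choi-state form, together with \zcref{lem:Orthonormal basis of the defect subspace} for $V_\Phi$. This produces the matrix $\beta$ and the basis $K_t\in\mathcal K(A^n\to B^nE^n,V_\Phi^{\otimes(n-r)})$, completing the argument.
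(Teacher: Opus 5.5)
Your argument as finally carried out in Step~3 is correct and is essentially the paper's proof. The paper also writes the given dilation as $(\id_B\otimes U)V_\Phi$ for an isometry $U:E\to E'$, replaces the extension by $(\idc_{B^n}\otimes\mathcal R^{\otimes n})\circ\mathcal N'$ with the same retraction $\mathcal R(Y)=U^\dagger YU+\Tr[(\id_{E'}-UU^\dagger)Y]\,\tau_E$ (you take $\tau_E=|e_0\rangle\!\langle e_0|$), and uses that the Kraus operator $U^\dagger$ turns each reference factor into $V_\Phi$ while all other Kraus operators annihilate it.

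One small correction: your last step reruns the argument of the forward (``only if'') direction of \zcref{Prop:alternative-characterization} on the explicit witness $J(\mathcal N_{A^n\to B^nE^n})$, not the ``if'' direction. The abandoned detours in Steps~1--2 can simply be dropped.
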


The operator decomposition in \zcref{Prop:alternative-characterization} contains cross terms \(K_t X K_{t'}^\dagger\) with \(t\neq t'\), so it is \emph{not} in the standard Kraus form. The following remark makes explicit the connection with the standard Kraus decomposition.

{\begin{remark}[(Kraus representation)]
\label{Rem:Kraus-defect}
Let $ \mathcal N_{A^n\to B^nE^n}$ be an extension channel
provided by \zcref{Prop:alternative-characterization}. Its normalized
Choi state has an eigenvalue decomposition
\begin{equation}
    J( \mathcal N_{A^n\to B^nE^n})
    =
    \sum_{\mu:\lambda_\mu>0}
    \lambda_\mu
    |\psi_\mu\rangle\langle\psi_\mu|.
\end{equation}
By the support condition in the proof of
\zcref{Prop:alternative-characterization}, every eigenvector
$|\psi_\mu\rangle$ with $\lambda_\mu>0$ belongs to
\begin{equation}
    \operatorname{span}
    \left\{
        |K\rangle\!\rangle
        :
        K\in
        \mathcal K\!\left(
            A^n\to B^nE^n,
            V_\Phi^{\otimes(n-r)}
        \right)
    \right\}.
\end{equation}
The vectorization isomorphism therefore gives a unique operator
\begin{equation}
    L_\mu
    \in
    \operatorname{span}
    \mathcal K\!\left(
        A^n\to B^nE^n,
        V_\Phi^{\otimes(n-r)}
    \right)
\end{equation}
such that $|\psi_\mu\rangle=|L_\mu\rangle\!\rangle$. Defining
$A_\mu\coloneqq\sqrt{\lambda_\mu}\,L_\mu$, we obtain
\begin{equation}
    J( \mathcal N_{A^n\to B^nE^n})
    =
    \sum_{\mu:\lambda_\mu>0}
    |A_\mu\rangle\!\rangle
    \langle\!\langle A_\mu|.
\end{equation}
The inverse Choi--Jamio{\l}kowski correspondence then gives the Kraus
representation
\begin{equation}
     \mathcal N_{A^n\to B^nE^n}(X)
    =
    \sum_{\mu:\lambda_\mu>0}
    A_\mu X A_\mu^\dagger,
    \qquad
    \sum_{\mu:\lambda_\mu>0}
    A_\mu^\dagger A_\mu
    =
    \id_{A^n},
\end{equation}
where every Kraus operator $A_\mu$ belongs to the operator defect
subspace
\begin{equation}
    \operatorname{span}
    \mathcal K\!\left(
        A^n\to B^nE^n,
        V_\Phi^{\otimes(n-r)}
    \right).
\end{equation}
It need not, however, belong to the generating set $\mathcal K$
itself: it is in general a linear combination of operators with different defect locations and different defect operators. Consequently, Choi-almost-iid-ness does not imply that every Kraus branch has a definite set of at most $r$ defective positions. This is the direct analogy of \emph{superposition} of defects in MSR almost-iid states.

Equivalently, one may obtain these Kraus operators by diagonalizing
the coefficient matrix $\beta$ in
\zcref{Prop:alternative-characterization}. Relative to the
defect-structured basis $\{K_t\}_{t\in\mathcal T}$, the off-diagonal
entries of $\beta$ describe coherences between different basis
elements. This interpretation is basis dependent: diagonalizing
$\beta$ removes the off-diagonal entries but generally replaces the
individual operators $K_t\in\mathcal K$ by linear combinations
belonging only to $\operatorname{span}\mathcal K$.
\end{remark}}

We conclude this part with a simple consistency observation: a Choi-almost-iid channel maps
tensor-power input states to MSR-almost-iid output states.

\begin{proposition}
\label{Prop:iid-input-almost-iid-output}
Let $\mathcal N_{A^n\to B^n}$ be an $\binom nr$-Choi-almost-iid
channel along $\Phi_{A\to B}$. Then, for every
$\rho_A\in\mathcal D(A)$,
\begin{equation}
    \mathcal N_{A^n\to B^n}(\rho_A^{\otimes n})
    \in
    \mathcal S^n\!\left(
        \mathcal H_B,
        \Phi(\rho_A)^{\otimes(n-r)}
    \right).
\end{equation}
In other words, an iid input is mapped to an
$\binom nr$-almost-iid output along the reference output.
\end{proposition}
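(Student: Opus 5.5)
We apply the extension channel directly to a purification of the iid input. By \zcref{Prop:alternative-characterization}, fix a Stinespring isometry $V_\Phi\in\mathcal L(A,BE)$ and a permutation-covariant extension $\mathcal N_{A^n\to B^nE^n}$ of the form $\mathcal N_{A^n\to B^nE^n}(X)=\sum_{t,t'}\beta_{tt'}K_tXK_{t'}^\dagger$. Here each $K_t\in\mathcal K(A^n\to B^nE^n,V_\Phi^{\otimes(n-r)})$. Let $|\psi\rangle_{AR}$ be a purification of $\rho_A$. The candidate extension of $\mathcal N_{A^n\to B^n}(\rho_A^{\otimes n})$ is
\begin{equation}
    \omega_{B^n(ER)^n}\coloneqq(\mathcal N_{A^n\to B^nE^n}\otimes\idc_{R^n})\bigl(\psi_{AR}^{\otimes n}\bigr),
\end{equation}
with the composite environment $E_i R_i$ attached to each $B_i$. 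The reference purification of $\Phi(\rho_A)$ is $|\theta\rangle_{BER}\coloneqq(V_\Phi\otimes\id_R)|\psi\rangle_{AR}$. This is indeed a purification: $\Tr_{ER}\theta=\Tr_E V_\Phi\rho_AV_\Phi^\dagger=\Phi(\rho_A)$. Taking $\Tr_{(ER)^n}$ of $\omega$ gives $\Tr_{E^n}\mathcal N_{A^n\to B^nE^n}(\rho_A^{\otimes n})=\mathcal N_{A^n\to B^n}(\rho_A^{\otimes n})$, so $\omega$ is an extension.

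\textbf{Permutation invariance.} Since $\psi^{\otimes n}$ is invariant under $U^\pi_{A^n}\otimes U^\pi_{R^n}$, we have
\begin{equation}
    (U^\pi_{B^nE^n}\otimes U^\pi_{R^n})\,\omega\,(\cdot)^\dagger=(\mathcal N_{A^n\to B^nE^n}\otimes\idc)\bigl((U^\pi_{A^n}\otimes U^\pi_{R^n})\psi^{\otimes n}(\cdot)^\dagger\bigr)=\omega,
\end{equation}
where the first equality uses covariance~(i) on the $A$ part. The joint permutation on $(BER)^n$ is exactly $U^\pi_{B^nE^n}\otimes U^\pi_{R^n}$ after regrouping.

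\textbf{Support condition.} We have $\omega=\sum_{t,t'}\beta_{tt'}|v_t\rangle\langle v_{t'}|$ with $|v_t\rangle\coloneqq(K_t\otimes\id_{R^n})|\psi\rangle^{\otimes n}$. Hence $\supp\omega\subseteq\operatorname{span}\{|v_t\rangle\}$, and it suffices to show $|v_t\rangle\in\mathcal V(\mathcal H_{BER}^{\otimes n},|\theta\rangle^{\otimes(n-r)})$. Write $K_t=U^\pi_{B^nE^n}(V_\Phi^{\otimes(n-r)}\otimes L^{(r)})(U^\pi_{A^n})^\dagger$. Using $(U^\pi_{A^n})^\dagger\otimes(U^\pi_{R^n})^\dagger$-invariance of $|\psi\rangle^{\otimes n}$, we insert $U^\pi_{R^n}(U^\pi_{R^n})^\dagger$ to obtain
\begin{equation}
    |v_t\rangle=U^\pi_{(BER)^n}\Bigl(\bigl((V_\Phi\otimes\id_R)|\psi\rangle\bigr)^{\otimes(n-r)}\otimes(L^{(r)}\otimes\id_{R^r})|\psi\rangle^{\otimes r}\Bigr).
\end{equation}
This is of the form $U^\pi(|\theta\rangle^{\otimes(n-r)}\otimes|\omega^{(r)}\rangle)$, as required.

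The only delicate point is the bookkeeping of tensor-factor groupings, which works because permutations act identically on the $R$ factors. Note that $|\theta\rangle$ lives on $BER$ rather than $B$ with some fixed environment. Definition~\ref{def:MSR almost-iid state} only requires some purification with some environment, here $ER$, so no minimality issue arises. Normalization of $\omega$ follows from trace preservation.
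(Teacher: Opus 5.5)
Your proposal is correct and follows essentially the same argument as the paper. Both apply the permutation-covariant extension from the operator characterization to $\psi_{AR}^{\otimes n}$ and take $|\theta\rangle_{BER}=(V_\Phi\otimes\id_R)|\psi\rangle_{AR}$ as the reference purification. Both then use the simultaneous-permutation invariance of $|\psi\rangle^{\otimes n}$ to move $(U^{\pi_t}_{A^n})^\dagger$ onto the $R$ factors, so that each $(K_t\otimes\id_{R^n})|\psi\rangle^{\otimes n}$ lies in $\mathcal V(\mathcal H_{BER}^{\otimes n},|\theta\rangle^{\otimes(n-r)})$.
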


\begin{proof}
By~\zcref{Prop:alternative-characterization}, there exist an isometric extension
$V_\Phi\in\mathcal{L}(A,BE)$ of $\Phi$ and a permutation-covariant extension channel
$\mathcal N_{A^n\to B^nE^n}$ such that
\begin{equation}
    \mathcal N_{A^n\to B^nE^n}(\cdot)
    =
    \sum_{t,t'\in\mathcal T}
    \beta_{tt'}K_t(\cdot)K_{t'}^\dagger ,
    \qquad
    K_t\in
    \mathcal K\!\left(
        A^n\to B^nE^n,
        V_\Phi^{\otimes(n-r)}
    \right).
\end{equation}
Let $|\psi\rangle_{AR}$ be a purification of $\rho_A$. Define
\begin{equation}
    |\theta\rangle_{BER}
    \coloneqq 
    (V_\Phi\otimes\id_R)|\psi\rangle_{AR}.
\end{equation}
Since $V_\Phi$ is an isometric extension of $\Phi$,
$|\theta\rangle_{BER}$ is a purification of $\Phi(\rho_A)$.

Consider the state
\begin{equation}
    \omega_{B^nE^nR^n}
    \coloneqq 
    \bigl(
        \mathcal N_{A^n\to B^nE^n}\otimes\idc_{R^n}
    \bigr)
    \left(
        |\psi\rangle\langle\psi|_{AR}^{\otimes n}
    \right).
\end{equation}
It is an extension of
$\mathcal N_{A^n\to B^n}(\rho_A^{\otimes n})$. Moreover,
$|\psi\rangle^{\otimes n}$ is invariant under simultaneous
permutations of the pairs $A_iR_i$, and permutation covariance of the
extension channel therefore implies that $\omega_{B^nE^nR^n}$ is
permutation invariant under the triples $B_iE_iR_i$.

It remains to verify the support condition in~\zcref{def:MSR-state}.
For each $t\in\mathcal T$, there exist a permutation
$\pi_t\in S_n$ and an operator
\begin{equation}
    L_t^{(r)}
    \in
    \mathcal L(A^r,(BE)^r)
\end{equation}
such that
\begin{equation}
    K_t
    =
    U_{B^nE^n}^{\pi_t}
    \bigl(
        V_\Phi^{\otimes(n-r)}
        \otimes
        L_t^{(r)}
    \bigr)
    \bigl(
        U_{A^n}^{\pi_t}
    \bigr)^\dagger .
\end{equation}
Define
\begin{equation}
    |\Psi_t\rangle_{B^nE^nR^n}
    \coloneqq
    \bigl(
        K_t\otimes\id_{R^n}
    \bigr)
    |\psi\rangle_{AR}^{\otimes n}.
\end{equation}
These vectors need not be normalized, and
\begin{equation}
    \omega_{B^nE^nR^n}
    =
    \sum_{t,t'\in\mathcal T}
    \beta_{tt'}
    |\Psi_t\rangle
    \langle\Psi_{t'}|.
\end{equation}
Since $|\psi\rangle_{AR}^{\otimes n}$ is invariant under
simultaneous permutations of the $AR$ pairs,
\begin{equation}
    \left(
        \bigl(U_{A^n}^{\pi_t}\bigr)^\dagger
        \otimes
        \id_{R^n}
    \right)
    |\psi\rangle_{AR}^{\otimes n}
    =
    \left(
        \id_{A^n}
        \otimes
        U_{R^n}^{\pi_t}
    \right)
    |\psi\rangle_{AR}^{\otimes n}.
\end{equation}
Therefore,
\begin{equation}
\begin{aligned}
    |\Psi_t\rangle
    &=
    \bigl(
        K_t\otimes\id_{R^n}
    \bigr)
    |\psi\rangle_{AR}^{\otimes n}
    \\
    &=
    \left(
        U_{B^nE^n}^{\pi_t}
        \otimes
        \id_{R^n}
    \right)
    \bigl(
        V_\Phi^{\otimes(n-r)}
        \otimes
        L_t^{(r)}
        \otimes
        \id_{R^n}
    \bigr)
    \left(
        \bigl(U_{A^n}^{\pi_t}\bigr)^\dagger
        \otimes
        \id_{R^n}
    \right)
    |\psi\rangle_{AR}^{\otimes n}
    \\
    &=
    \left(
        U_{B^nE^n}^{\pi_t}
        \otimes
        \id_{R^n}
    \right)
    \bigl(
        V_\Phi^{\otimes(n-r)}
        \otimes
        L_t^{(r)}
        \otimes
        \id_{R^n}
    \bigr)
    \left(
        \id_{A^n}
        \otimes
        U_{R^n}^{\pi_t}
    \right)
    |\psi\rangle_{AR}^{\otimes n}
    \\
    &=
    \left(
        U_{B^nE^n}^{\pi_t}
        \otimes
        U_{R^n}^{\pi_t}
    \right)
    \bigl(
        V_\Phi^{\otimes(n-r)}
        \otimes
        L_t^{(r)}
        \otimes
        \id_{R^n}
    \bigr)
    |\psi\rangle_{AR}^{\otimes n}
    \\
    &=
    U_{(BER)^n}^{\pi_t}
    \left(
        |\theta\rangle_{BER}^{\otimes(n-r)}
        \otimes
        |\omega_t^{(r)}\rangle
    \right),
\end{aligned}
\end{equation}
where
\begin{equation}
    |\omega_t^{(r)}\rangle
    \coloneqq
    \bigl(
        L_t^{(r)}
        \otimes
        \id_{R^r}
    \bigr)
    |\psi\rangle_{AR}^{\otimes r}.
\end{equation}
It follows that
\begin{equation}
    |\Psi_t\rangle
    \in
    \mathcal V\!\left(
        \mathcal H_{BER}^{\otimes n},
        |\theta\rangle_{BER}^{\otimes(n-r)}
    \right)
    \qquad
    \text{for every }t\in\mathcal T,
\end{equation}
which implies
\begin{equation}
    \supp\!\left(
        \omega_{B^nE^nR^n}
    \right)
    \subseteq
    \operatorname{span}
    \mathcal V\!\left(
        \mathcal H_{BER}^{\otimes n},
        |\theta\rangle_{BER}^{\otimes(n-r)}
    \right).
\end{equation}
Thus, $\omega$ satisfies both conditions
in~\zcref{def:MSR-state}, proving the claim.
\end{proof}

\subsection{Examples of Choi-almost-iid channels}

We provide simple examples of Choi-almost-iid channels, which can be viewed as the direct channel analogues of the
almost-iid-state examples in~\cite[Example 2.2]{mazzola2026almost}: tensor-power
states, convex mixtures with a small number of defects, and coherent
superpositions of defect positions.

\begin{example*}[(Choi-almost-iid channels)]
\leavevmode\par
\begin{enumerate}
\item[(a)] \emph{Tensor-power channels.}
$\mathcal N_{A^n\to B^n}
=
{\Phi}_{A\to B}^{\otimes n}$
is $\binom{n}{0}$-Choi-almost-iid\ along
${\Phi}_{A\to B}$.

\item[(b)] \emph{Convex mixtures with a fixed number of defects.}
Let $\mathcal D^{(r)}_{A^r\to B^r}$ be an arbitrary channel representing the defects and define
\begin{equation}
    \mathcal N_{A^n\to B^n}
    \coloneqq 
    \frac{1}{n!}\sum_{\pi\in S_n}
    \mathcal U^{\pi}_{B^n}
    \circ
    \bigl(
        \Phi_{A\to B}^{\otimes(n-r)}\otimes\mathcal D^{(r)}_{A^r\to B^r}
    \bigr)
    \circ
    \mathcal U^{\pi,\dagger}_{A^n},
    \label{Eq:convex-defect-channel}
\end{equation}
where $\mathcal U^\pi_{A^n}(\cdot)\coloneqq U^\pi_{A^n}(\cdot)U^{\pi,\dagger}_{A^n}$. Then $\mathcal N_{A^n\to B^n}$ is
$\binom nr$-Choi-almost-iid along $\Phi_{A\to B}$.

To see this, choose $\mathcal H_E$ large enough so that
$\Phi$ admits an isometric extension $V_\Phi\in\mathcal{L}(A,BE)$ and
$\mathcal D^{(r)}$ admits an isometric extension
$W^{(r)}:A^r\to B^rE^r$. Define
\begin{equation}
    K_\pi
    \coloneqq 
    U^{\pi}_{ B^nE^n}
    \bigl(
        V_\Phi^{\otimes(n-r)}\otimes W^{(r)}
    \bigr)
    (U^{\pi}_{A^n})^\dagger ,
    \qquad \pi\in S_n,
\end{equation}
and the extension channel
\begin{equation}
    \mathcal N_{A^n\to B^nE^n}(\rho_{A^n})
    \coloneqq 
    \frac{1}{n!}\sum_{\pi\in S_n}
    K_\pi\rho_{A^n} K_\pi^\dagger,
\end{equation}
which is manifestly permutation covariant. Each $K_\pi$ is an isometry, and tracing out $E^n$ gives~\zcref{Eq:convex-defect-channel}. By construction, $K_\pi\in
    \mathcal K\big(
        A^n\to B^nE^n,
        V_\Phi^{\otimes(n-r)}
    \big)$, and hence by~\zcref{lem:Defect subspace correspondence},
\begin{equation}
    |K_\pi\rangle\!\rangle
    \in
    \mathcal V\!\left(
        \mathcal H_{BEA'}^{\otimes n},
        |V_\Phi\rangle\!\rangle_{BEA'}^{\otimes(n-r)}
    \right)
    \qquad
    \forall\,\pi\in S_n.
\end{equation}
Thus $J(\mathcal N_{A^n\to B^nE^n})$ is a permutation-invariant
extension whose support is contained in the vector defect subspace.

\item[(c)] \emph{Coherent superposition of defects.}
Let $A\simeq\mathbb C^2$, $E\simeq\mathbb C^2$, and
define isometries $V,W:A\to A\otimes E$ by $V|\psi\rangle=|\psi\rangle|0\rangle$ and $W|\psi\rangle=|\psi\rangle|1\rangle$ for any $|\psi\rangle\in A$. Define
\begin{equation}
    K
    \coloneqq 
    \frac{1}{\sqrt2}
    \bigl(
        W\otimes V-V\otimes W
    \bigr)
    :A^2\to (A\otimes E)^2 .
\end{equation}
Then $K(\cdot)K^\dagger$ is $\binom21$-Choi-almost-iid along $V(\cdot)V^\dagger$.

To see this, notice $V^\dagger W=W^\dagger V=0$ so $K^\dagger K=\mathbbm 1_{A^2}$ is an isometry. Moreover $K(\cdot)K^\dagger$ is permutation covariant, and obviously $K\in
    \operatorname{span}\mathcal K(A^2\to (A\otimes E)^2,V)$.
\end{enumerate}
\end{example*}

\subsection{Wasserstein distance between outputs of Choi-almost-iid and iid channels}
\label{sec:Wasserstein}

For MSR almost-iid state sequences $(\rho^{(n)})_{n\geq1}$ along $\sigma$ with defect sizes \(r_n=o(n)\), the quantum Wasserstein distance of order $1$ \cite{de2021quantum} from the corresponding tensor-power states is \(o(n)\) \cite{mazzola2026robust,girardi2026new}:
\begin{equation}
    \frac{1}{n}
    \left\|
        \rho^{(n)}-\sigma^{\otimes n}
    \right\|_{W_1}\longrightarrow 0.
\end{equation}
It is natural to ask whether an analogous statement holds for
channel sequences: if
\(\bm{\mathcal N}=(\mathcal N^{(n)}_{A^n\to B^n})_{n\geq1}\)
is Choi-almost-iid along \(\Phi_{A\to B}\), with \(r_n=o(n)\), does
\begin{equation}
    \frac{1}{n}
    \bigl\|\mathcal N^{(n)}-\Phi^{\otimes n}\bigr\|_{\clubsuit}
    \longrightarrow 0?
\end{equation}
By~\zcref{eq:club}, this asks whether the output \(W_1\) distance
vanishes after normalization, uniformly over all input states
\(\rho_{A^n}\), when both channels act on the same input.

For arbitrary, possibly highly entangled inputs, we do not obtain
such a vanishing bound. However, we do obtain one when the inputs have limited
{\em{entanglement depth}}. An \(n\)-partite state has entanglement depth at
most \(\ell\) if it is a convex combination of pure states, each of
which factorizes across a partition into groups of at most \(\ell\)
subsystems~\cite{sorensen2001entanglement}. The following proposition
bounds the normalized output \(W_1\) distance uniformly over these
inputs by \(4\sqrt{\ell r/n}\). Applied at each blocklength with
\(\ell=\ell_n\) and \(r=r_n\), this bound vanishes whenever
\(\ell_n r_n=o(n)\).

\begin{proposition}[(Output Wasserstein distance for bounded entanglement depth)]
\label{Prop:output-Wasserstein-depth}
Let \(n\geq1\), \(0\leq r\leq n\), and \(1\leq\ell\leq n\)
be integers. Let \(\mathrm{SEP}_\ell(A_1:\cdots:A_n)\) denote the
set of states of entanglement depth at most \(\ell\): the convex
hull of pure states \(\psi_{A^n}\) for which there exist a partition
\([n]=I_1\mathbin{\sqcup}\cdots\mathbin{\sqcup}I_M\), with
\(|I_j|\leq\ell\) for every \(j\), and pure states
\(|\psi_j\rangle_{A_{I_j}}\) such that
\begin{equation}
    |\psi\rangle_{A^n}
    =
    \bigotimes_{j=1}^{M}|\psi_j\rangle_{A_{I_j}}.
\end{equation}

If \(\mathcal N_{A^n\to B^n}\) is an
\(\binom nr\)-Choi-almost-iid channel along \(\Phi_{A\to B}\), then
\begin{equation}
    \frac{1}{n}
    \sup_{\rho_{A^n}\in\mathrm{SEP}_\ell(A_1:\cdots:A_n)}
    \left\|
        \mathcal N_{A^n\to B^n}(\rho_{A^n})
        -
        \Phi_{A\to B}^{\otimes n}(\rho_{A^n})
    \right\|_{W_1}
    \leq
    4\sqrt{\frac{\ell r}{n}},
    \label{Eq:output-Wasserstein-depth}
\end{equation}
where \(\|\cdot\|_{W_1}\) is defined with respect to the output systems
\(B_1,\ldots,B_n\).
\end{proposition}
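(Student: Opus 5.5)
The plan is to reduce to pure inputs by convexity, pass to the permutation-covariant extension channel of \zcref{Prop:alternative-characterization}, and then interpolate between the true output and the ideal output $\Phi^{\otimes n}(\psi)$ one factorization block at a time. Since $\|\cdot\|_{W_1}$ is a norm and both channels are linear, the supremum over $\mathrm{SEP}_\ell(A_1:\cdots:A_n)$ is attained on its extreme points. It therefore suffices to treat a pure input $|\psi\rangle=\bigotimes_{j=1}^M|\psi_j\rangle_{A_{I_j}}$ with $|I_j|\leq\ell$.

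\textbf{Setup.} Take the extension channel $\mathcal N_{A^n\to B^nE^n}(\cdot)=\sum_{t,t'}\beta_{tt'}K_t(\cdot)K_{t'}^\dagger$ with $K_t\in\mathcal K(A^n\to B^nE^n,V_\Phi^{\otimes(n-r)})$, and set $\omega\coloneqq\mathcal N_{A^n\to B^nE^n}(\psi)$. Its $B^n$-marginal is $\mathcal N(\psi)$. For each block $j$ define the unit vector $|\theta_j\rangle\coloneqq V_\Phi^{\otimes I_j}|\psi_j\rangle$ on $(BE)_{I_j}$ and the projector $\Pi_j\coloneqq|\theta_j\rangle\langle\theta_j|$. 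The ideal state $\bigotimes_j\theta_j$ has $B^n$-marginal $\Phi^{\otimes n}(\psi)$.

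\textbf{Counting defective blocks.} Each $K_t$ acts as $V_\Phi$ on all sites outside a set $S_t$ with $|S_t|\leq r$. Every block disjoint from $S_t$ is therefore left in the state $\theta_j$, so $(\id-\Pi_j)K_t|\psi\rangle=0$ for all such $j$. Hence each $K_t|\psi\rangle$ lies in the eigenspace of $\mathsf N\coloneqq\sum_j(\id-\Pi_j)$ with eigenvalue at most $r$, and so does $\omega$. This holds despite the coherent superpositions across $t$, because the $\Pi_j$ commute. Setting $p_j\coloneqq\Tr[(\id-\Pi_j)\omega]$, we obtain $\sum_jp_j\leq r$.

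\textbf{Hybrid argument.} For each block define the channel $\mathcal R_j(X)=\Pi_jX\Pi_j+\Tr_{(BE)_{I_j}}[(\id-\Pi_j)X]\otimes\theta_j$. It acts only on block $j$, is trace preserving, and sends every state to one whose block-$j$ part is exactly $\theta_j$. Put $\omega_0=\omega$ and $\omega_k=\mathcal R_k(\omega_{k-1})$. Then $\omega_M=\bigotimes_j\theta_j$. Since $\mathcal R_1,\dots,\mathcal R_{k-1}$ act on other blocks, the block-$k$ marginal of $\omega_{k-1}$ equals that of $\omega$. Thus $\Tr[(\id-\Pi_k)\omega_{k-1}]=p_k$, and the gentle-measurement lemma gives $\|\omega_{k-1}-\omega_k\|_1\leq 2\sqrt{p_k}+2p_k\leq4\sqrt{p_k}$.

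The difference $\Delta_k\coloneqq\omega_{k-1}-\omega_k$ has vanishing partial trace over block $k$, and this persists after tracing out $E^n$. By the locality property of the quantum $W_1$ norm \cite{de2021quantum}, an operator that is traceless upon tracing out a set $S$ of sites satisfies $\|\Delta\|_{W_1}\leq\tfrac{|S|}{2}\|\Delta\|_1$. Together with monotonicity of the trace norm under $\Tr_{E^n}$, this gives $\|\Tr_{E^n}\Delta_k\|_{W_1}\leq 2|I_k|\sqrt{p_k}$.

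By the triangle inequality and Cauchy--Schwarz,
\begin{equation*}
    \bigl\|\mathcal N(\psi)-\Phi^{\otimes n}(\psi)\bigr\|_{W_1}
    \leq 2\sum_k|I_k|\sqrt{p_k}
    \leq 2\Bigl(\sum_k|I_k|^2\Bigr)^{1/2}\Bigl(\sum_kp_k\Bigr)^{1/2}
    \leq 2\sqrt{\ell n\, r},
\end{equation*}
using $\sum_k|I_k|^2\leq\ell\sum_k|I_k|=\ell n$. Dividing by $n$ yields the bound with constant $2$, which leaves slack relative to the claimed $4$ to absorb the exact constants in the locality and gentle-measurement inequalities.

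\textbf{Expected obstacles.} The main obstacle is the step that handles coherent superpositions of defect locations. Naive approaches, such as projecting onto all $\Pi_j$ at once or conditioning on the good event, fail because $\Tr[\prod_j\Pi_j\,\omega]$ may be small, and conditioning is nonlocal. The remedy is the eigenvalue bound $\mathsf N\leq r$ on the support of $\omega$ combined with the block-local replacement channels $\mathcal R_j$. The other point to check carefully is the precise constant in the $W_1$ locality inequality, namely whether it is $|S|\cdot\tfrac12\|\cdot\|_1$ or $|S|\cdot\|\cdot\|_1$. The target constant $4$ accommodates either version.
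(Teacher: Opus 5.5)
Your proposal is correct and is essentially the paper's proof. It shares the reduction to block-product pure inputs, the bound $\sum_j\Tr[(\id-\Pi_j)\omega]\leq r$ from the defect support (the paper obtains this via a product basis adapted to the $|\theta_j\rangle$ rather than your commuting number operator), the block-by-block telescoping hybrid (your $\mathcal R_j$ is simply $X\mapsto\theta_j\otimes\Tr_{I_j}X$), the gentle-measurement estimate $4\sqrt{p_j}$, and Cauchy--Schwarz with $\sum_j|I_j|^2\leq\ell n$; tracing out $E^n$ before invoking $W_1$-locality is a harmless simplification of the paper's final step. One caveat: the De Palma et al.\ locality estimate gives $\|\Delta\|_{W_1}\leq(1-d^{-2})|S|\,\|\Delta\|_1$, and the factor $|S|/2$ is immediate only for $|S|=1$, so your claim of reaching constant $2$ is unsupported. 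As you anticipated, though, the weaker estimate yields exactly the claimed constant $4$.
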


\begin{proof}
    See~\zcref{app:proof-of-Wasserstein-Prop}.
\end{proof}

%%%%%%%%%%%%%%%%%%%%%%%%%%%%%%%%%%%%%%%%%%%%%%%%%%%%%%%%%%%%%%%%%%%%%%%%%%%%%%%%
{%%%%%%%%%%%%%%%%%%%%%%%%%%%%%%%%%%%%%%%%%%%%%%%%%%%%%%%%%%%%%%%%%%%%%%%%%%%%%%%%
\section{Robustness of classical communication}
\label{Sec:robust-classical-communication}

We first study classical communication over Choi-almost-iid channels.
By \emph{robustness}, we mean that {the classical capacity of the memoryless reference channel is retained}: every rate strictly below
$C(\Phi)$ can be achieved by a single sequence of codes for every
Choi-almost-iid channel sequence along $\Phi$. The codes may depend on
the reference channel $\Phi$ and on the target rate, but not on the
particular channel sequence or on any choice of witness in the
almost-iid definition. We begin by fixing the coding conventions for a
general, not necessarily memoryless, sequence of channels. Throughout
this section, logarithms are taken to be base two.

\begin{definition}[(Classical communication codes and capacity)]
\label{Def:classical-communication}
Let
\(
    \mathbf N
    \coloneqq
    \{\mathcal N^{(n)}_{A^n\to B^n}\}_{n\in\mathbb N}
\)
be a sequence of quantum channels. For each $n\in\mathbb N$, an
$(n,M_n)$ classical communication code
\(
    \mathcal C_n
\)
consists of a message set $\mathcal M_n$ with
$|\mathcal M_n|=M_n$, encoding states
\(
    \{\rho_m^{(n)}\}_{m\in\mathcal M_n}
    \subseteq\mathcal D(A^n)
\),
and a decoding POVM
\(
    \{D_m^{(n)}\}_{m\in\mathcal M_n}
\)
on $B^n$. Thus,
\(
    D_m^{(n)}\geq0
\)
for every $m\in\mathcal M_n$ and
\(
    \sum_{m\in\mathcal M_n}D_m^{(n)}=\id_{B^n}.
\)

The maximal error probability of $\mathcal C_n$ when used over
$\mathcal N^{(n)}$ is
\begin{equation}
    \varepsilon_{\max}
    \bigl(\mathcal C_n,\mathcal N^{(n)}\bigr)
    \coloneqq
    \max_{m\in\mathcal M_n}
    \left\{
        1-
        \Tr\!\left[
            D_m^{(n)}
            \mathcal N^{(n)}\!\left(\rho_m^{(n)}\right)
        \right]
    \right\}.
\end{equation}

A rate $R\geq0$ is \emph{achievable} for $\mathbf N$ if there exists a
sequence of codes $\{\mathcal C_n\}_{n\in\mathbb N}$ such that
\begin{equation}
    \lim_{n\to\infty}
    \varepsilon_{\max}
    \bigl(\mathcal C_n,\mathcal N^{(n)}\bigr)
    =0,
    \qquad
    \liminf_{n\to\infty}\frac{1}{n}\log M_n
    \geq R.
\end{equation}
The classical capacity of $\mathbf N$ is
\begin{equation}
    C(\mathbf N)
    \coloneqq
    \sup\left\{
        R:
        R\text{ is achievable for }\mathbf N
    \right\}.
\end{equation}
\end{definition}

For the memoryless sequence
\(
    \{\Phi_{A\to B}^{\otimes n}\}_{n\in\mathbb N},
\)
this definition reduces to the usual unassisted classical capacity
$C(\Phi)$. By the Holevo--Schumacher--Westmoreland theorem
and regularization over blocks~\cite{holevo1998capacity,
schumacher1997sending},
\begin{equation}
    C(\Phi)
    =
    \lim_{k\to\infty}
    \frac{1}{k}\chi\!\left(\Phi^{\otimes k}\right)
    =
    \sup_{k\geq1}
    \frac{1}{k}\chi\!\left(\Phi^{\otimes k}\right),
    \label{Eq:HSW-capacity}
\end{equation}
where, for a channel $\Psi_{A\to B}$,
\begin{equation}
    \chi(\Psi)
    \coloneqq
    \max_{\{p_x,\rho_x\}}
    I(X:B)_\sigma,
    \qquad
    \sigma_{XB}
    \coloneqq
    \sum_x
    p_x|x\rangle\langle x|_X
    \otimes\Psi(\rho_x).
\end{equation}
The maximum is over finite ensembles
$\{p_x,\rho_x\}$ of states on $A$. The maximal-error convention used
above gives the same capacity formula as the usual average-error
formulation, by expurgation.

Let $\mathfrak S_\Phi$ denote the class of all channel sequences
\(
    \mathbf N
    =
    \{\mathcal N^{(n)}_{A^n\to B^n}\}_{n\in\mathbb N}
\)
for which there exists a sequence of integers
$r_n\in\{0,\ldots,n\}$ satisfying
\(
    \lim_{n\to\infty}r_n/n=0
\)
such that, for every $n$, the channel $\mathcal N^{(n)}$ is an
$\binom{n}{r_n}$-Choi-almost-iid channel along $\Phi$.

\begin{definition}[(Universal classical capacity of $\mathfrak S_\Phi$)]
\label{Def:universal-classical-capacity}
A rate $R\geq0$ is \emph{universally achievable} for
$\mathfrak S_\Phi$ if there exists a sequence of classical
communication codes
\(
    \{\mathcal C_n\}_{n\in\mathbb N},
\)
chosen independently of the particular sequence
$\mathbf N\in\mathfrak S_\Phi$, such that
\begin{equation}
    \liminf_{n\to\infty}\frac{1}{n}\log M_n
    \geq R
\end{equation}
and, for every $\mathbf N\in\mathfrak S_\Phi$,
\begin{equation}
    \lim_{n\to\infty}
    \varepsilon_{\max}
    \bigl(\mathcal C_n,\mathcal N^{(n)}\bigr)
    =0.
\end{equation}
Such a sequence of codes is called a \emph{universally reliable
classical communication scheme} for $\mathfrak S_\Phi$.

The \emph{universal classical capacity} of $\mathfrak S_\Phi$ is
\begin{equation}
    C_{\mathrm{univ}}(\mathfrak S_\Phi)
    \coloneqq
    \sup\left\{
        R:
        R\text{ is universally achievable for }\mathfrak S_\Phi
    \right\}.
    \label{Eq:universal-classical-capacity-def}
\end{equation}
\end{definition}

The order of quantifiers in the preceding definition is important:
the sequence of codes is common to all
$\mathbf N\in\mathfrak S_\Phi$, whereas the rate at which the error
converges to zero may depend on the particular channel sequence.
Since the memoryless sequence
$\{\Phi^{\otimes n}\}_{n\in\mathbb N}$ belongs to
$\mathfrak S_\Phi$ by taking $r_n=0$, one immediately has
\begin{equation}
    C_{\mathrm{univ}}(\mathfrak S_\Phi)
    \leq C(\Phi).
\end{equation}

{Every universally achievable rate for $\mathfrak S_\Phi$ is achievable
for each of its individual members. Consequently,
\begin{equation}
    C_{\mathrm{univ}}(\mathfrak S_\Phi)
    \leq
    \inf_{\mathbf N\in\mathfrak S_\Phi}C(\mathbf N).
\end{equation}
For a general class of channel sequences, this inequality can be
strict, since the codes achieving the individual capacities may depend
on the particular channel sequence. The robustness theorem below will show that this inequality is in fact
an equality for $\mathfrak S_\Phi$, and that both sides are equal to
$C(\Phi)$.

The proof is based on concatenating a finite-blocklength transmission
code for the reference channel $\Phi$ with an outer classical
error-correcting code. The outer code absorbs both ordinary decoding
errors of the inner transmission code and errors caused by the
deviation from the reference tensor-power channel.

More precisely, we first choose a sufficiently large but fixed
blocklength $k$ and an inner code for $\Phi^{\otimes k}$ with message
alphabet $[L]$, pure encoding states
$\{|\psi_a\rangle\}_{a\in[L]}$ on $A^k$, and a decoding POVM
$\{D_a\}_{a\in[L]}$ on $B^k$, such that
\begin{equation}
    \Tr\!\left[
        D_a\,
        \Phi^{\otimes k}
        \bigl(|\psi_a\rangle\langle\psi_a|\bigr)
    \right]
    \geq 1-\eta
    \qquad\forall\,a\in[L].
\end{equation}
Pure encoding states may be assumed without loss of generality: for
each mixed encoding state, one may choose a pure state in one of its
decompositions whose decoding success probability is at least the
corresponding average.

We then apply this inner code independently to $m$ blocks. Consider
one of the defect-structured operators $K_t$ appearing in the
non-diagonal decomposition supplied by
\zcref{Prop:alternative-characterization}. After grouping the channel
uses into blocks of size $k$, at most $r_n$ blocks contain a position
on which $K_t$ does not act as the reference Stinespring isometry.
On every remaining block, its action is
$V_\Phi^{\otimes k}$. Thus, when $K_t$ is applied to a tensor product
of pure inner-code states, the resulting vector agrees with the ideal
Stinespring output on all but at most $r_n$ blocks, {while its joint output on the remaining blocks may be arbitrary}.

Because the decomposition in
\zcref{Prop:alternative-characterization} need not be diagonal, the
extended output state may contain coherent cross terms between
different defect operators and different defect locations. We control
these terms by applying the pinching inequality in a product basis
adapted to the ideal block outputs. The resulting multiplicative
factor is bounded by the dimension of the corresponding block-defect
subspace. For fixed $k$ and $r_n=o(n)$, the logarithm of this dimension
is sublinear in the number of blocks, so the pinching factor is
subexponential and does not affect the achievable communication rate.

Finally, we choose an outer classical code
$\mathcal C_m^{\rm{out}}\subseteq[L]^m$ with minimum Hamming distance at least
$2t_m+1$. Thus, any two codewords differ in at least $2t_m+1$
positions, and nearest-neighbour decoding recovers the transmitted
codeword whenever at most $t_m$ inner blocks are decoded incorrectly.
The Gilbert--Varshamov bound~\cite{gilbert1952comparison,varshamov1957estimate} guarantees the existence of such codes
with a suitable asymptotic rate.

We first establish the following estimate on the probability that
more than $t$ of the block tests fail.

\begin{proposition}[(Tail bound for product tests on a defect subspace)]
\label{Prop:classical-outer}
Let $\mathcal H_1,\ldots,\mathcal H_m$ be Hilbert spaces of the same
dimension $d$, and let
$|\theta_j\rangle\in\mathcal H_j$ be a unit vector for every
$j\in[m]$. For each $j\in[m]$, let $T_j$ be an effect on
$\mathcal H_j$ satisfying
\begin{equation}
    0\leq T_j\leq \id_j,
    \qquad
    \langle\theta_j|T_j|\theta_j\rangle
    \geq 1-\eta,
\end{equation}
where $\id_j$ denotes the identity operator on $\mathcal H_j$ and
$0\leq\eta\leq1$.

The product of the binary POVMs
$\{T_j,\id_j-T_j\}$ is a POVM
$\{R_S\}_{S\subseteq[m]}$ on
$\bigotimes_{j=1}^m\mathcal H_j$, where
\begin{equation}
    R_S
    \coloneqq
    \bigotimes_{j=1}^m R_{j,S},
    \qquad
    R_{j,S}
    \coloneqq
    \begin{cases}
        \id_j-T_j, & j\in S,\\
        T_j,     & j\notin S.
    \end{cases}
\end{equation}
Thus, $S$ records the blocks on which the corresponding test fails.

Fix integers $0\leq r\leq t\leq m$. For
$F\subseteq[m]$, write
$\mathcal H_F\coloneqq\bigotimes_{j\in F}\mathcal H_j$ and
$|\theta\rangle_{F^c}
\coloneqq\bigotimes_{j\notin F}|\theta_j\rangle$. Define
\begin{equation}
    \mathcal W
    \coloneqq
    \operatorname{span}
    \left\{
        |\theta\rangle_{F^c}\otimes|\omega_F\rangle
        :
        F\subseteq[m],\ |F|\leq r,\ 
        |\omega_F\rangle\in\mathcal H_F
    \right\},
\end{equation}
where the tensor factors are understood to be placed in their original
order. Then every state $\rho$ satisfying
$\supp(\rho)\subseteq\mathcal W$ obeys
\begin{equation}
    \Tr\!\left[
        \rho
        \sum_{\substack{S\subseteq[m]\\|S|>t}}R_S
    \right]
    \leq
    2^m\binom mr d^r\,\eta^{t-r+1}.
    \label{Eq:many-rejections}
\end{equation}
\end{proposition}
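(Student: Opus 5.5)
The plan is to reduce to product basis states by a pinching argument, and then bound the failure probability for each basis state directly. The factor $\binom mr d^r$ in \zcref{Eq:many-rejections} will come from the pinching, the factor $2^m$ from a crude count of failure sets $S$, and $\eta^{t-r+1}$ from the fact that more than $t$ failures force more than $t-r$ failures on blocks where the state equals the ideal vector $|\theta_j\rangle$.

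\emph{Step 1 (adapted basis).} For each $j\in[m]$, fix an orthonormal basis $\{|e_{j,0}\rangle,\ldots,|e_{j,d-1}\rangle\}$ of $\mathcal H_j$ with $|e_{j,0}\rangle=|\theta_j\rangle$. For a string $a=(a_1,\ldots,a_m)$ set $|e_a\rangle\coloneqq\bigotimes_j|e_{j,a_j}\rangle$ and $F(a)\coloneqq\{j:a_j\neq0\}$. Exactly as in the proof of \zcref{lem:Orthonormal basis of the defect subspace}, expanding $|\omega_F\rangle$ in the product basis of $\mathcal H_F$ shows that
\begin{equation*}
    \mathcal W=\operatorname{span}\{|e_a\rangle:|F(a)|\leq r\}.
\end{equation*}
These product vectors form an orthonormal basis of $\mathcal W$. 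Their number is $N\coloneqq\sum_{i=0}^r\binom mi(d-1)^i\leq\binom mr d^r$, by the same binomial estimate as in that lemma.

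\emph{Step 2 (pinching).} Since $\supp(\rho)\subseteq\mathcal W$, the rank-one projectors $\{|e_a\rangle\langle e_a|\}_{|F(a)|\leq r}$ sum to the projector onto a space containing the support of $\rho$. The pinching inequality then gives
\begin{equation*}
    \rho\leq N\sum_{|F(a)|\leq r}\langle e_a|\rho|e_a\rangle\,|e_a\rangle\langle e_a|.
\end{equation*}
The weights $p_a\coloneqq\langle e_a|\rho|e_a\rangle$ are nonnegative and sum to $\Tr\rho=1$. Since $\sum_{|S|>t}R_S\geq0$, it therefore suffices to show that, for every $a$ with $|F(a)|\leq r$,
\begin{equation*}
    \Big\langle e_a\Big|\sum_{|S|>t}R_S\Big|e_a\Big\rangle\leq 2^m\eta^{t-r+1}.
\end{equation*}

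\emph{Step 3 (single basis state).} Fix such an $a$ and write $F=F(a)$. For every $S$, $\langle e_a|R_S|e_a\rangle=\prod_j\langle e_{j,a_j}|R_{j,S}|e_{j,a_j}\rangle$. Each factor lies in $[0,1]$ because $0\leq T_j\leq\id_j$. For $j\in S\setminus F$, the factor equals $\langle\theta_j|(\id_j-T_j)|\theta_j\rangle\leq\eta$. Hence $\langle e_a|R_S|e_a\rangle\leq\eta^{|S\setminus F|}$. If $|S|>t$, then $|S\setminus F|\geq t+1-r\geq1$, so this is at most $\eta^{t-r+1}$. Summing over at most $2^m$ subsets $S$ gives the claim, and combining with Step 2 yields \zcref{Eq:many-rejections}.

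The main obstacle is the coherences of $\rho$ between different defect locations $F$: these prevent a direct product-measurement argument, and the pinching step is exactly what removes them. The points to check there are that the chosen product vectors really span $\mathcal W$, so that pinching onto them is legitimate, and that the operator $\sum_{|S|>t}R_S$ being tested is positive; once both hold, the rest is an elementary count.
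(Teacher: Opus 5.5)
Your proposal is correct and takes essentially the same route as the paper. Both arguments use an orthonormal product basis of $\mathcal W$ adapted to the $|\theta_j\rangle$, the pinching inequality with factor at most $\binom mr d^r$, the per-basis-vector bound $\eta^{|S\setminus F|}\leq\eta^{t-r+1}$, and the crude $2^m$ count of failure sets $S$.
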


\begin{proof}
For each $j\in[m]$, extend $|\theta_j\rangle$ to an orthonormal basis
$\{|e_{j,\nu}\rangle\}_{\nu=0}^{d-1}$ of $\mathcal H_j$, with
$|e_{j,0}\rangle\coloneqq|\theta_j\rangle$. The resulting product
basis is
\begin{equation}
    |\Psi_{\boldsymbol{\nu}}\rangle
    \coloneqq
    \bigotimes_{j=1}^m|e_{j,\nu_j}\rangle,
    \qquad
    \boldsymbol{\nu}
    =
    (\nu_1,\ldots,\nu_m)
    \in\{0,\ldots,d-1\}^m.
\end{equation}
For each $\boldsymbol{\nu}$, let
\(
    F_{\boldsymbol{\nu}}
    \coloneqq
    \{j\in[m]:\nu_j\neq0\}
\)
and define
\(
    \mathcal I_r
    \coloneqq
    \{\boldsymbol{\nu}:|F_{\boldsymbol{\nu}}|\leq r\}.
\)
Then
\(
    \{|\Psi_{\boldsymbol{\nu}}\rangle\}_{\boldsymbol{\nu}\in\mathcal I_r}
\)
is an orthonormal basis of $\mathcal W$. Its cardinality is
\begin{equation}
    \Gamma_{m,r}
    \coloneqq
    |\mathcal I_r|
    =
    \sum_{q=0}^r\binom mq(d-1)^q
    \leq
    \binom mr d^r.
    \label{Eq:defect-subspace-dim}
\end{equation}
Indeed,
\(
    \binom mq
    \leq
    \binom mr\binom rq
\)
for $q\leq r$, which follows from
\(
    \binom mq\binom{m-q}{r-q}
    =
    \binom mr\binom rq.
\)

Since $\supp(\rho)\subseteq\mathcal W$, we may write
\begin{equation}
    \rho
    =
    \sum_{\boldsymbol{\nu},\boldsymbol{\mu}\in\mathcal I_r}
    \beta_{\boldsymbol{\nu}\boldsymbol{\mu}}
    |\Psi_{\boldsymbol{\nu}}\rangle
    \langle\Psi_{\boldsymbol{\mu}}|,
    \qquad
    \sum_{\boldsymbol{\nu}\in\mathcal I_r}
    \beta_{\boldsymbol{\nu}\boldsymbol{\nu}}
    =1.
\end{equation}
Applying the pinching inequality on $\mathcal W$ in the above
orthonormal basis gives~\cite{hayashi2001optimal}
\begin{equation}
    \rho
    \leq
    \Gamma_{m,r}
    \sum_{\boldsymbol{\nu}\in\mathcal I_r}
    \beta_{\boldsymbol{\nu}\boldsymbol{\nu}}
    |\Psi_{\boldsymbol{\nu}}\rangle
    \langle\Psi_{\boldsymbol{\nu}}|.
    \label{Eq:classical-pinching}
\end{equation}

Fix $S\subseteq[m]$ with $|S|>t$. For every
$\boldsymbol{\nu}\in\mathcal I_r$, we have
\(
    |S\setminus F_{\boldsymbol{\nu}}|
    \geq |S|-r.
\)
Moreover,
\begin{align}
    \langle\Psi_{\boldsymbol{\nu}}|
        R_S
    |\Psi_{\boldsymbol{\nu}}\rangle
    &=
    \prod_{j\in S}
    \langle e_{j,\nu_j}|
        (\id_j-T_j)
    |e_{j,\nu_j}\rangle
    \prod_{j\notin S}
    \langle e_{j,\nu_j}|
        T_j
    |e_{j,\nu_j}\rangle
    \nonumber\\
    &\leq
    \prod_{j\in S\setminus F_{\boldsymbol{\nu}}}
    \langle\theta_j|
        (\id_j-T_j)
    |\theta_j\rangle
    \leq
    \eta^{|S\setminus F_{\boldsymbol{\nu}}|}
    \leq
    \eta^{|S|-r}.
\end{align}
Here all factors not indexed by
$S\setminus F_{\boldsymbol{\nu}}$ have simply been bounded by one.

Since $R_S\geq0$, \zcref{Eq:classical-pinching} therefore implies
\begin{equation}
    \Tr(\rho R_S)
    \leq
    \Gamma_{m,r}
    \sum_{\boldsymbol{\nu}\in\mathcal I_r}
    \beta_{\boldsymbol{\nu}\boldsymbol{\nu}}
    \langle\Psi_{\boldsymbol{\nu}}|
        R_S
    |\Psi_{\boldsymbol{\nu}}\rangle
    \leq
    \Gamma_{m,r}\eta^{|S|-r}.
\end{equation}
Summing over all $S$ with $|S|>t$ gives
\begin{align}
    \Tr\!\left[
        \rho
        \sum_{\substack{S\subseteq[m]\\|S|>t}}R_S
    \right]
    &\leq
    \Gamma_{m,r}
    \sum_{s=t+1}^m
    \binom ms\eta^{s-r}
    \nonumber\\
    &\leq
    2^m\Gamma_{m,r}\eta^{t-r+1}
    \nonumber\\
    &\leq
    2^m\binom mr d^r\eta^{t-r+1},
\end{align}
where the penultimate inequality uses \(0\leq\eta\leq1\), and the final inequality follows from \zcref{Eq:defect-subspace-dim}.
\end{proof}
}
We can now combine \zcref{Prop:classical-outer} with outer
classical error correction to prove the robustness of classical
communication.

\begin{theorem}[(Universal robustness of classical communication)]
\label{Thm:classical-robustness}
Let $\Phi\in\operatorname{CPTP}(A,B)$. Then
{\begin{equation}
    C_{\mathrm{univ}}(\mathfrak S_\Phi)
    =
    \inf_{\mathbf N\in\mathfrak S_\Phi}
    C(\mathbf N)
    =
    C(\Phi).
\end{equation}}
\end{theorem}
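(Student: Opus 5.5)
The chain $C_{\mathrm{univ}}(\mathfrak S_\Phi)\leq\inf_{\mathbf N}C(\mathbf N)\leq C(\Phi)$ was already noted before the statement, using $\{\Phi^{\otimes n}\}\in\mathfrak S_\Phi$. So the plan is only to prove achievability: every $R<C(\Phi)$ is universally achievable. Fix $\delta>0$ with $R+3\delta<C(\Phi)$. By \zcref{Eq:HSW-capacity} and the HSW theorem with expurgation, choose a block length $k$ and an inner code for $\Phi^{\otimes k}$. It should have $L\geq 2^{k(R+2\delta)}$ messages, pure encoding states $|\psi_a\rangle$ on $A^k$, and POVM $\{D_a\}$ with maximal error $\eta$. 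Here $\eta$ is taken small enough that the exponent condition below holds.

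For blocklength $n$, write $n=mk+s$ with $0\leq s<k$, and send a fixed pure state on the leftover $s$ uses. The outer code $\mathcal C^{\mathrm{out}}_m\subseteq[L]^m$ has minimum distance $2t_m+1$ with $t_m=\lfloor\gamma m\rfloor$. By Gilbert--Varshamov it can be chosen with $\log|\mathcal C^{\mathrm{out}}_m|\geq m(\log L-h(2\gamma)-2\gamma\log L)$. With $\gamma$ small, depending only on $\delta$ and $k$, this yields rate at least $R$ after dividing by $n$. Decoding measures $\bigotimes_j D_{a_j}$ blockwise and then applies nearest-neighbour decoding. The codes depend only on $\Phi$ and $R$.

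The error analysis fixes $\mathbf N\in\mathfrak S_\Phi$ and a codeword $\mathbf a$. Take the extension $\mathcal N^{(n)}_{A^n\to B^nE^n}$ and operators $K_t$ from \zcref{Prop:alternative-characterization}. Since tracing out $E^n$ commutes with measuring on $B^n$, it suffices to bound the failure probability on $\omega=\mathcal N_{A^n\to B^nE^n}(\bigotimes_j\psi_{a_j}\otimes\phi)$. Each $K_t$ acts as $V_\Phi$ on all but at most $r_n$ positions, so for the product input the vector $K_t|\Psi_{\mathbf a}\rangle$ equals $\bigotimes_{j\notin F}V_\Phi^{\otimes k}|\psi_{a_j}\rangle$ tensored with an arbitrary vector on a set $F$ of at most $r_n$ blocks. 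I will fold the leftover $s$ positions into a fixed extra factor, or add them to $F$ as one further block. Hence $\supp\omega\subseteq\mathcal W$, with block Hilbert spaces $(BE)^{\otimes k}$, $|\theta_j\rangle=V_\Phi^{\otimes k}|\psi_{a_j}\rangle$, and $r=r_n+1$.

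Then take $T_j=D_{a_j}\otimes\id_{E^k}$. These satisfy $\langle\theta_j|T_j|\theta_j\rangle\geq1-\eta$, and correct decoding is guaranteed whenever at most $t_m$ tests fail. \zcref{Prop:classical-outer} gives error at most $2^m\binom{m}{r}d^{r}\eta^{t_m-r+1}$ with $d=(d_Bd_E)^k$. Because $k$ is fixed and $r_n=o(n)=o(m)$, the factors $\binom{m}{r}d^r$ are $2^{o(m)}$. The error is therefore $2^{m(1+o(1))+(\gamma m-o(m))\log\eta}\to0$ provided $\gamma\log(1/\eta)>1$, which is the condition used when choosing $\eta$. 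This holds for every $\mathbf N\in\mathfrak S_\Phi$, with convergence speed depending on $r_n$.

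The main obstacle is the order of choices. The rate loss $h(2\gamma)+2\gamma\log L$ requires $\gamma$ small, while the error exponent needs $\eta\leq 2^{-2/\gamma}$, so $\eta$ must be driven down with $k$ fixed. I plan to fix $\gamma$ first, with $h(2\gamma)/k$ and $2\gamma(R+3\delta)$ below $\delta$; since $\log L\approx k(R+2\delta)$, the loss scales correctly. Then I choose $k$ large enough that an inner code exists with rate $R+2\delta$ and error $\eta\leq2^{-2/\gamma}$. This is possible because the HSW error at any rate below capacity tends to zero as $k\to\infty$. Finally I fix $m\to\infty$.
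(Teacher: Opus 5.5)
Your architecture matches the paper's: HSW inner code with pure encodings, Gilbert--Varshamov outer code of relative distance about $2\gamma$, the block-defect support argument, \zcref{Prop:classical-outer} with $T_j=D_{a_j}\otimes\id_{E^{\otimes k}}$, and parameters chosen in the order $\gamma\to\eta\to k\to m$. But one step fails as written.

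You take the extension and the operators $K_t$ from \zcref{Prop:alternative-characterization} afresh for each $n$. You then claim that $\binom{m}{r}d^{r}$, with $d=(d_Bd_E)^k$, is $2^{o(m)}$ \emph{because $k$ is fixed}. That proposition only provides \emph{some} environment $E$, inherited from the purifying system in the MSR definition, and that system is unconstrained. The witness at blocklength $n$ may therefore use an environment $E_n$ whose dimension grows arbitrarily with $n$. Then $(d_Bd_{E_n})^{kr_n}$ need not be subexponential in $m$: with $d_{E_n}=2^n$ and $r_n=\lfloor\sqrt n\rfloor$ it is at least $2^{kn^{3/2}}$, and your error bound becomes vacuous. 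The $d$-dependence is genuine in \zcref{Prop:classical-outer}, because its pinching factor is the full dimension of the block-defect subspace.

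The missing ingredient is uniformity of the dilation. Before the error analysis, the paper fixes one minimal Stinespring isometry $V_\Phi\colon A\to BE$, so $d_E\le d_Ad_B$ depends only on $\Phi$. It then invokes \zcref{cor:fixed-Stinespring-representation}, which represents every Choi-almost-iid channel, for every $n$ and every $\mathbf N\in\mathfrak S_\Phi$, with this same $V_\Phi$. With that in place, your exponent computation goes through verbatim.

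An alternative repair is to pinch only with respect to the orthogonal decomposition of $\mathcal W$ by defect pattern $H_{\boldsymbol\nu}$. There are $\sum_{q\le r}\binom mq$ such blocks, and each pinched component has the form $|\theta\rangle\langle\theta|_{H^c}\otimes\sigma_H$. This yields the bound $2^m\sum_{q\le r}\binom mq\,\eta^{t-r+1}$, with no dependence on $d_E$.

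Two smaller points.
\begin{itemize}
\item For the leftover $s$ uses, \emph{folding them into a fixed extra factor} is wrong if defects land there. Adding them as an $(m+1)$st block clashes with the equal-dimension hypothesis of \zcref{Prop:classical-outer}. It is simpler to trace out the tail outputs, as the paper does: a state supported on $\mathcal W\otimes(BE)^{\otimes s}$ has marginal supported on $\mathcal W$, and then $r=r_n$ suffices.
\item $\gamma$ must not depend on $k$, or the chain $\gamma\to\eta\to k$ is circular. Since $k\ge1$, read your condition as $h(2\gamma)<\delta$ together with $2\gamma(R+3\delta)<\delta$. The rate bound then follows from $(1-2\gamma)\frac1k\log L\ge(1-2\gamma)(R+2\delta)$.
\end{itemize}
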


\begin{proof}
We first prove the converse. The memoryless sequence
$\{\Phi^{\otimes n}\}_{n\in\mathbb N}$ belongs to
$\mathfrak S_\Phi$, by taking $r_n=0$. Therefore, every universally
achievable rate for $\mathfrak S_\Phi$ must, in particular, be
achievable for this memoryless sequence. It follows that
\begin{equation}
    C_{\mathrm{univ}}(\mathfrak S_\Phi)
    \leq
    C\!\left(\{\Phi^{\otimes n}\}_{n\in\mathbb N}\right)
    =
    C(\Phi).
    \label{eq:classical-universal-converse}
\end{equation}

We now prove achievability. {If $C(\Phi)=0$, then the converse above and non-negativity imply
\begin{equation}
    C_{\mathrm{univ}}(\mathfrak S_\Phi)=0.
\end{equation}
Moreover, since the memoryless sequence
$\{\Phi^{\otimes n}\}_{n\in\mathbb N}$ belongs to
$\mathfrak S_\Phi$,
\begin{equation}
    0
    \leq
    \inf_{\mathbf N\in\mathfrak S_\Phi} C(\mathbf N)
    \leq
    C(\Phi)
    =
    0.
\end{equation}
Thus the claimed three-way equality holds in this case. Assume
henceforth that $C(\Phi)>0$, and fix
\begin{equation}
    0\leq R<R_0<C(\Phi).
\end{equation}} 
Choose $0<\tau<1/4$ sufficiently small that
\begin{equation}
    (1-2\tau)R_0-h(2\tau)>R,
    \label{Eq:classical-rate-choice}
\end{equation}
where $h$ denotes the binary entropy. Next, choose $0<\eta<1$
sufficiently small that
\begin{equation}
    1+\tau\log\eta<0.
    \label{Eq:classical-eta-choice}
\end{equation}

{By the achievability part of the HSW theorem and expurgation}, there exist
$k,L\in\mathbb N$, pure states
$\{|\psi_a\rangle\}_{a\in[L]}$ on $A^k$, and a POVM
$\{D_a\}_{a\in[L]}$ on $B^k$ such that
\begin{equation}
    \frac{1}{k}\log L\geq R_0,
    \qquad
    \Tr\!\left[
        D_a\,
        \Phi^{\otimes k}(\psi_a)
    \right]
    \geq1-\eta
    \quad\forall\,a\in[L],
    \label{Eq:inner-HSW-code}
\end{equation}
where
\(
    \psi_a\coloneqq|\psi_a\rangle\langle\psi_a|.
\)
The encoding states may be taken to be pure. Indeed, if a mixed
encoding state
\(
    \rho_a=\sum_xp_x\psi_{a,x}
\)
has decoding success probability at least $1-\eta$, then at least one
pure state $\psi_{a,x}$ in this decomposition has decoding success
probability at least $1-\eta$.

%%% INSERT HERE
{Fix once and for all a minimal Stinespring isometry
\begin{equation}
    V_\Phi\in\mathcal{L}(A,BE)
\end{equation}
for $\Phi$. The environment $E$, and hence its dimension $d_E$,
depends only on $\Phi$. By
\zcref{cor:fixed-Stinespring-representation}, every
Choi-almost-iid channel along $\Phi$ admits the extension and
defect-operator representation in
\zcref{Prop:alternative-characterization} using this same
Stinespring isometry $V_\Phi$. In particular, neither $E$ nor
$V_\Phi$ depends on the particular channel sequence or on its
witnessing defect sequence.
}
For each $n\geq k$, let
\begin{equation}
    m\coloneqq\left\lfloor\frac{n}{k}\right\rfloor,
    \qquad
    n=mk+s,
    \qquad
    0\leq s<k,
\end{equation}
and set
\(
    t_m\coloneqq\lfloor\tau m\rfloor.
\)
Codes for the finitely many blocklengths $n<k$ may be chosen
arbitrarily. Since $m/n\to1/k$ and $t_m/m\to\tau$, every sequence
$r_n=o(n)$ satisfies
\(
    r_n=o(m)
\);
in particular, $r_n\leq t_m\leq m$ for all sufficiently large $n$.

We next construct an outer classical code
$ \mathcal C_m^{\mathrm{out}}$. For
$\boldsymbol a,\boldsymbol b\in[L]^m$, define their Hamming distance by
\begin{equation}
    d_{\mathrm H}(\boldsymbol a,\boldsymbol b)
    \coloneqq
    \bigl|\{j\in[m]:a_j\neq b_j\}\bigr|.
\end{equation}
The Gilbert--Varshamov bound~\cite{gilbert1952comparison,varshamov1957estimate} guarantees the existence of a code
$\mathcal C_m^{\mathrm{out}}$ with minimum Hamming distance at least
$2t_m+1$ and satisfying
\begin{equation}
    |\mathcal C_m^{\mathrm{out}}|
    \geq
    \frac{L^m}{
        \displaystyle
        \sum_{j=0}^{2t_m}
        \binom mj(L-1)^j
    }.
    \label{Eq:classical-GV}
\end{equation}
Thus,
\begin{equation}
    d_{\mathrm H}(\boldsymbol a,\boldsymbol a')
    \geq2t_m+1
    \qquad
    \forall\,
    \boldsymbol a\neq\boldsymbol a'
    \text{ in }\mathcal C_m^{\mathrm{out}}.
\end{equation}
In particular, the Hamming balls of radius $t_m$ centred at the
codewords in $\mathcal C_m^{\mathrm{out}}$ are pairwise disjoint.

Fix a pure state
\(
    \varphi_A=|\varphi\rangle\langle\varphi|_A
\)
for padding the remaining $s$ channel inputs. A codeword
$\boldsymbol a=(a_1,\ldots,a_m)\in \mathcal C_m^{\mathrm{out}}$ is encoded as
\begin{equation}
    \rho_{\boldsymbol a}^{(n)}
    \coloneqq
    \psi_{a_1}\otimes\cdots\otimes\psi_{a_m}
    \otimes\varphi_A^{\otimes s}
    \in\mathcal D(A^n).
    \label{eq:classical-full-enc}
\end{equation}

The receiver first applies the product POVM
\begin{equation}
    \left\{
        \Lambda_{\boldsymbol b}^{(n)}
        \coloneqq
        D_{b_1}\otimes\cdots\otimes D_{b_m}
        \otimes \id_{B^{\otimes s}}
    \right\}_{\boldsymbol b\in[L]^m}.
    \label{eq:classical-full-dec}
\end{equation}
Define a map
\(
    g_m:[L]^m\to \mathcal C_m^{\mathrm{out}}
\)
as follows. If there exists a codeword
$\boldsymbol a\in\mathcal C_m^{\mathrm{out}}$ satisfying
\(
    d_{\mathrm H}(\boldsymbol a,\boldsymbol b)\leq t_m,
\)
then $g_m(\boldsymbol b)$ is this unique codeword; otherwise,
$g_m(\boldsymbol b)$ is chosen arbitrarily. The final decoding POVM is
therefore
\begin{equation}
    \widehat D_{\boldsymbol a}^{(n)}
    \coloneqq
    \sum_{\substack{\boldsymbol b\in[L]^m\\
                    g_m(\boldsymbol b)=\boldsymbol a}}
    \Lambda_{\boldsymbol b}^{(n)},
    \qquad
    \boldsymbol a\in\mathcal C_m^{\mathrm{out}}.
\end{equation}
If $\boldsymbol a$ is transmitted, a decoding error can occur only if
the outcome $\boldsymbol b$ of the inner decoder satisfies
\(
    d_{\mathrm H}(\boldsymbol a,\boldsymbol b)>t_m.
\)

The construction up to this point depends only on $R$ and $\Phi$.
Now fix an arbitrary sequence
\(
    \mathbf N
    =
    \{\mathcal N^{(n)}_{A^n\to B^n}\}_{n\in\mathbb N}
    \in\mathfrak S_\Phi
\)
and a witnessing sequence $r_n=o(n)$. By
\zcref{Prop:alternative-characterization}, for each $n$ there exists
an extension channel
\({\mathcal N}^{(n)}_{A^n\to B^nE^n}
\)
of $\mathcal N^{(n)}$, based on the fixed Stinespring isometry
$V_\Phi$, with a decomposition
\begin{equation}
   {\mathcal N}^{(n)}_{A^n\to B^nE^n}(X)
    =
    \sum_{u,u'\in\mathcal T_n}
    \beta_{uu'}K_uXK_{u'}^\dagger,
    \qquad
    K_u\in
    \mathcal K\!\left(
        A^n\to B^nE^n,
        V_\Phi^{\otimes(n-r_n)}
    \right).
\end{equation}

For each $a\in[L]$, define the ideal extended output
\begin{equation}
    |\theta_a\rangle
    \coloneqq
    V_\Phi^{\otimes k}|\psi_a\rangle
    \in(BE)^{\otimes k}.
\end{equation}
For an input codeword
$\boldsymbol a=(a_1,\ldots,a_m)\in \mathcal C_m^{\mathrm{out}}$, let
\begin{equation}
    \omega_{\boldsymbol a}
    \coloneqq
    \Tr_{(BE)^{\otimes s}}
    {\mathcal N}^{(n)}_{A^n\to B^nE^n}
    \left(
        \rho_{\boldsymbol a}^{(n)}
    \right)
\end{equation}
be the extended output state on the first $mk$ output systems, after
discarding the $s$ output systems corresponding to the padding.

Let
\(
    \mathcal H_j\simeq(BE)^{\otimes k}
\)
be the extended output space of the $j$th block. We claim that
$\omega_{\boldsymbol a}$ is supported on
\begin{equation}
    \mathcal W_{\boldsymbol a}
    \coloneqq
    \operatorname{span}
    \left\{
        |\theta_{\boldsymbol a}\rangle_{F^c}
        \otimes|\xi_F\rangle
        :
        F\subseteq[m],\ |F|\leq r_n,\ 
        |\xi_F\rangle\in
        \bigotimes_{j\in F}\mathcal H_j
    \right\},
    \label{Eq:classical-block-defect-space}
\end{equation}
where
\(
    |\theta_{\boldsymbol a}\rangle_{F^c}
    \coloneqq
    \bigotimes_{j\notin F}|\theta_{a_j}\rangle
\),
with all tensor factors placed in their original order.

To prove the claim, define
\begin{equation}
    |\Psi_{\boldsymbol a}\rangle
    \coloneqq
    |\psi_{a_1}\rangle\otimes\cdots\otimes
    |\psi_{a_m}\rangle\otimes|\varphi\rangle^{\otimes s}.
\end{equation}
For every $u\in\mathcal T_n$, the definition of
$\mathcal K$ provides a set
$G_u\subseteq[n]$ with $|G_u|\leq r_n$ such that $K_u$ acts as
$V_\Phi$ on all positions in $G_u^c$, while its action on the
positions in $G_u$ may be arbitrary and correlated. Let
\begin{equation}
    I_j
    \coloneqq
    \{(j-1)k+1,\ldots,jk\},
    \qquad
    F_u
    \coloneqq
    \{j\in[m]:I_j\cap G_u\neq\varnothing\}.
\end{equation}
Then
\(
    |F_u|\leq|G_u|\leq r_n.
\)
For every $j\notin F_u$, the operator $K_u$ acts as
$V_\Phi^{\otimes k}$ on the entire $j$th block. Consequently,
\begin{equation}
    K_u|\Psi_{\boldsymbol a}\rangle
    =
    |\theta_{\boldsymbol a}\rangle_{F_u^c}
    \otimes
    |\xi_{u,\boldsymbol a}\rangle_{F_u,\mathrm{rem}}
\end{equation}
for some, possibly unnormalized, vector on the blocks in $F_u$ and
the remaining $s$ output systems. Hence
\(
    K_u|\Psi_{\boldsymbol a}\rangle
\)
belongs to
\(
    \mathcal W_{\boldsymbol a}
    \otimes(BE)^{\otimes s}.
\)
{The decomposition of
$\mathcal N^{(n)}_{A^n\to B^nE^n}
(\rho_{\mathbf a}^{(n)})$
therefore implies that the resulting extended output state is supported
on $\mathcal W_{\mathbf a}\otimes(BE)^{\otimes s}$.}
 Taking
the partial trace over the last $s$ output systems gives
\begin{equation}
    \supp(\omega_{\boldsymbol a})
    \subseteq
    \mathcal W_{\boldsymbol a},
\end{equation}
as claimed.

For a transmitted codeword $\boldsymbol a$, let
$\boldsymbol b\in[L]^m$ denote the outcome of the product inner
decoder. Its conditional distribution can be evaluated using the
extended output as
\begin{align}
    \Pr(\boldsymbol b\mid\boldsymbol a)
    &=
    \Tr\!\left[
        \Lambda_{\boldsymbol b}^{(n)}
        \mathcal N^{(n)}
        \bigl(\rho_{\boldsymbol a}^{(n)}\bigr)
    \right]
    \nonumber\\
    &=
    \Tr\!\left[
        \left(
            \bigotimes_{j=1}^m
            (D_{b_j}\otimes \id_{E^{\otimes k}})
        \right)
        \omega_{\boldsymbol a}
    \right].
    \label{Eq:inner-decoding-distribution}
\end{align}
For each $j\in[m]$, define
\begin{equation}
    T_j
    \coloneqq
    D_{a_j}\otimes \id_{E^{\otimes k}}.
\end{equation}
By \zcref{Eq:inner-HSW-code},
\begin{equation}
    \langle\theta_{a_j}|T_j|\theta_{a_j}\rangle
    =
    \Tr\!\left[
        D_{a_j}\Phi^{\otimes k}(\psi_{a_j})
    \right]
    \geq1-\eta.
\end{equation}

Let
\(
    S(\boldsymbol a,\boldsymbol b)
    \coloneqq
    \{j\in[m]:a_j\neq b_j\}.
\)
Since
\(
    \sum_{b_j\neq a_j}D_{b_j}=\id_{B^{\otimes k}}-D_{a_j},
\)
coarse-graining the product decoding POVM gives
\begin{equation}
    \Pr\!\left(
        S(\boldsymbol a,\boldsymbol b)=S
        \,\middle|\,
        \boldsymbol a
    \right)
    =
    \Tr(\omega_{\boldsymbol a}R_S),
\end{equation}
where $\{R_S\}_{S\subseteq[m]}$ is the product binary POVM constructed
from the effects $\{T_j\}_{j=1}^m$ as in
\zcref{Prop:classical-outer}. Consequently,
\begin{equation}
    \Pr\!\left(
        d_{\mathrm H}(\boldsymbol a,\boldsymbol b)>t_m
        \,\middle|\,
        \boldsymbol a
    \right)
    =
    \sum_{\substack{S\subseteq[m]\\|S|>t_m}}
    \Tr(\omega_{\boldsymbol a}R_S).
\end{equation}

Applying \zcref{Prop:classical-outer} with
\(
    d=(d_Bd_E)^k
\),
$r=r_n$, and $t=t_m$ therefore gives, for every
$\boldsymbol a\in \mathcal C_m^{\mathrm{out}}$ and all sufficiently large $n$,
\begin{equation}
    \Pr\!\left(
        \textnormal{decoding error}
        \,\middle|\,
        \boldsymbol a
    \right)
    \leq
    2^m
    \binom{m}{r_n}
    (d_Bd_E)^{kr_n}
    \eta^{t_m-r_n+1}.
    \label{Eq:classical-error-bound}
\end{equation}

%% Next from here
It remains to show that the maximal decoding error tends to zero.
For the fixed sequence
$\mathbf N\in\mathfrak S_\Phi$, let $\varepsilon_n(\mathbf N)$
denote the maximal error probability of the code constructed above.
By \zcref{Eq:classical-error-bound}, for all sufficiently large $n$,
\begin{equation}
    \varepsilon_n(\mathbf N)
    \leq
    2^m
    \binom{m}{r_n}
    (d_Bd_E)^{kr_n}
    \eta^{t_m-r_n+1}.
\end{equation}
Using
\(
    \binom{m}{r_n}
    \leq
    2^{m h(r_n/m)},
\)
the base-two logarithm of the right-hand side, divided by $m$, is
bounded above by
\begin{equation}
    1
    +
    h\!\left(\frac{r_n}{m}\right)
    +
    k\frac{r_n}{m}\log(d_Bd_E)
    +
    \left(
        \frac{t_m}{m}
        -
        \frac{r_n}{m}
        +
        \frac{1}{m}
    \right)
    \log\eta.
    \label{Eq:classical-normalized-error-bound}
\end{equation}
Since
\(
    r_n/m\to0
\)
and
\(
    t_m/m\to\tau,
\)
the expression in
\zcref{Eq:classical-normalized-error-bound} converges to
\begin{equation}
    1+\tau\log\eta<0,
\end{equation}
where the inequality follows from
\zcref{Eq:classical-eta-choice}. In particular, if
\(
    \gamma\coloneqq-(1+\tau\log\eta)>0,
\)
then, for all sufficiently large $n$,
\begin{equation}
    \varepsilon_n(\mathbf N)
    \leq
    2^{-\gamma m/2}.
\end{equation}
Thus the maximal decoding error tends to zero exponentially in $m$
and hence also exponentially in $n$.

It remains to estimate the communication rate. The number of messages
in the blocklength-$n$ code is
\(
    M_n=|\mathcal C_m^{\mathrm{out}}|.
\)
Since $\tau<1/4$, we have $2t_m/m<1/2$ for all sufficiently large
$m$. Consequently,
\begin{equation}
    \sum_{j=0}^{2t_m}
    \binom mj(L-1)^j
    \leq
    (2t_m+1)\,
    2^{m h(2t_m/m)}
    (L-1)^{2t_m}.
\end{equation}
Combining this estimate with the Gilbert--Varshamov bound
\zcref{Eq:classical-GV} gives
\begin{align}
    \frac{1}{n}\log|\mathcal C_m^{\mathrm{out}}|
    &\geq
    \frac{m}{n}\log L
    -
    \frac{1}{n}\log(2t_m+1)
    -
    \frac{m}{n}h\!\left(\frac{2t_m}{m}\right)
    -
    \frac{2t_m}{n}\log(L-1)
    \nonumber\\
    &=
    \frac{m}{n}
    \left[
        \log L
        -
        h\!\left(\frac{2t_m}{m}\right)
        -
        \frac{2t_m}{m}\log(L-1)
    \right]
    -
    \frac{1}{n}\log(2t_m+1).
\end{align}
Since $n=mk+s$ with $0\leq s<k$, while
$t_m/m\to\tau$, we have
\begin{equation}
    \frac{m}{n}\longrightarrow\frac{1}{k},
    \qquad
    \frac{2t_m}{m}\longrightarrow2\tau,
    \qquad
    \frac{1}{n}\log(2t_m+1)\longrightarrow0.
\end{equation}
The continuity of the binary entropy therefore implies
\begin{equation}
    \liminf_{n\to\infty}
    \frac{1}{n}\log|\mathcal C_m^{\mathrm{out}}|
    \geq
    \frac{1}{k}
    \left[
        \log L
        -
        h(2\tau)
        -
        2\tau\log(L-1)
    \right].
    \label{Eq:classical-final-rate}
\end{equation}
Using
\(
    \log(L-1)\leq\log L
\),
\(
    k\geq1
\),
and
\(
    k^{-1}\log L\geq R_0
\)
from \zcref{Eq:inner-HSW-code}, we obtain
\begin{align}
    \liminf_{n\to\infty}
    \frac{1}{n}\log|\mathcal C_m^{\mathrm{out}}|
    &\geq
    (1-2\tau)\frac{1}{k}\log L
    -
    \frac{1}{k}h(2\tau)
    \nonumber\\
    &\geq
    (1-2\tau)R_0-h(2\tau)
    \nonumber\\
    &>R,
\end{align}
where the last inequality follows from
\zcref{Eq:classical-rate-choice}.

The encoding states, the outer codebooks, and the decoding POVMs were
chosen independently of the particular sequence
$\mathbf N\in\mathfrak S_\Phi$ and of its witnessing defect sequence
$r_n=o(n)$. For every fixed $\mathbf N\in\mathfrak S_\Phi$, the above
argument shows that the maximal decoding error tends to zero.
Therefore, the same sequence of codes is universally reliable over
$\mathfrak S_\Phi$.

Since every $R<C(\Phi)$ is universally achievable, we conclude that
\begin{equation}
    C_{\mathrm{univ}}(\mathfrak S_\Phi)
    \geq
    C(\Phi).
\end{equation}
{Together with the converse \zcref{eq:classical-universal-converse}, this proves
\begin{equation}
    C_{\mathrm{univ}}(\mathfrak S_\Phi)
    =
    C(\Phi).
\end{equation}
Moreover, every universally achievable rate is achievable for each
individual $\mathbf N\in\mathfrak S_\Phi$, and hence
\begin{equation}
    C(\Phi)
    =
    C_{\mathrm{univ}}(\mathfrak S_\Phi)
    \leq
    \inf_{\mathbf N\in\mathfrak S_\Phi}
    C(\mathbf N).
\end{equation}
Conversely, the memoryless sequence
$\{\Phi^{\otimes n}\}_{n\in\mathbb N}$ belongs to
$\mathfrak S_\Phi$, so
\begin{equation}
    \inf_{\mathbf N\in\mathfrak S_\Phi}
    C(\mathbf N)
    \leq
    C(\Phi).
\end{equation}
Therefore,
\begin{equation}
    C_{\mathrm{univ}}(\mathfrak S_\Phi)
    =
    \inf_{\mathbf N\in\mathfrak S_\Phi}
    C(\mathbf N)
    =
    C(\Phi),
\end{equation}
which proves the theorem.}
\end{proof}

%%%%%%%%%%%%%%%%%%%%%%%%%%%%%%%%%%%%%%%%%%%%%%%%%%%%%%%%%%%%%%%%%%%%%%%%%%%%%%%%
}

{\section{Robustness of quantum communication}
\label{Sec:robust-quantum-communication}

We now turn to quantum communication over Choi-almost-iid channels. By
\emph{robustness}, we mean that every rate strictly below the quantum
capacity of the memoryless reference channel $\Phi$ remains achievable
over Choi-almost-iid channel sequences along $\Phi$. As in the classical
setting, we seek a \emph{universal} coding scheme, meaning that the same
sequence of encoders and decoders works for every such channel sequence.
Thus, for each target rate $R<Q(\Phi)$, the code sequence may depend on
$\Phi$ and on $R$, but is independent of the particular channel sequence
in $\mathfrak S_\Phi$ and of the defect parameters and extensions used to
establish its Choi-almost-iid property. Throughout this section,
logarithms are taken to be base two.

\begin{definition}[(Quantum communication codes)]
\label{Def:quantum-communication-sequence}
Let
\(
    \mathbf N
    \coloneqq
    \{\mathcal N^{(n)}_{A^n\to B^n}\}_{n\in\mathbb N}
\)
be a sequence of quantum channels. For each $n\in\mathbb N$, an
$(n,|M_n|)$ quantum communication code $\mathcal Q_n$ consists of an
encoding channel
\(
    \mathcal E^{(n)}\in\mathrm{CPTP}(M_n,A^n)
\)
and a decoding channel
\(
    \mathcal D^{(n)}\in\mathrm{CPTP}(B^n,M_n),
\)
where $M_n$ is a finite-dimensional quantum system and
\(
    |M_n|\coloneqq d_{M_n}=\dim\mathcal H_{M_n}.
\)

The transmission error of $\mathcal Q_n$ when used over
$\mathcal N^{(n)}$ is
\begin{equation}
    \varepsilon_\diamond
    \bigl(
        \mathcal Q_n,\mathcal N^{(n)}
    \bigr)
    \coloneqq
    \frac{1}{2}
    \left\|
        \mathcal D^{(n)}
        \circ\mathcal N^{(n)}
        \circ\mathcal E^{(n)}
        -
        \idc_{M_n}
    \right\|_\diamond .
\end{equation}

A rate $R\geq0$ is \emph{achievable} for $\mathbf N$ if there exists a
sequence of quantum communication codes
$\{\mathcal Q_n\}_{n\in\mathbb N}$ such that
\begin{equation}
    \lim_{n\to\infty}
    \varepsilon_\diamond
    \bigl(
        \mathcal Q_n,\mathcal N^{(n)}
    \bigr)
    =0,
    \qquad
    \liminf_{n\to\infty}
    \frac{1}{n}\log |M_n|
    \geq R.
\end{equation}
The quantum capacity of $\mathbf N$ is
\begin{equation}
    Q(\mathbf N)
    \coloneqq
    \sup\left\{
        R:
        R\text{ is achievable for }\mathbf N
    \right\}.
\end{equation}
\end{definition}

For the memoryless sequence
$\{\Phi_{A\to B}^{\otimes n}\}_{n\in\mathbb N}$, the preceding
definition reduces to the usual quantum capacity $Q(\Phi)$. By the
Lloyd--Shor--Devetak theorem~\cite{lloyd1997capacity,
shor2002quantum,devetak2005private},
\begin{equation}
    Q(\Phi)
    =
    \lim_{k\to\infty}
    \frac{1}{k}
    \max_{\phi_{RA^k}}
    I(R\rangle B^k)_\sigma
    =
    \sup_{k\geq1}
    \frac{1}{k}
    \max_{\phi_{RA^k}}
    I(R\rangle B^k)_\sigma,
    \label{Eq:quantum-capacity}
\end{equation}
where the maximum is over pure states
$\phi_{RA^k}\in\mathcal D(RA^k)$, with $R\simeq A^k$, and
\begin{equation}
    \sigma_{RB^k}
    \coloneqq
    \bigl(
        \idc_R\otimes\Phi_{A\to B}^{\otimes k}
    \bigr)
    (\phi_{RA^k}).
\end{equation}
Here
\(
    I(R\rangle B^k)_\sigma
    \coloneqq
    -H(R|B^k)_\sigma
\)
denotes the coherent information.

\begin{definition}[(Universal quantum capacity for $\mathfrak S_\Phi$)]
\label{Def:universal-quantum-capacity}
A rate $R\geq0$ is \emph{universally achievable} for
$\mathfrak S_\Phi$ if there exists a sequence of quantum communication
codes $\{\mathcal Q_n\}_{n\in\mathbb N}$, which may depend on $R$ and
$\Phi$ but is chosen independently of the particular sequence
$\mathbf N\in\mathfrak S_\Phi$, such that
\begin{equation}
    \liminf_{n\to\infty}
    \frac{1}{n}\log |M_n|
    \geq R
\end{equation}
and, for every
\(
    \mathbf N
    =
    \{\mathcal N^{(n)}_{A^n\to B^n}\}_{n\in\mathbb N}
    \in\mathfrak S_\Phi,
\)
\begin{equation}
    \lim_{n\to\infty}
    \varepsilon_\diamond
    \bigl(
        \mathcal Q_n,\mathcal N^{(n)}
    \bigr)
    =0.
\end{equation}
Any such sequence of codes is called a \emph{universally reliable
quantum communication scheme} for $\mathfrak S_\Phi$. The
\emph{universal quantum capacity} of $\mathfrak S_\Phi$ is
\begin{equation}
    Q_{\mathrm{univ}}(\mathfrak S_\Phi)
    \coloneqq
    \sup\left\{
        R:
        R\text{ is universally achievable for }\mathfrak S_\Phi
    \right\}.
    \label{Eq:universal-quantum-capacity-def}
\end{equation}
\end{definition}

The order of quantifiers in the preceding definition is important: the
sequence of codes is common to all $\mathbf N\in\mathfrak S_\Phi$,
whereas the rate at which the error converges to zero may depend on the
particular channel sequence. Every universally achievable rate for
$\mathfrak S_\Phi$ is achievable for each of its individual members.
Consequently,
\begin{equation}
    Q_{\mathrm{univ}}(\mathfrak S_\Phi)
    \leq
    \inf_{\mathbf N\in\mathfrak S_\Phi}Q(\mathbf N)
    \leq
    Q(\Phi),
\end{equation}
where the second inequality follows because the memoryless sequence
$\{\Phi^{\otimes n}\}_{n\in\mathbb N}$ belongs to
$\mathfrak S_\Phi$. For a general class of channel sequences, the first
inequality can be strict: a rate may be achievable for every individual
member using a code sequence that depends on that member, even though no
single code sequence achieves the rate for all members simultaneously.
The theorem below shows that this obstruction does not occur for
$\mathfrak S_\Phi$.

The robustness proof concatenates a finite-blocklength quantum
communication code for the reference channel with an outer quantum
error-correcting code. We first choose a blocklength-$k$ code for
$\Phi^{\otimes k}$ such that the effective channel $\Lambda$, obtained
after encoding and decoding, is close to the identity channel on a
finite-dimensional Hilbert space $Q$. Grouping a Choi-almost-iid
channel $\mathcal N^{(n)}$ into $m$ blocks of size $k$ then induces a
Choi-almost-iid channel on $Q^{\otimes m}$ along $\Lambda$. The outer
quantum error-correcting code must handle both the residual noise of
$\Lambda$ and the additional almost-iid deviations, including coherent
superpositions of different defect structures. Its universality follows
from the fact that it corrects all errors of sufficiently small block
weight, independently of their locations and operator structure. The
quantum Gilbert--Varshamov bound guarantees the existence of outer codes
whose rate loss can be made arbitrarily small. The required facts from
quantum error correction are collected in~\zcref{App:QEC}.

The first step is to divide the $n$ channel inputs into $m$ blocks of
size $k$, with a remainder of fewer than $k$ systems discarded. The
following lemma shows that this blocking procedure preserves the
Choi-almost-iid structure: if the original channel has defect parameter
$r$, then the induced block channel has defect parameter at most $r$.

\begin{lemma}[(Stability under block coding)]
\label{lem:blocking}
Let $\Phi\in\mathrm{CPTP}(A,B)$,
$\mathcal E\in\mathrm{CPTP}(Q,A^k)$, and
$\mathcal D\in\mathrm{CPTP}(B^k,Q)$, and define the effective
reference channel
\begin{equation}
    \Lambda
    \coloneqq
    \mathcal D\circ\Phi^{\otimes k}\circ\mathcal E
    \in\mathrm{CPTP}(Q,Q).
\end{equation}
Let $n=mk+s$, where $0\leq s<k$, and let
$\mathcal N^{(n)}_{A^n\to B^n}$ be an
$\binom{n}{r}$-Choi-almost-iid channel along $\Phi$, with $r\leq m$.
Fix a pure state
$\varphi_A\coloneqq|\varphi\rangle\langle\varphi|_A$, and define the
induced channel $\mathcal M^{(m)}\in\mathrm{CPTP}(Q^m,Q^m)$ by
\begin{equation}
    \mathcal M^{(m)}(\rho)
    \coloneqq
    \mathcal D^{\otimes m}
    \left[
        \Tr_{B^s}
        \mathcal N^{(n)}
        \left(
            \mathcal E^{\otimes m}(\rho)
            \otimes
            \varphi_A^{\otimes s}
        \right)
    \right],
    \label{Eq:induced-block-channel}
\end{equation}
where $B^s$ denotes the final $s$ output systems, which are discarded.
Then $\mathcal M^{(m)}$ is an
$\binom{m}{r}$-Choi-almost-iid channel along $\Lambda$.
\end{lemma}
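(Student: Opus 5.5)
The plan is to verify the definition of an $\binom{m}{r}$-Choi-almost-iid channel along $\Lambda$ directly, using the converse direction of \zcref{Prop:alternative-characterization} and the state-level \zcref{Def:Choi-almost-iid-channel}. Concretely, I will build an explicit permutation-covariant extension of $\mathcal M^{(m)}$ whose Choi state lies in the vector defect subspace built from a Stinespring isometry of $\Lambda$.

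\textbf{Step 1 (isometries).} Fix Stinespring isometries $V_{\mathcal E}\colon Q\to A^kE_1$ and $V_{\mathcal D}\colon B^k\to QE_2$. By \zcref{Prop:alternative-characterization}, fix a Stinespring isometry $V_\Phi\colon A\to BE$ together with a permutation-covariant extension $\mathcal N^{(n)}_{A^n\to B^nE^n}(X)=\sum_{t,t'}\beta_{tt'}K_tXK_{t'}^\dagger$ with $K_t\in\mathcal K(A^n\to B^nE^n,V_\Phi^{\otimes(n-r)})$. Set $F\coloneqq E_2E^kE_1$ and
\[
V_\Lambda\coloneqq (V_{\mathcal D}\otimes\id_{E^kE_1})(V_\Phi^{\otimes k}\otimes\id_{E_1})V_{\mathcal E}\colon Q\to QF .
\]
This is an isometry with $\Tr_F V_\Lambda(\cdot)V_\Lambda^\dagger=\Lambda$, so $|V_\Lambda\rangle\!\rangle$ purifies $J(\Lambda)$.

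\textbf{Step 2 (extension of $\mathcal M^{(m)}$).} Define
\[
\widetilde{\mathcal M}(X)\coloneqq V_{\mathcal D}^{\otimes m}\,\Tr_{(BE)^s}\!\Bigl[\mathcal N^{(n)}_{A^n\to B^nE^n}\bigl(V_{\mathcal E}^{\otimes m}XV_{\mathcal E}^{\dagger\otimes m}\otimes\varphi^{\otimes s}\bigr)\Bigr]V_{\mathcal D}^{\dagger\otimes m},
\]
where the $E_1$ systems are carried along untouched. The output is regrouped blockwise as $(QF)^{\otimes m}$. Tracing out $F^m$ recovers $\mathcal M^{(m)}$, so $\widetilde{\mathcal M}$ is an extension.

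\textbf{Step 3 (covariance).} $\widetilde{\mathcal M}$ is covariant under permutations of the $m$ blocks. Such a permutation lifts to a permutation of $[n]$ that permutes whole blocks and fixes the last $s$ positions. $\mathcal N^{(n)}_{A^n\to B^nE^n}$ is covariant under it, the padding $\varphi^{\otimes s}$ is untouched, and the maps $V_{\mathcal E}^{\otimes m}$, $V_{\mathcal D}^{\otimes m}$ commute with block permutations. Hence $J(\widetilde{\mathcal M})$ is invariant under simultaneous block permutations of $(QFQ')^{m}$.

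\textbf{Step 4 (support condition, main obstacle).} The main obstacle is twofold. The defect sets of the $K_t$ are not aligned with blocks, and the partial trace over the remainder destroys the $\beta$-form of \zcref{Prop:alternative-characterization}(ii). I handle both at the level of the Choi state. Fix an orthonormal basis $\{|f_l\rangle\}$ of $(BE)^{\otimes s}$ and set
\[
\widetilde K_{t,l}\coloneqq V_{\mathcal D}^{\otimes m}\bigl(\id\otimes\langle f_l|\bigr)K_t\bigl(V_{\mathcal E}^{\otimes m}\otimes|\varphi\rangle^{\otimes s}\bigr)\colon Q^m\to(QF)^m .
\]
Then
\[
J(\widetilde{\mathcal M})=\sum_l\sum_{t,t'}\beta_{tt'}\,|\widetilde K_{t,l}\rangle\!\rangle\langle\!\langle\widetilde K_{t',l}| ,
\]
so it suffices that every $\widetilde K_{t,l}\in\mathcal K(Q^m\to(QF)^m,V_\Lambda^{\otimes(m-r)})$.

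As in the proof of \zcref{Thm:classical-robustness}, let $G_t\subseteq[n]$ with $|G_t|\le r$ be the positions where $K_t$ need not act as $V_\Phi$, and let $F_t\coloneqq\{j\in[m]:I_j\cap G_t\neq\varnothing\}$. Then $|F_t|\le r\le m$. On every block $j\notin F_t$, $K_t$ acts as $V_\Phi^{\otimes k}$, so $\widetilde K_{t,l}$ acts there as $V_\Lambda$. On the blocks in $F_t$ it is some operator $Q^{|F_t|}\to(QF)^{|F_t|}$. If $|F_t|<r$, I absorb further $V_\Lambda$ factors into the arbitrary part. A block permutation then brings $\widetilde K_{t,l}$ into the form $U^\pi(V_\Lambda^{\otimes(m-r)}\otimes L^{(r)})(U^\pi)^\dagger$.

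By \zcref{lem:Defect subspace correspondence}, the support of $J(\widetilde{\mathcal M})$ lies in $\operatorname{span}\mathcal V\bigl(\mathcal H_{QFQ'}^{\otimes m},|V_\Lambda\rangle\!\rangle^{\otimes(m-r)}\bigr)$. Together with Step 3, $J(\widetilde{\mathcal M})$ is a permutation-invariant extension of $J(\mathcal M^{(m)})$ satisfying \zcref{def:MSR-state}. Hence $\mathcal M^{(m)}$ is $\binom{m}{r}$-Choi-almost-iid along $\Lambda$.
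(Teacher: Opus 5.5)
Your proposal is correct and follows essentially the same route as the paper's proof. Like the paper, you combine Stinespring isometries of $\mathcal E$, $\mathcal D$ and $\Phi$ into a dilation of $\Lambda$, build the blockwise extension of $\mathcal M^{(m)}$, check covariance under block permutations, and contract the discarded tail in an orthonormal basis to get operators (your $\widetilde K_{t,l}$, the paper's $\widehat K_{t,z}$) that act as the dilation of $\Lambda$ on every block untouched by the at most $r$ defect positions, concluding via the defect-subspace vectorization correspondence.
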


The condition $r\leq m$ is natural for the blocking argument. In the
asymptotic application, it is satisfied for all sufficiently large
$n$, since $r_n=o(n)$ whereas $m=\lfloor n/k\rfloor$ for fixed $k$.

\begin{proof}
See \zcref{app:proof-of-lem-blocking}.
\end{proof}

Having reduced the original channel to a Choi-almost-iid channel along
an effective reference channel $\Lambda$, we next show that, whenever
$\Lambda$ is sufficiently close to the identity, an outer quantum
error-correcting code converts the induced channel into one that is
close to the identity on the encoded space.
\begin{proposition}[(Outer quantum error correction)]
\label{Prop:outer-QEC}
Let $\ell\geq1$ be an integer, let
$Q\simeq(\mathbb C^2)^{\otimes\ell}$, and let
$0\leq r\leq t\leq m$ be integers. Suppose that:
\begin{enumerate}
\item[(i)] $\Lambda\in\mathrm{CPTP}(Q,Q)$ satisfies
\(
    \|\Lambda-\idc_Q\|_\diamond
    \leq
    \varepsilon
    <1/2;
\)

\item[(ii)] $\mathcal M^{(m)}\in\mathrm{CPTP}(Q^m,Q^m)$ is an
$\binom{m}{r}$-Choi-almost-iid channel along $\Lambda$;

\item[(iii)] there exists an
$((\ell m,|M|,2\ell t+1))$ quantum error-correcting code with encoding
isometry $U:M\to Q^m$ and recovery channel
$\mathcal R\in\mathrm{CPTP}(Q^m,M)$.
\end{enumerate}
Define $\mathcal U\coloneqq\Ad_U$. Then
\begin{equation}
    \frac12
    \left\|
        \mathcal R\circ\mathcal M^{(m)}\circ\mathcal U
        -
        \idc_M
    \right\|_\diamond
    \leq
    2^{\,m+1+\frac{5\ell-2}{2}r}
    \binom{m}{r}^{1/2}
    (2\varepsilon)^{\frac{t-r+1}{2}}.
    \label{Eq:outer-QEC-bound}
\end{equation}
\end{proposition}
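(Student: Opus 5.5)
We follow the classical robustness proof (\zcref{Prop:classical-outer}), with the outer quantum code in place of the Hamming-distance code. The idea is to decompose the noise into Pauli-type errors on the qubits $Q\simeq(\mathbb C^2)^{\otimes\ell}$ of each block. Errors of bounded block weight are corrected exactly, and the rest is controlled by a tail bound.

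\textbf{Step 1: representation of $\mathcal M^{(m)}$.} Fix a Stinespring isometry $V_\Lambda:Q\to QE$ of $\Lambda$ and apply \zcref{Prop:alternative-characterization} to get
\[
\mathcal M^{(m)}(X)=\Tr_{E^m}\sum_{u,u'}\beta_{uu'}K_uXK_{u'}^\dagger ,
\qquad K_u\in\mathcal K\bigl(Q^m\to Q^mE^m,V_\Lambda^{\otimes(m-r)}\bigr).
\]
Let $\{P_a\}$ be the normalized Pauli basis of $\mathcal L(Q)$, with $P_0=\id_Q$. Expand $V_\Lambda=\sum_a P_a\otimes|e_a\rangle$, where $|e_a\rangle\in E$ are unnormalized vectors. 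Since $\|\Lambda-\idc_Q\|_\diamond\le\varepsilon$, the Stinespring continuity (Kretschmann--Schlingemann--Werner) gives a choice of $V_\Lambda$ with $\|V_\Lambda-\id_Q\otimes|e\rangle\|_\infty\le\sqrt{2\varepsilon}$. Equivalently, the non-identity part
\[
V_\Lambda^{\perp}\coloneqq\sum_{a\ne0}P_a\otimes|e_a\rangle
\]
satisfies $\|V_\Lambda^{\perp}\|_\infty\lesssim\sqrt{2\varepsilon}$.

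\textbf{Step 2: split each $K_u$.} In each $K_u$, write every $V_\Lambda$ factor outside the defect set $G_u$ as $(P_0\otimes|e_0\rangle)+V_\Lambda^\perp$, and expand the defect factor $L_u^{(r)}$ in Paulis. Group the terms by $S$, the set of non-defect blocks where $V_\Lambda^\perp$ was selected. Let $K_u^{\le}$ be the sum over terms with $|S|\le t-r$. Each such term is a Pauli error of block weight at most $t$, hence qubit weight at most $\ell t$, tensored with an environment vector. By (iii), which gives distance $2\ell t+1$ and hence Knill--Laflamme correction of all errors of weight at most $\ell t$ (see \zcref{App:QEC}), every operator in $\operatorname{span}$ of these terms satisfies $\mathcal R$-correctability. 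Define
\[
\widetilde{\mathcal M}(X)\coloneqq\Tr_{E^m}\sum_{u,u'}\beta_{uu'}K_u^{\le}XK_{u'}^{\le\dagger}.
\]
This is a CP map whose Kraus operators, after diagonalizing $\beta$, lie in the correctable span, so $\mathcal R\circ\widetilde{\mathcal M}\circ\mathcal U=c\cdot\idc_M$ on the code, up to trace normalization.

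\textbf{Step 3: bound the remainder.} It remains to control the difference between $\mathcal M^{(m)}$ and $\widetilde{\mathcal M}$, which is where the tail enters. Vectorize the Choi state of the extension $\mathcal M_{Q^m\to Q^mE^m}$ acting on $U\otimes\id$ applied to a maximally entangled reference state. As in \zcref{Prop:classical-outer}, use pinching over the defect positions $F$ in an orthonormal basis of the defect subspace. By \zcref{lem:Orthonormal basis of the defect subspace}, this basis has size at most $\binom mr(d_Q^2d_E)^r$, and $d_E\le d_Q^2$ for a minimal dilation. Taking square roots then yields the factors $\binom mr^{1/2}$ and the power of $2$ in $r$ appearing in the bound. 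For each basis term, the weight of high-$|S|$ components is at most $\sum_{|S|>t-r}\|V^\perp\|^{2|S|}\le 2^m(2\varepsilon)^{t-r+1}$.

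The norm of the discarded component $\|\Pi_{\mathrm{bad}}\,\tau\|$ is then bounded by the square root of this quantity times the pinching factor. A gentle-measurement / Fuchs--van de Graaf step converts it into diamond distance, which gives the $1/2$ exponents and the overall $2^{m+1}$. The trace-deficit of $\widetilde{\mathcal M}$ is absorbed in the same estimate.

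\textbf{Main obstacle.} I expect Step 3 to be the hardest. The cross terms $\beta_{uu'}$ mix different defect locations, and the cleanly truncated map must be compared coherently, via a fidelity/purification argument, rather than through POVM statistics as in the classical case. My first attempt will be to work directly with the pure state $(\sum_u\sqrt{\beta}\ldots)$, obtained by purifying the Choi state in the defect basis. I will project it onto the correctable subspace with an orthogonal projector defined in the product Pauli$\otimes$environment basis, then apply the pinching inequality to that projector's complement, exactly mirroring \zcref{Eq:classical-pinching}.
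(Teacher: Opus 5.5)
\textbf{Verdict: genuine gap in Step 3.} Your Steps 1--2 have the right architecture. The low-weight part is corrected exactly via Knill--Laflamme, and diagonalizing $\beta$ gives Kraus operators in the correctable span. The problem is Step 3, which carries all the quantitative content.

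\textbf{Why the orthogonal-weight argument fails.} The ``weight of high-$|S|$ components'' $\sum_{|S|>t-r}\|V^\perp\|^{2|S|}$ assumes the branches are mutually orthogonal, so that squared norms add. That holds only in the full Choi picture, where the input is maximally entangled with all of $Q^m$ and the vectorized Pauli$\otimes$environment basis is orthonormal. You must instead feed in code states through $U$, since $\mathcal R$ only corrects on $\operatorname{Im}U$. Then the vectors $K_a^{\le t}U|\phi\rangle$ and $K_a^{>t}U|\phi\rangle$ are not orthogonal. Nor are they separated by any orthogonal projector on the output, because the split $K_a=K_a^{\le t}+K_a^{>t}$ comes from algebraically expanding $V^{\otimes(m-r)}$. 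Already for $m=1$, $r=t=0$, you would need a projector $P$ with $PVU=V_0U$, which fails unless $\Delta U$ has no component on $\operatorname{Im}U\otimes|0\rangle$. So the $\Pi_{\mathrm{bad}}$ you want to pinch against does not exist in the relevant space.

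\textbf{Why a single-input bound is not enough.} Even a valid fidelity bound for the one input $(U\otimes\id)|\Omega\rangle_{MM'}$ is only an entanglement-fidelity statement. Fuchs--van de Graaf on a single input does not give the diamond norm. The generic conversion costs a factor $|M|$, or $d^m$ if you go through the Choi state of $\mathcal M^{(m)}$, which is exponential in $m$ and incompatible with the stated bound.

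\textbf{The constants do not close.}
\begin{itemize}
\item The square root of $2^m(2\varepsilon)^{t-r+1}$ is $2^{m/2}(2\varepsilon)^{(t-r+1)/2}$, not anything producing $2^{m+1}$.
\item The factor $d^{r/2}$ coming from the defect operators is missing.
\item The Pauli non-identity part only satisfies $\|V^\perp\|_\infty\le 2\|V-\id\otimes|e\rangle\|_\infty\le 2\sqrt{2\varepsilon}$, so ``equivalently'' is not justified.
\end{itemize}

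\textbf{What closes the gap: an operator-norm bound on a Stinespring isometry.} This is uniform over all inputs and absorbs the $\beta_{uu'}$ cross terms automatically.
\begin{itemize}
\item Take $V=V_0+\Delta$ with $V_0=\id_Q\otimes|0\rangle$ and $\|\Delta\|_\infty\le\sqrt{2\varepsilon}$. Use the fixed-Stinespring corollary so the defect decomposition is with respect to this particular minimal $V$; Proposition~\ref{Prop:alternative-characterization} alone only gives some dilation.
\item No Pauli expansion is needed. Contracting $E^m$ with $\langle z|$ turns every $V_0$ factor into the identity, so any term supported on at most $t$ blocks lies in $\mathcal E^{(\ell m)}_{\ell t}$.
\item Orthonormality of the $K_a$ gives $\|L_a\|_\infty\le\|L_a\|_2=d^{r/2}$. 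The triangle inequality then gives $\|K_a^{>t}\|_\infty\le d^{r/2}\sum_{j\ge t-r+1}\binom{m-r}{j}(2\varepsilon)^{j/2}\le 2^{m-r}d^{r/2}(2\varepsilon)^{(t-r+1)/2}$.
\item Write the dilation as $W=\sum_a K_a\otimes|v_a\rangle$ with $\sum_a\|v_a\|^2=\Tr\beta=1$. Cauchy--Schwarz gives $\|W^{>t}\|_\infty\le|\mathcal T|^{1/2}\max_a\|K_a^{>t}\|_\infty$. The factor $|\mathcal T|^{1/2}\le\binom mr^{1/2}d^{2r}$ is the ``square root of the pinching factor'' you were after. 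Together with $d^{r/2}$ and $2^{m-r}$ it yields $2^{m+\frac{5\ell-2}{2}r}$.
\item Finally, $(\widehat R\otimes\id)WU=\id_M\otimes|\xi\rangle+(\widehat R\otimes\id)W^{>t}U$. After normalizing $\xi$, the distance to $\id_M\otimes|\widehat\xi\rangle$ is at most $2\|W^{>t}\|_\infty$. The upper Kretschmann--Schlingemann--Werner bound turns this directly into the diamond-norm estimate. The exponent $\tfrac12$ comes from $\|\Delta\|_\infty\le\sqrt{2\varepsilon}$, not from a Fuchs--van de Graaf step.
\end{itemize}
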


\begin{proof}
See~\zcref{app:proof-of-QEC-Prop}.
\end{proof}

We can now combine the block-coding argument with the outer quantum
error-correcting code to obtain the main robustness result.

\begin{theorem}[(Universal robustness of quantum communication)]
\label{Thm:quantum-robustness}
Let $\Phi\in\mathrm{CPTP}(A,B)$. Then
\begin{equation}
    Q_{\mathrm{univ}}(\mathfrak S_\Phi)
    =
    \inf_{\mathbf N\in\mathfrak S_\Phi}
    Q(\mathbf N)
    =
    Q(\Phi).
\end{equation}
\end{theorem}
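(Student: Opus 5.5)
The converse runs exactly as in \zcref{Thm:classical-robustness}. The memoryless sequence $\{\Phi^{\otimes n}\}$ lies in $\mathfrak S_\Phi$ (take $r_n=0$), so
\[
    Q_{\mathrm{univ}}(\mathfrak S_\Phi)\leq\inf_{\mathbf N\in\mathfrak S_\Phi}Q(\mathbf N)\leq Q(\Phi),
\]
as already noted after \zcref{Def:universal-quantum-capacity}. If $Q(\Phi)=0$ the theorem follows at once. Otherwise I would fix $0\leq R<R_0<Q(\Phi)$ and show that $R$ is universally achievable. This gives $Q_{\mathrm{univ}}(\mathfrak S_\Phi)\geq Q(\Phi)$ and hence the three-way equality.

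\emph{Code parameters.} First choose $\tau\in(0,1/4)$ so small that the quantum Gilbert--Varshamov rate loss at relative distance $2\tau$ keeps the net rate above $R$. Concretely, I expect that for $\ell$ qubits per block, codes $((\ell m,|M|,2\ell t_m+1))$ with $t_m=\lfloor\tau m\rfloor$ exist with
\[
    \log|M|\geq \ell m\bigl(1-2c\,h(2\tau)\bigr)-o(m)
\]
for an absolute constant $c$ (taken from \zcref{App:QEC}), and that multiplying this by $R_0\geq\ell/k$ still gives more than $R$.

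\emph{Inner code.} By the LSD theorem, pick $k$ and an inner code $(\mathcal E,\mathcal D)$ for $\Phi^{\otimes k}$ on $Q\simeq(\mathbb C^2)^{\otimes\ell}$, with $\ell/k\geq R_0$ and $\|\Lambda-\idc_Q\|_\diamond\leq\varepsilon$ for $\Lambda=\mathcal D\circ\Phi^{\otimes k}\circ\mathcal E$. Since $\Lambda$ is $\varepsilon$-close to the identity, rounding the dimension down to a power of two should be harmless. The key point is that once $\ell$ and $k$ are fixed, $\varepsilon$ can still be made small independently, provided $\varepsilon$ is chosen after $\ell$. This is the main subtlety. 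In \eqref{Eq:outer-QEC-bound}, the normalized logarithm of the bound is
\[
    \frac1m+\frac{1}{2}h\!\left(\frac{r}{m}\right)+\frac{(5\ell-2)r}{2m}+\frac{t_m-r+1}{2m}\log(2\varepsilon)\;\longrightarrow\;1+\frac{\tau}{2}\log(2\varepsilon),
\]
using $r=r_n=o(m)$. This limit is negative provided $\varepsilon<2^{-1-2/\tau}$.

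The ordering therefore has to be: fix $\tau$ from the rate loss, then fix the target $\varepsilon$, then choose $(k,\ell,\mathcal E,\mathcal D)$ meeting both $\ell/k\geq R_0$ and error $\leq\varepsilon$. LSD permits this, since for every $\varepsilon>0$ there are codes of rate $\geq R_0$ with error $\leq\varepsilon$. The $\ell$-dependence in the exponent is harmless because it multiplies $r/m\to0$.

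\emph{Full code.} At blocklength $n=mk+s$, encode with $\mathcal E^{\otimes m}\circ\mathcal U$ followed by padding with $\varphi^{\otimes s}$. Decode by discarding $B^s$, applying $\mathcal D^{\otimes m}$, then $\mathcal R$. None of this depends on $\mathbf N$. For any $\mathbf N\in\mathfrak S_\Phi$ with witness $r_n=o(n)$, we have $r_n\leq t_m\leq m$ for large $n$. \zcref{lem:blocking} then shows that the induced channel $\mathcal M^{(m)}$ is $\binom{m}{r_n}$-Choi-almost-iid along $\Lambda$. \zcref{Prop:outer-QEC} bounds the diamond error by the quantity in \eqref{Eq:outer-QEC-bound}, which by the limit above decays like $2^{-\gamma m}$ with $\gamma>0$.

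\emph{Rate.} We get
\[
    \frac1n\log|M|\to\frac{\ell}{k}\bigl(1-2c\,h(2\tau)\bigr)>R,
\]
and the codes are chosen independently of $\mathbf N$, so $R$ is universally achievable. The step I expect to need the most care is the precise form of the quantum GV bound from \zcref{App:QEC}, in particular keeping its rate loss independent of $\ell$. That is necessary because $\ell$ is chosen after $\tau$.
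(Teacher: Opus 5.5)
Your proposal is correct and follows the paper's proof essentially step for step: the converse via the memoryless sequence, the same parameter order ($\tau$ from the Gilbert--Varshamov rate loss, then $\varepsilon$ with $1+\frac{\tau}{2}\log 2\varepsilon<0$, then the inner code), and the same concatenation through the blocking lemma and the outer quantum error-correction proposition. The points you flag work out as you expect: the paper's GV rate is $1-2\tau\log 3-h(2\tau)$, which is at least $1-2h(2\tau)$ once $2\tau\leq 1/3$ (so your $c=1$), and it is independent of $\ell$ because the relative distance $(2\ell t_m+1)/(\ell m)\to 2\tau$; the rounding to $\ell=\lceil kR_0\rceil$ qubits uses an intermediate rate $R_0<R_1<Q(\Phi)$ together with an isometric embedding and a retraction, which gives $\ell/k\geq R_0$ (your first mention of this inequality has it reversed).
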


\begin{proof}
For the converse, observe that the memoryless sequence
$\{\Phi^{\otimes n}\}_{n\in\mathbb N}$ belongs to
$\mathfrak S_\Phi$, by taking $r_n=0$ for every $n$. Hence every rate
that is universally achievable for $\mathfrak S_\Phi$ must be
achievable for the memoryless sequence. Therefore,
\begin{equation}
    Q_{\mathrm{univ}}(\mathfrak S_\Phi)
    \leq
    Q\!\left(\{\Phi^{\otimes n}\}_{n\in\mathbb N}\right)
    =
    Q(\Phi).
    \label{eq:quantum-universal-converse}
\end{equation}

We now prove achievability. {If $Q(\Phi)=0$, then the converse above and non-negativity imply
\begin{equation}
    Q_{\mathrm{univ}}(\mathfrak S_\Phi)=0.
\end{equation}
Moreover, since the memoryless sequence
$\{\Phi^{\otimes n}\}_{n\in\mathbb N}$ belongs to
$\mathfrak S_\Phi$,
\begin{equation}
    0
    \leq
    \inf_{\mathbf N\in\mathfrak S_\Phi} Q(\mathbf N)
    \leq
    Q(\Phi)
    =
    0.
\end{equation}
Thus the claimed three-way equality holds in this case. Assume
henceforth that $Q(\Phi)>0$, and fix
\begin{equation}
    0\leq R<R_0<R_1<Q(\Phi).
\end{equation}}
Let
\begin{equation}
    h(x)
    \coloneqq
    -x\log x-(1-x)\log(1-x)
\end{equation}
denote the binary entropy, and, for $0\leq\tau<1/2$, define
\begin{equation}
    g(\tau)
    \coloneqq
    1-2\tau\log 3-h(2\tau).
\end{equation}
Since $g(\tau)\to1$ as $\tau\to0^+$, we may choose
$0<\tau<1/4$ sufficiently small that
\begin{equation}
    g(\tau)
    >
    \frac{R}{R_0}.
\end{equation}
We may then choose $\gamma$ such that
\begin{equation}
    \frac{R}{R_0}
    <
    \gamma
    <
    g(\tau).
\end{equation}
Finally, choose $0<\varepsilon<1/2$ sufficiently small that
\begin{equation}
    1+\frac{\tau}{2}\log2\varepsilon
    <
    0.
    \label{Eq:quantum-epsilon-choice}
\end{equation}
{Since the set of achievable rates is downward closed,
$R_1<Q(\Phi)$ is achievable. Hence there exists a sequence of
quantum communication codes for $\Phi^{\otimes k}$ with message
systems $Q'_k$, encoding and decoding channels
\begin{equation}
    \mathcal E'_k\in\mathrm{CPTP}(Q'_k,A^k),
    \qquad
    \mathcal D'_k\in\mathrm{CPTP}(B^k,Q'_k),
\end{equation}
and effective channels
\begin{equation}
    \Lambda'_k
    \coloneqq
    \mathcal D'_k
    \circ
    \Phi^{\otimes k}
    \circ
    \mathcal E'_k,
\end{equation}
such that, writing
$d'_k\coloneqq\dim\mathcal H_{Q'_k}$,
\begin{equation}
    \liminf_{k\to\infty}
    \frac{1}{k}\log d'_k
    \geq
    R_1,
    \qquad
    \lim_{k\to\infty}
    \left\|
        \Lambda'_k-\idc_{Q'_k}
    \right\|_\diamond
    =
    0.
\end{equation}

Since $R_0<R_1$, we may choose $k$ sufficiently large that
\begin{equation}
    d'_k
    \geq
    2^{\lceil kR_0\rceil},
    \qquad
    \left\|
        \Lambda'_k-\idc_{Q'_k}
    \right\|_\diamond
    \leq
    \varepsilon.
\end{equation}
Set
\begin{equation}
    \ell
    \coloneqq
    \lceil kR_0\rceil
\end{equation}
and let
\begin{equation}
    Q\simeq(\mathbb C^2)^{\otimes\ell}.
\end{equation}
Choose an isometry
$\iota:Q\to Q'_k$ and define the embedding channel
$\mathcal I\in\mathrm{CPTP}(Q,Q'_k)$ by
\begin{equation}
    \mathcal I(\rho)
    \coloneqq
    \iota\rho\iota^\dagger.
\end{equation}
Fix an arbitrary state $\tau_Q\in\mathcal D(Q)$ and define a channel
$\mathcal P\in\mathrm{CPTP}(Q'_k,Q)$ by
\begin{equation}
    \mathcal P(X)
    \coloneqq
    \iota^\dagger X\iota
    +
    \Tr\!\left[
        \bigl(
            \id_{Q'_k}-\iota\iota^\dagger
        \bigr)X
    \right]
    \tau_Q,
    \qquad
    X\in\mathcal L(Q'_k).
\end{equation}
This channel is a retraction of $\mathcal I$, in the sense that
\begin{equation}
    \mathcal P\circ\mathcal I
    =
    \idc_Q.
\end{equation}

Define the restricted encoding and decoding channels by
\begin{equation}
    \mathcal E
    \coloneqq
    \mathcal E'_k\circ\mathcal I,
    \qquad
    \mathcal D
    \coloneqq
    \mathcal P\circ\mathcal D'_k.
\end{equation}
The corresponding effective channel is
\begin{equation}
    \Lambda
    \coloneqq
    \mathcal D
    \circ
    \Phi^{\otimes k}
    \circ
    \mathcal E
    =
    \mathcal P
    \circ
    \Lambda'_k
    \circ
    \mathcal I.
\end{equation}
Since $\mathcal P\circ\mathcal I=\idc_Q$ and the diamond norm is
contractive under pre- and postcomposition with quantum channels,
\begin{align}
    \left\|
        \Lambda-\idc_Q
    \right\|_\diamond
    &=
    \left\|
        \mathcal P
        \circ
        \bigl(
            \Lambda'_k-\idc_{Q'_k}
        \bigr)
        \circ
        \mathcal I
    \right\|_\diamond
    \\
    &\leq
    \left\|
        \Lambda'_k-\idc_{Q'_k}
    \right\|_\diamond
    \leq
    \varepsilon.
\end{align}
Moreover, $\ell/k\geq R_0$. We have therefore obtained
$\ell\in\mathbb N$, a system
$Q\simeq(\mathbb C^2)^{\otimes\ell}$, and channels
$\mathcal E\in\mathrm{CPTP}(Q,A^k)$ and
$\mathcal D\in\mathrm{CPTP}(B^k,Q)$ such that
\begin{equation}
    \frac{\ell}{k}
    \geq
    R_0,
    \qquad
    \left\|
        \Lambda-\idc_Q
    \right\|_\diamond
    \leq
    \varepsilon,
    \qquad
    \Lambda
    \coloneqq
    \mathcal D
    \circ
    \Phi^{\otimes k}
    \circ
    \mathcal E.
    \label{Eq:robust-inner-code}
\end{equation}}

For each sufficiently large $n$, let
\begin{equation}
    m
    \coloneqq
    \left\lfloor\frac{n}{k}\right\rfloor,
    \qquad
    n=mk+s,
    \qquad
    0\leq s<k,
\end{equation}
and set
\begin{equation}
    t_m
    \coloneqq
    \lfloor\tau m\rfloor,
    \qquad
    q_n
    \coloneqq
    \lfloor\gamma\ell m\rfloor.
\end{equation}
By the quantum Gilbert--Varshamov bound
(\zcref{Thm:quantum-GV}), an
$((\ell m,2^{q_n},2\ell t_m+1))$ quantum error-correcting code exists
whenever
\begin{equation}
    \frac{q_n}{\ell m}
    \leq
    1-
    \frac{2\ell t_m+1}{\ell m}\log3
    -
    h\!\left(
        \frac{2\ell t_m+1}{\ell m}
    \right)
    =
    g\!\left(
        \frac{2\ell t_m+1}{2\ell m}
    \right).
\end{equation}
The right-hand side converges to $g(\tau)$, whereas
$q_n/(\ell m)\to\gamma<g(\tau)$. Thus such a code exists for every
sufficiently large $n$. Fix one such code for each of these
blocklengths. For the finitely many remaining values of $n$, choose an
arbitrary one-dimensional code. The resulting message dimensions
satisfy
\begin{equation}
    \lim_{n\to\infty}
    \frac{1}{n}\log|M_n|
    =
    \lim_{n\to\infty}
    \frac{q_n}{n}
    =
    \gamma\frac{\ell}{k}
    \geq
    \gamma R_0
    >
    R.
    \label{Eq:robust-quantum-rate}
\end{equation}

For every sufficiently large $n$, let
$U_m:M_n\to Q^m$ be the encoding isometry and
$\mathcal R_m\in\mathrm{CPTP}(Q^m,M_n)$ the recovery channel of the
chosen $((\ell m,2^{q_n},2\ell t_m+1))$ code, and define
$\mathcal U_m\coloneqq\Ad_{U_m}$. Fix a pure state
$\varphi_A\coloneqq|\varphi\rangle\langle\varphi|_A$. Define the
concatenated encoding and decoding channels by
\begin{align}
    \mathcal E^{(n)}_{M_n\to A^n}(\rho)
    &\coloneqq
    \mathcal E^{\otimes m}
    \bigl(
        U_m\rho U_m^\dagger
    \bigr)
    \otimes
    \varphi_A^{\otimes s},
    \label{eq:full-enc}\\
    \mathcal D^{(n)}_{B^n\to M_n}(\sigma)
    &\coloneqq
    \mathcal R_m
    \left[
        \mathcal D^{\otimes m}
        \bigl(
            \Tr_{B^s}\sigma
        \bigr)
    \right].
    \label{eq:full-dec}
\end{align}

Now fix an arbitrary
$\mathbf N=\{\mathcal N^{(n)}\}_{n\in\mathbb N}
\in\mathfrak S_\Phi$, and let $r_n=o(n)$ be a witnessing sequence such
that $\mathcal N^{(n)}$ is an
$\binom{n}{r_n}$-Choi-almost-iid channel along $\Phi$. Since
\begin{equation}
    \frac{m}{n}
    \longrightarrow
    \frac{1}{k},
    \qquad
    \frac{t_m}{m}
    \longrightarrow
    \tau,
    \qquad
    \frac{r_n}{m}
    \longrightarrow
    0,
\end{equation}
we have $r_n\leq t_m\leq m$ for all sufficiently large $n$.

Define the induced channel
\begin{equation}
    \mathcal M^{(m)}_{Q^m\to Q^m}(\rho)
    \coloneqq
    \mathcal D^{\otimes m}
    \left[
        \Tr_{B^s}
        \mathcal N^{(n)}
        \left(
            \mathcal E^{\otimes m}(\rho)
            \otimes
            \varphi_A^{\otimes s}
        \right)
    \right].
\end{equation}
By~\zcref{lem:blocking}, $\mathcal M^{(m)}$ is an
$\binom{m}{r_n}$-Choi-almost-iid channel along $\Lambda$. The three
conditions of~\zcref{Prop:outer-QEC} are therefore satisfied for all
sufficiently large $n$, and that proposition gives
\begin{align}
    \frac12
    \left\|
        \mathcal D^{(n)}
        \circ\mathcal N^{(n)}
        \circ\mathcal E^{(n)}
        -
        \idc_{M_n}
    \right\|_\diamond
    &=
    \frac12
    \left\|
        \mathcal R_m
        \circ\mathcal M^{(m)}
        \circ\mathcal U_m
        -
        \idc_{M_n}
    \right\|_\diamond
    \nonumber\\
    &\leq
    2^{\,m+1+\frac{5\ell-2}{2}r_n}
    \binom{m}{r_n}^{1/2}
    (2\varepsilon)^{\frac{t_m-r_n+1}{2}}.
    \label{Eq:robust-final-bound}
\end{align}

Let $B_n$ denote the right-hand side of
\zcref{Eq:robust-final-bound}. Using
\begin{equation}
    \binom{m}{r_n}
    \leq
    2^{m h(r_n/m)},
\end{equation}
we obtain
\begin{align}
    \frac{1}{m}\log B_n
    \leq{}&
    1+
    \frac{1}{m}
    +
    \frac{5\ell-2}{2}\frac{r_n}{m}
    +
    \frac12 h\!\left(\frac{r_n}{m}\right)
    \nonumber\\
    &+
    \frac12
    \left(
        \frac{t_m}{m}
        -
        \frac{r_n}{m}
        +
        \frac{1}{m}
    \right)
    \log2\varepsilon.
\end{align}
Consequently,
\begin{equation}
    \limsup_{n\to\infty}
    \frac{1}{m}\log B_n
    \leq
    1+
    \frac{\tau}{2}\log2\varepsilon
    <
    0,
\end{equation}
where the last inequality is
\zcref{Eq:quantum-epsilon-choice}. Hence $B_n$ decays exponentially in
$m$ and, in particular, tends to zero.

Thus, for every fixed $\mathbf N\in\mathfrak S_\Phi$, the transmission
error tends to zero. The blocklength at which the preceding estimates
become effective may depend on $\mathbf N$, through its witnessing
sequence $r_n$, but the encoders and decoders do not. Together
with~\zcref{Eq:robust-quantum-rate}, this proves that every
$R<Q(\Phi)$ is universally achievable. Therefore,
\begin{equation}
    Q_{\mathrm{univ}}(\mathfrak S_\Phi)
    \geq
    Q(\Phi).
\end{equation}
Combining this with the converse proves
\begin{equation}
    Q_{\mathrm{univ}}(\mathfrak S_\Phi)
    =
    Q(\Phi).
\end{equation}
Moreover, universal achievability implies individual achievability,
and therefore
\begin{equation}
    Q(\Phi)
    =
    Q_{\mathrm{univ}}(\mathfrak S_\Phi)
    \leq
    \inf_{\mathbf N\in\mathfrak S_\Phi}
    Q(\mathbf N).
    \end{equation}
Conversely, since the memoryless sequence
$\{\Phi^{\otimes n}\}_{n\in\mathbb N}$ belongs to
$\mathfrak S_\Phi$,
\begin{equation}
    \inf_{\mathbf N\in\mathfrak S_\Phi}
    Q(\mathbf N)
    \leq
    Q(\Phi).
\end{equation}
Consequently,
\begin{equation}
    Q_{\mathrm{univ}}(\mathfrak S_\Phi)
    =
    \inf_{\mathbf N\in\mathfrak S_\Phi}
    Q(\mathbf N)
    =
    Q(\Phi),
\end{equation}
which proves the theorem.
\end{proof}}
\newpage
\section*{Acknowledgements}
BB and ND are supported by the Engineering and Physical Sciences Research Council [Grant Ref: EP/Y028732/1]. LY is also supported by the Engineering and Physical Sciences Research Council. The authors thank Filippo Girardi for initial discussions.

\textbf{Statement on AI usage.} OpenAI's GPT-5.6 Sol was used to assist with brainstorming and exploring proof strategies during the development of this work. It was also used to transcribe handwritten proofs into \LaTeX. The organization and exposition of the results are due to the authors, who take full responsibility for the contents in the manuscript.
%%%%%%%%%%%%%%%%%%%%%%%%%%%%%%%%%%%%%%%%%%%%%%%%%%%%%%%%%%%%%%%%%%%%%%%%%%%%%%%%
\bibliography{main}

@article{mazzola2026almost,
  title={Almost-iid information theory},
  author={Mazzola, Giulia and Sutter, David and Renner, Renato},
  journal={arXiv preprint arXiv:2603.15792},
  year={2026}
}

@unpublished{gottesman2024surviving,
  title = {Surviving as a Quantum Computer in a Classical World},
  author = {Gottesman, Daniel},
  note = {Textbook manuscript, 2024 draft},
  year = {2024},
  url = {https://www.cs.umd.edu/class/spring2024/cmsc858G/QECCbook-2024-ch1-15.pdf}
}

@article{lloyd1997capacity,
  title={Capacity of the noisy quantum channel},
  author={Lloyd, Seth},
  journal={Physical Review A},
  volume={55},
  number={3},
  pages={1613},
  year={1997},
  publisher={APS}
}

@unpublished{shor2002quantum,
  title = {The quantum channel capacity and coherent information},
  author = {Shor, Peter W.},
  note = {Lecture notes, MSRI Workshop on Quantum Computation},
  year = {2002}
}

@article{devetak2005private,
  title={The private classical capacity and quantum capacity of a quantum channel},
  author={Devetak, Igor},
  journal={IEEE Transactions on Information Theory},
  volume={51},
  number={1},
  pages={44--55},
  year={2005},
  publisher={IEEE}
}

@article{kretschmann2008information,
  title={The information-disturbance tradeoff and the continuity of Stinespring's representation},
  author={Kretschmann, Dennis and Schlingemann, Dirk and Werner, Reinhard F},
  journal={IEEE transactions on information theory},
  volume={54},
  number={4},
  pages={1708--1717},
  year={2008},
  publisher={IEEE}
}

@article{mazzola2026robust,
  title={Robust generalized quantum Stein's lemma},
  author={Mazzola, Giulia and Sutter, David and Renner, Renato},
  journal={arXiv preprint arXiv:2605.16500},
  year={2026}
}

@article{girardi2026new,
  title={New approaches to almost iid information theory},
  author={Girardi, Filippo and De Palma, Giacomo and Lami, Ludovico},
  journal={arXiv preprint arXiv:2605.15114},
  year={2026}
}

@article{sorensen2001entanglement,
  title={Entanglement and extreme spin squeezing},
  author={S{\o}rensen, Anders S and M{\o}lmer, Klaus},
  journal={Physical review letters},
  volume={86},
  number={20},
  pages={4431},
  year={2001},
  publisher={APS}
}

@article{de2021quantum,
  title={The quantum Wasserstein distance of order 1},
  author={De Palma, Giacomo and Marvian, Milad and Trevisan, Dario and Lloyd, Seth},
  journal={IEEE Transactions on Information Theory},
  volume={67},
  number={10},
  pages={6627--6643},
  year={2021},
  publisher={IEEE}
}

@article{holevo1998capacity,
  title={The capacity of the quantum channel with general signal states},
  author={Holevo, Alexander S},
  journal={IEEE Transactions on Information Theory},
  volume={44},
  number={1},
  pages={269--273},
  year={1998},
  publisher={IEEE}
}

@article{schumacher1997sending,
  title={Sending classical information via noisy quantum channels},
  author={Schumacher, Benjamin and Westmoreland, Michael D},
  journal={Physical Review A},
  volume={56},
  number={1},
  pages={131},
  year={1997},
  publisher={APS}
}

@article{hayashi2001optimal,
  title={Optimal sequence of POVMs in the sense of Stein's lemma in quantum hypothesis testing},
  author={Hayashi, Masahito},
  journal={arXiv preprint quant-ph/0107004},
  year={2001}
}

@article{christandl2007one,
  title={One-and-a-half quantum de Finetti theorems},
  author={Christandl, Matthias and K{\"o}nig, Robert and Mitchison, Graeme and Renner, Renato},
  journal={Communications in mathematical physics},
  volume={273},
  number={2},
  pages={473--498},
  year={2007},
  publisher={Springer}
}

@article{schumacher1995quantum,
  title={Quantum coding},
  author={Schumacher, Benjamin},
  journal={Physical Review A},
  volume={51},
  number={4},
  pages={2738},
  year={1995},
  publisher={APS}
}

@article{bennett1996concentrating,
  title={Concentrating partial entanglement by local operations},
  author={Bennett, Charles H and Bernstein, Herbert J and Popescu, Sandu and Schumacher, Benjamin},
  journal={Physical Review A},
  volume={53},
  number={4},
  pages={2046},
  year={1996},
  publisher={APS}
}

@article{horodecki2007quantum,
  title={Quantum state merging and negative information},
  author={Horodecki, Micha{\l} and Oppenheim, Jonathan and Winter, Andreas},
  journal={Communications in Mathematical Physics},
  volume={269},
  number={1},
  pages={107--136},
  year={2007},
  publisher={Springer}
}

@article{devetak2008exact,
  title={Exact cost of redistributing multipartite quantum states},
  author={Devetak, Igor and Yard, Jon},
  journal={Physical Review Letters},
  volume={100},
  number={23},
  pages={230501},
  year={2008},
  publisher={APS}
}

@article{bennett2002entanglement,
  title={Entanglement-assisted capacity of a quantum channel and the reverse Shannon theorem},
  author={Bennett, Charles H and Shor, Peter W and Smolin, John A and Thapliyal, Ashish V},
  journal={IEEE transactions on Information Theory},
  volume={48},
  number={10},
  pages={2637--2655},
  year={2002},
  publisher={IEEE}
}

@article{han1993approximation,
  title={Approximation theory of output statistics},
  author={Han, Te Sun and Verd{\'u}, Sergio},
  journal={IEEE Transactions on Information Theory},
  volume={39},
  number={3},
  pages={752--772},
  year={1993},
  publisher={IEEE}
}

@article{verdu1994general,
  title={A general formula for channel capacity},
  author={Verd{\'u}, Sergio and others},
  journal={IEEE Transactions on Information Theory},
  volume={40},
  number={4},
  pages={1147--1157},
  year={1994},
  publisher={IEEE}
}

@book{hayashi2006quantum,
  title={Quantum information: an introduction},
  author={Hayashi, Masahito},
  year={2006},
  publisher={Springer}
}

@inproceedings{bowen2006beyond,
  title={Beyond iid in quantum information theory},
  author={Bowen, Garry and Datta, Nilanjana},
  booktitle={2006 IEEE International Symposium on Information Theory},
  pages={451--455},
  year={2006},
  organization={IEEE}
}

@article{renner2008security,
  title={Security of quantum key distribution},
  author={Renner, Renato},
  journal={International Journal of Quantum Information},
  volume={6},
  number={01},
  pages={1--127},
  year={2008},
  publisher={World Scientific}
}

@article{datta2009min,
  title={Min-and max-relative entropies and a new entanglement monotone},
  author={Datta, Nilanjana},
  journal={IEEE Transactions on Information Theory},
  volume={55},
  number={6},
  pages={2816--2826},
  year={2009},
  publisher={IEEE}
}

@book{tomamichel2015quantum,
  title={Quantum information processing with finite resources: mathematical foundations},
  author={Tomamichel, Marco},
  year={2015},
  publisher={Springer}
}

@article{renner2007symmetry,
  title={Symmetry of large physical systems implies independence of subsystems},
  author={Renner, Renato},
  journal={Nature Physics},
  volume={3},
  number={9},
  pages={645--649},
  year={2007},
  publisher={Nature Publishing Group UK London}
}

@article{girardi2026quantum,
  title={Quantum Shannon theory made robust: a tale of three protocols for almost iid sources},
  author={Girardi, Filippo and Datta, Nilanjana and De Palma, Giacomo and Lami, Ludovico},
  journal={arXiv preprint arXiv:2605.18726},
  year={2026}
}

@article{datta2026entropy,
  title={Entropy Concentration and Universal Typicality for Weakly Almost iid Quantum Sources},
  author={Datta, Nilanjana},
  journal={arXiv preprint arXiv:2605.20092},
  year={2026}
}

@article{datta2026robustness,
  title={Robustness of Entanglement Manipulation for almost iid sources},
  author={Datta, Nilanjana},
  journal={arXiv preprint arXiv:2606.06392},
  year={2026}
}

@unpublished{companion-paper,
    author = {Ye, Liuhang and Bergh, Bjarne and Datta, Nilanjana},
    title  = {Quantum de Finetti theorems for states and channels in any distance measure},
    note   = {Companion paper to appear on arXiv concurrently with the present work},
    year   = {2026}
}

@article{gilbert1952comparison,
  title={A comparison of signalling alphabets},
  author={Gilbert, Edgar N},
  journal={The Bell system technical journal},
  volume={31},
  number={3},
  pages={504--522},
  year={1952},
  publisher={Nokia Bell Labs}
}

@article{varshamov1957estimate,
  title={Estimate of the number of signals in error correcting codes},
  author={Varshamov, Rom Rubenovich},
  journal={Docklady Akad. Nauk, SSSR},
  volume={117},
  pages={739--741},
  year={1957}
}

@article{berta2022semidefinite,
  title = {Semidefinite programming hierarchies for constrained bilinear optimization},
  author = {Berta, Mario and Borderi, Francesco and Fawzi, Omar and Scholz, Volkher B},
  journal = {Mathematical Programming},
  volume = {194},
  number = {1},
  pages = {781--829},
  year = {2022},
  publisher = {Springer},
}

@article{choi1975completely,
  title={Completely positive linear maps on complex matrices},
  author={Choi, Man-Duen},
  journal={Linear algebra and its applications},
  volume={10},
  number={3},
  pages={285--290},
  year={1975},
  publisher={Elsevier}
}

@article{jamiolkowski1972linear,
  title={Linear transformations which preserve trace and positive semidefiniteness of operators},
  author={Jamio{\l}kowski, Andrzej},
  journal={Reports on mathematical physics},
  volume={3},
  number={4},
  pages={275--278},
  year={1972},
  publisher={Elsevier}
}

@article{girardi2026generalised,
  title={Generalised quantum Stein's lemma more robust than ever},
  author={Girardi, Filippo and Lee, Kuan-Yi and Hayashi, Masahito and Lami, Ludovico},
  journal={arXiv preprint arXiv:2609.17309},
  year={2026}
}

@article{singh2026short,
  title={A short proof of the modified Kretschmann-Schlingemann-Werner conjecture},
  author={Singh, Satvik},
  journal={arXiv preprint arXiv:2606.16418},
  year={2026}
}
%%%%%%%%%%%%%%%%%%%%%%%%%%%%%%%%%%%%%%%%%%%%%%%%%%%%%%%%%%%%%%%%%%%%%%%%%%%%%%%%
\appendix
%%%%%%%%%%%%%%%%%%%%%%%%%%%%%%%%%%%%%%%%%%%%%%%%%%%%%%%%%%%%%%%%%%%%%%%%%%%%%%%%
\section{Proof of~\zcref{cor:fixed-Stinespring-representation}}
\label{app:proof-of-fixed-Stinespring-representation}

\begin{proof}
By \zcref{Prop:alternative-characterization}, there exist a system
$E'$, a Stinespring isometry
\begin{equation}
    V'_\Phi\in\mathcal{L}(A,BE')
\end{equation}
for $\Phi$, and a permutation-covariant extension channel
\begin{equation}
    \mathcal N'_{A^n\to B^nE'^n}
\end{equation}
of $\mathcal N_{A^n\to B^n}$ satisfying the required defect-support
condition with respect to $V'_\Phi$.

Since $V_\Phi$ is minimal, the uniqueness of minimal Stinespring
dilations implies that there exists an isometry
$W:E\to E'$ such that
\begin{equation}
    V'_\Phi
    =
    \bigl(
        \id_B\otimes W
    \bigr)
    V_\Phi .
\end{equation}
Fix an arbitrary state $\tau_E\in\mathcal D(E)$ and define
$\mathcal R\in\mathrm{CPTP}(E',E)$ by
\begin{equation}
    \mathcal R(Y)
    \coloneqq
    W^\dagger YW
    +
    \Tr\!\left[
        \bigl(
            \id_{E'}-WW^\dagger
        \bigr)Y
    \right]
    \tau_E ,
    \qquad
    Y\in\mathcal L(E').
\end{equation}
This channel satisfies
\begin{equation}
    \mathcal R
    \bigl(
        WXW^\dagger
    \bigr)
    =
    X
    \qquad
    \text{for every }X\in\mathcal L(E).
\end{equation}

Define
\begin{equation}
    \mathcal N_{A^n\to B^nE^n}
    \coloneqq
    \bigl(
        \idc_{B^n}\otimes\mathcal R^{\otimes n}
    \bigr)
    \circ
    \mathcal N'_{A^n\to B^nE'^n}.
\end{equation}
Since $\mathcal R$ is trace preserving,
$\mathcal N_{A^n\to B^nE^n}$ is an extension of
$\mathcal N_{A^n\to B^n}$. Moreover,
$\mathcal R^{\otimes n}$ commutes with permutations of the $n$
environment registers, so permutation covariance is preserved.

It remains to verify the defect-support condition. Let
\begin{equation}
    \tau_E
    =
    \sum_a
    \lambda_a
    |a\rangle\langle a|_E
\end{equation}
be a spectral decomposition, and let
$\{|b\rangle_{E'}\}_b$ be an orthonormal basis of
$\operatorname{ran}(W)^\perp$. The channel $\mathcal R$ admits the
Kraus representation
\begin{equation}
    R_0
    \coloneqq
    W^\dagger,
    \qquad
    R_{a,b}
    \coloneqq
    \sqrt{\lambda_a}\,
    |a\rangle_E\langle b|_{E'}.
\end{equation}
These Kraus operators satisfy
\begin{equation}
    R_0W
    =
    \id_E,
    \qquad
    R_{a,b}W
    =
    0.
\end{equation}
Consequently,
{\begin{equation}
    \bigl(
        \id_B\otimes R_0\otimes\id_{A'}
    \bigr)
    |V_\Phi'\rangle\!\rangle_{BE'A'}
    =
    |V_\Phi\rangle\!\rangle_{BEA'},
\end{equation}}
whereas
\begin{equation}
    \bigl(
        \id_B\otimes R_{a,b}\otimes\id_{A'}
    \bigr)
    |V_\Phi'\rangle\!\rangle_{BE'A'}
    =
    0.
\end{equation}

Consider any generator containing $n-r$ reference factors
$|V'_\Phi\rangle\!\rangle$. Under any product Kraus operator of
$\mathcal R^{\otimes n}$, a nonzero image is possible only if $R_0$
acts on every reference factor. Each such factor is then mapped to
$|V_\Phi\rangle\!\rangle$, while the remaining $r$ factors are mapped
to an arbitrary joint vector. Hence every nonzero Kraus image belongs
to
\begin{equation}
    \mathcal V\!\left(
        \mathcal H_{BEA'}^{\otimes n},
        |V_\Phi\rangle\!\rangle_{BEA'}^{\otimes(n-r)}
    \right).
\end{equation}
It follows that
\begin{equation}
    \supp
    J\!\left(
        \mathcal N_{A^n\to B^nE^n}
    \right)
    \subseteq
    \operatorname{span}
    \mathcal V\!\left(
        \mathcal H_{BEA'}^{\otimes n},
        |V_\Phi\rangle\!\rangle_{BEA'}^{\otimes(n-r)}
    \right).
\end{equation}
The vectorization and basis arguments used in the proof of
\zcref{Prop:alternative-characterization} now give the required
operator decomposition based on the fixed isometry $V_\Phi$.
\end{proof}
%%%%%%%%%%%%%%%%%%%%%%%%%%%%%%%%%%%%%%%%%%%%%%%%%%%%%%%%%%%%%%%%%%%%%%%%%%%%%%%%
\section{Proof of~\zcref{Prop:output-Wasserstein-depth}}
\label{app:proof-of-Wasserstein-Prop}

\begin{proof}
By convexity of the \(W_1\) norm and linearity of the channels, it
suffices to consider a pure input state
\begin{equation}
    \psi_{A^n}=|\psi\rangle\langle\psi|_{A^n},
    \qquad
    |\psi\rangle_{A^n}
    =\bigotimes_{j=1}^{M}|\psi_j\rangle_{A_{I_j}},
    \qquad |I_j|\leq\ell,
\end{equation}
for one of the partitions in the definition of
\(\mathrm{SEP}_\ell(A_1:\cdots:A_n)\). We adapt the argument for
almost-iid states in~\cite[Lemma~2.3]{mazzola2026robust}.
Throughout this proof, \(W_1\) on \(B^nE^n\) is defined with respect
to the \(n\) composite sites \(B_1E_1,\ldots,B_nE_n\).

Let \(V_\Phi\in\mathcal{L}(A,BE)\) and
\(\mathcal N_{A^n\to B^nE^n}\) be supplied
by~\zcref{Prop:alternative-characterization}. Set
\begin{equation}
    \omega_{B^nE^n}
    \coloneqq
    \mathcal N_{A^n\to B^nE^n}(\psi_{A^n})
    =
    \sum_{t,t'\in\mathcal T}
    \beta_{tt'}K_t|\psi\rangle\langle\psi|K_{t'}^\dagger,
\end{equation}
where each \(K_t\) belongs to
\(\mathcal K(A^n\to B^nE^n,V_\Phi^{\otimes(n-r)})\).
For every \(j\in[M]\), define
\begin{equation}
    |\theta_j\rangle_{(BE)_{I_j}}
    \coloneqq
    V_\Phi^{\otimes I_j}|\psi_j\rangle_{A_{I_j}}.
\end{equation}
The output of the reference Stinespring isometry is
\begin{equation}
    |\theta\rangle
    \coloneqq V_\Phi^{\otimes n}|\psi\rangle
    =\bigotimes_{j=1}^{M}|\theta_j\rangle .
\end{equation}

For each \(t\), the form of \(K_t\) gives a set
\(G_t\subseteq[n]\), with \(|G_t|\leq r\), such that \(K_t\)
acts as \(V_\Phi\) on every position outside \(G_t\).
Let
\begin{equation}
    F_t\coloneqq\{j\in[M]:I_j\cap G_t\neq\varnothing\},
    \qquad
    I_{F_t}\coloneqq\bigcup_{j\in F_t}I_j.
\end{equation}
Since the blocks \(I_j\) are disjoint, \(|F_t|\leq|G_t|\leq r\).
Moreover, the input factorizes across these blocks. Thus every block
\(j\notin F_t\) contributes the factor \(|\theta_j\rangle\) to
\(K_t|\psi\rangle\), while the joint output on the remaining blocks
can be collected into a possibly unnormalized vector
\(|\xi_t\rangle_{(BE)_{I_{F_t}}}\). Consequently,
\begin{equation}
    K_t|\psi\rangle
    =
    \bigotimes_{j\notin F_t}
    |\theta_j\rangle_{(BE)_{I_j}}
    \otimes
    |\xi_t\rangle_{(BE)_{I_{F_t}}}.
    \label{Eq:block-defect-output}
\end{equation}
It follows, including for the cross terms with \(t\neq t'\),
that \(\supp\omega\subseteq\mathcal W\), where
\begin{equation}
    \mathcal W
    \coloneqq
    \operatorname{span}
    \left\{
        \bigotimes_{j\notin F}|\theta_j\rangle
        \otimes|\xi_F\rangle:
        F\subseteq[M],\ |F|\leq r,\ 
        |\xi_F\rangle\in
        \bigotimes_{j\in F}\mathcal H_{(BE)_{I_j}}
    \right\}.
    \label{Eq:depth-defect-subspace}
\end{equation}

For each \(j\), extend \(|\theta_j\rangle\) to an orthonormal
basis \(\{|e_{j,\nu}\rangle\}_{\nu=0}^{d_j-1}\) of
\(\mathcal H_{(BE)_{I_j}}\), where
\(d_j=\dim\mathcal H_{(BE)_{I_j}}\) and
\(|e_{j,0}\rangle=|\theta_j\rangle\). For a multi-index
\(\boldsymbol{\nu}=(\nu_1,\ldots,\nu_M)\), define
\begin{equation}
    |\Psi_{\boldsymbol{\nu}}\rangle
    \coloneqq\bigotimes_{j=1}^{M}|e_{j,\nu_j}\rangle,
    \qquad
    H_{\boldsymbol{\nu}}
    \coloneqq\{j\in[M]:\nu_j\neq0\}.
\end{equation}
Then
\(\{|\Psi_{\boldsymbol{\nu}}\rangle:
|H_{\boldsymbol{\nu}}|\leq r\}\)
is an orthonormal basis of \(\mathcal W\). Indeed, a vector in the
displayed set belongs to \(\mathcal W\) by taking
\(F=H_{\boldsymbol{\nu}}\). Conversely, expanding the arbitrary
vector \(|\xi_F\rangle\) in the product basis on the blocks in \(F\)
expresses every generator of \(\mathcal W\) as a linear combination
of vectors with \(H_{\boldsymbol{\nu}}\subseteq F\).

Let
\(\mathcal S\coloneqq
\{\boldsymbol{\nu}:|H_{\boldsymbol{\nu}}|\leq r\}\).
Since \(\omega\) is a state supported on \(\mathcal W\), it has an
expansion
\begin{equation}
    \omega
    =
    \sum_{\boldsymbol{\nu},\boldsymbol{\mu}\in\mathcal S}
    \gamma_{\boldsymbol{\nu}\boldsymbol{\mu}}
    |\Psi_{\boldsymbol{\nu}}\rangle
    \langle\Psi_{\boldsymbol{\mu}}|,
    \qquad
    \gamma\geq0,\quad
    \sum_{\boldsymbol{\nu}\in\mathcal S}
    \gamma_{\boldsymbol{\nu}\boldsymbol{\nu}}=1.
    \label{Eq:depth-omega-expansion}
\end{equation}
For \(j\in[M]\), set
\begin{equation}
    w_j
    \coloneqq
    \sum_{\boldsymbol{\nu}\in\mathcal S}
    \gamma_{\boldsymbol{\nu}\boldsymbol{\nu}}
    \mathbf 1_{\{j\in H_{\boldsymbol{\nu}}\}}.
    \label{Eq:depth-wj}
\end{equation}
Every \(\boldsymbol{\nu}\in\mathcal S\) has
\(|H_{\boldsymbol{\nu}}|\leq r\), so
\begin{equation}
    \sum_{j=1}^{M}w_j
    =
    \sum_{\boldsymbol{\nu}\in\mathcal S}
    \gamma_{\boldsymbol{\nu}\boldsymbol{\nu}}
    |H_{\boldsymbol{\nu}}|
    \leq r.
    \label{Eq:depth-weight-sum}
\end{equation}

Define the projection
\begin{equation}
    \Pi_j
    \coloneqq
    \sum_{\substack{\boldsymbol{\nu}\in\mathcal S\\
                    j\notin H_{\boldsymbol{\nu}}}}
    |\Psi_{\boldsymbol{\nu}}\rangle
    \langle\Psi_{\boldsymbol{\nu}}|,
    \qquad
    p_j\coloneqq\Tr[\Pi_j\omega]=1-w_j.
\end{equation}
If \(p_j>0\), set
\(\widetilde{\omega}^{(j)}
\coloneqq\Pi_j\omega\Pi_j/p_j\).
If \(p_j=0\), choose any state \(\eta^{(j)}\) on the blocks other
than \(j\) and set
\(\widetilde{\omega}^{(j)}
\coloneqq|\theta_j\rangle\langle\theta_j|\otimes\eta^{(j)}\).
Since every vector in the range of \(\Pi_j\) has the factor
\(|\theta_j\rangle\) on block \(j\), in either case there is a state
\(\eta^{(j)}\) on the complementary blocks such that
\begin{equation}
    \widetilde{\omega}^{(j)}
    =
    |\theta_j\rangle\langle\theta_j|\otimes\eta^{(j)}.
    \label{Eq:depth-projected-factor}
\end{equation}
Furthermore,
\begin{equation}
    \|\omega-\widetilde{\omega}^{(j)}\|_1
    \leq 2\sqrt{w_j}.
    \label{eq:gentle}
\end{equation}
To see this when \(p_j>0\), purify \(\omega\) and apply
\(\Pi_j\) to its \(B^nE^n\) register. The normalized projected
purification has overlap \(\sqrt{p_j}\) with the original one, so
contractivity under partial trace gives the bound
\(2\sqrt{1-p_j}=2\sqrt{w_j}\). When \(p_j=0\), the bound follows
from
\(\|\omega-\widetilde{\omega}^{(j)}\|_1\leq2=2\sqrt{w_j}\).

For \(j=0,\ldots,M\), let
\(L_j\coloneqq I_1\sqcup\cdots\sqcup I_j\), with
\(L_0=\varnothing\), and define
\begin{equation}
    \tau^{(j)}
    \coloneqq
    \omega_{L_j}\otimes
    \bigotimes_{k=j+1}^{M}|\theta_k\rangle\langle\theta_k|,
\end{equation}
where \(\omega_{L_j}\) is the marginal of \(\omega\) on
\((BE)_{L_j}\), and \(\omega_{L_0}=1\). Then
\(\tau^{(0)}=|\theta\rangle\langle\theta|\) and
\(\tau^{(M)}=\omega\). Thus
\begin{equation}
    \omega-|\theta\rangle\langle\theta|
    =
    \sum_{j=1}^{M}
    \bigl(\tau^{(j)}-\tau^{(j-1)}\bigr).
    \label{eq:telescope}
\end{equation}
For every \(j\),
\(\Tr_{(BE)_{I_j}}
[\tau^{(j)}-\tau^{(j-1)}]=0\).
Applying~\cite[Proposition~5]{de2021quantum} to the composite
sites \(B_iE_i\), and dropping its dimension-dependent factor
smaller than one, gives
\begin{equation}
    \|\tau^{(j)}-\tau^{(j-1)}\|_{W_1}
    \leq
    |I_j|\,\|\tau^{(j)}-\tau^{(j-1)}\|_1.
    \label{Eq:depth-local-W1}
\end{equation}

The trace norm on the right satisfies
\begin{align}
    \|\tau^{(j)}-\tau^{(j-1)}\|_1
    &=
    \bigl\|
        \omega_{L_j}
        -
        \omega_{L_{j-1}}
        \otimes|\theta_j\rangle\langle\theta_j|
    \bigr\|_1 \notag\\
    &\leq
    \bigl\|
        \omega_{L_j}
        -
        \widetilde{\omega}^{(j)}_{L_j}
    \bigr\|_1
    +
    \bigl\|
        \widetilde{\omega}^{(j)}_{L_{j-1}}
        -
        \omega_{L_{j-1}}
    \bigr\|_1 \notag\\
    &\leq
    2\|\omega-\widetilde{\omega}^{(j)}\|_1
    \leq4\sqrt{w_j}.
\end{align}
Here the first inequality uses
\(\widetilde{\omega}^{(j)}_{L_j}
=\widetilde{\omega}^{(j)}_{L_{j-1}}
 \otimes|\theta_j\rangle\langle\theta_j|\), and the second uses
contractivity of the trace norm under partial trace.
Combining this estimate with
\zcref{eq:telescope} and \zcref{Eq:depth-local-W1}, we obtain
\begin{align}
    \|\omega-|\theta\rangle\langle\theta|\|_{W_1}
    &\leq
    4\sum_{j=1}^{M}|I_j|\sqrt{w_j} \notag\\
    &\leq
    4\sqrt{\sum_{j=1}^{M}|I_j|^2}
     \sqrt{\sum_{j=1}^{M}w_j}
    \leq4\sqrt{\ell nr},
\end{align}
where the last step uses \zcref{Eq:depth-weight-sum} and
\(\sum_j|I_j|^2\leq\ell\sum_j|I_j|=\ell n\).

Finally,
\begin{equation}
    \Tr_{E^n}\omega
    =\mathcal N_{A^n\to B^n}(\psi_{A^n}),
    \qquad
    \Tr_{E^n}|\theta\rangle\langle\theta|
    =\Phi_{A\to B}^{\otimes n}(\psi_{A^n}).
\end{equation}
The map
\(\Tr_{E^n}=\bigotimes_{i=1}^{n}\Tr_{E_i}\)
acts locally on the composite sites \(B_iE_i\) and cannot
increase \(W_1\). Hence
\begin{equation}
    \frac1n
    \left\|
        \mathcal N_{A^n\to B^n}(\psi_{A^n})
        -
        \Phi_{A\to B}^{\otimes n}(\psi_{A^n})
    \right\|_{W_1}
    \leq
    \frac1n
    \|\omega-|\theta\rangle\langle\theta|\|_{W_1}
    \leq4\sqrt{\frac{\ell r}{n}}.
\end{equation}
Convexity of \(W_1\) extends the same bound to convex combinations
of these pure inputs, including combinations whose pure states use
different partitions. These are precisely the states in
\(\mathrm{SEP}_\ell(A_1:\cdots:A_n)\).
\end{proof}

%%%%%%%%%%%%%%%%%%%%%%%%%%%%%%%%%%%%%%%%%%%%%%%%%%%%%%%%%%%%%%%%%%%%%%%%%%%%%%%%
\section{Quantum error-correction facts}
\label{App:QEC}

For completeness, we collect here the basic facts about quantum
error-correcting codes that are used in
\zcref{Sec:robust-quantum-communication}. We follow the terminology and
results of~\cite{gottesman2024surviving}, in particular Definitions~1.9,
2.1, and 2.3, Theorem~2.4, Theorem~2.7 on the Knill-Laflamme conditions, Corollary~2.8, and Theorem~7.1 on the quantum Gilbert--Varshamov bound.

\begin{definition}[(Weight and $t$-qubit errors)]
\label{Def:t-qubit-error}
Let $\mathcal H_N\coloneqq (\mathbb C^2)^{\otimes N}$ be an $N$-qubit Hilbert space. The \emph{weight} $\mathrm{wt}(E)$ of an operator $E\in\mathcal L(\mathcal H_N)$ is the minimum cardinality of a set $S\subseteq[N]$ such that $E=E_S\otimes\id_{S^c}$ for some $E_S\in\mathcal L((\mathbb C^2)^{\otimes S})$.
A \emph{$t$-qubit error} is any operator of the form
\begin{equation}
    E=\sum_\alpha E_\alpha,
    \qquad
    \mathrm{wt}(E_\alpha)\leq t .
\end{equation}
The space of $t$-qubit errors is denoted
\begin{equation}
    \mathcal E_t^{(N)}
    \coloneqq 
    \operatorname{span}
    \left\{
        E\in\mathcal L(\mathcal H_N):
        \mathrm{wt}(E)\leq t
    \right\}.
    \label{Eq:t-qubit-error-space}
\end{equation}
\end{definition}

Notice that the summands in a $t$-qubit error need not have the same
support: $\mathcal E_t^{(N)}$ contains coherent superpositions of errors occurring on different sets of at most $t$
qubits. This feature will be essential in our application.

\begin{definition}[(Quantum error-correcting code)]
\label{Def:QECC}
Let $\mathcal H_M$ be a logical Hilbert space and $U:\mathcal H_M\to\mathcal H_N$ an encoding isometry. Given a set
$\mathcal E\subseteq\mathcal L(\mathcal H_N)$ of errors, we say that
$(U,\mathcal E)$ is a \emph{quantum error-correcting code} (QECC) if
there exists a recovery channel $\mathcal R\in\mathrm{CPTP}(\mathcal H_N,\mathcal H_M)$ such that, for every $E\in\mathcal E$ and every state
$|\psi\rangle\in\mathcal H_M$,
\begin{equation}
    \mathcal R
    \left(
        EU|\psi\rangle\langle\psi|U^\dagger E^\dagger
    \right)
    =
    c_E\,|\psi\rangle\langle\psi|
    \label{Eq:QECC-definition}
\end{equation}
for some constant $c_E$. The subspace $\mathcal C\coloneqq \mathrm{Image}(U)\subseteq\mathcal H_N$ is called the \emph{code space}.
\end{definition}

\begin{theorem}[(Knill-Laflamme conditions)]
\label{Thm:Knill-Laflamme}
$(U,\mathcal E)$ is a QECC if and only if, for any
$E_a,E_b\in\mathcal E$, and $P_{\mathcal C}=UU^\dagger$ the projector onto the
code space,
\begin{equation}
\label{Eq:Knill-Laflamme}
    P_{\mathcal C}E_a^\dagger E_b P_{\mathcal C}
    =
    C_{a,b}P_{\mathcal C}.
\end{equation}
\end{theorem}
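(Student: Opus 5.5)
The plan is the standard two-directional argument. For necessity I use a Kraus decomposition of the recovery channel. For sufficiency I diagonalize the matrix $(C_{a,b})$ and build an explicit recovery channel on a finite basis of the linear span of $\mathcal E$. Throughout, write $P_{\mathcal C}=UU^\dagger$ and note that $U^\dagger X U = c\,\id_M$ is equivalent to $P_{\mathcal C}XP_{\mathcal C}=c\,P_{\mathcal C}$, so I work with $U^\dagger E_a^\dagger E_b U$.

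\emph{Necessity.} Suppose a recovery channel $\mathcal R$ with Kraus operators $\{R_k\}$ satisfies \eqref{Eq:QECC-definition}. For fixed $E\in\mathcal E$ and $|\psi\rangle\in\mathcal H_M$ we have $\sum_k R_kEU\psi U^\dagger E^\dagger R_k^\dagger=c_E\psi$. Each summand is positive and the right-hand side is a multiple of a rank-one projector, so $R_kEU|\psi\rangle\in\mathbb C|\psi\rangle$ for every $k$ and every $|\psi\rangle$. A linear map that sends every vector to a multiple of itself is a scalar multiple of the identity. Hence $R_kE_aU=\lambda_{k,a}\id_M$, with $\lambda_{k,a}$ independent of $\psi$. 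Since $\sum_kR_k^\dagger R_k=\id_N$ by trace preservation,
\begin{equation*}
    U^\dagger E_a^\dagger E_bU=\sum_kU^\dagger E_a^\dagger R_k^\dagger R_kE_bU=\Bigl(\sum_k\overline{\lambda_{k,a}}\lambda_{k,b}\Bigr)\id_M .
\end{equation*}
Conjugating with $U$ gives \eqref{Eq:Knill-Laflamme}.

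\emph{Sufficiency.} First pass from $\mathcal E$ to the finite-dimensional span $\operatorname{span}\mathcal E$. The Knill--Laflamme condition is sesquilinear in $(E_a,E_b)$, so it holds on the span, and a recovery that works for the span works for $\mathcal E$. Fix a basis $E_1,\dots,E_p$ of the span and the corresponding matrix $C=(C_{a,b})$. This matrix is Hermitian and positive semidefinite, because for $v\in\mathbb C^p$ and $F=\sum_bv_bE_b$ one has $\bigl(\sum_{a,b}\overline{v_a}v_bC_{a,b}\bigr)P_{\mathcal C}=P_{\mathcal C}F^\dagger FP_{\mathcal C}\geq0$. Diagonalize $C=u\,\mathrm{diag}(d_i)\,u^\dagger$ and set $F_i=\sum_bu_{bi}E_b$. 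Then $P_{\mathcal C}F_i^\dagger F_jP_{\mathcal C}=d_i\delta_{ij}P_{\mathcal C}$, so $F_iP_{\mathcal C}=0$ whenever $d_i=0$. For $d_i>0$, the map $W_i\coloneqq d_i^{-1/2}F_iU$ is an isometry $M\to N$, and the ranges $\mathcal C_i\coloneqq W_i(\mathcal H_M)$ are mutually orthogonal. Let $P_0$ be the projector onto $(\bigoplus_i\mathcal C_i)^\perp$ and fix a state $\tau_M$. Define
\begin{equation*}
    \mathcal R(X)\coloneqq\sum_{i:d_i>0}W_i^\dagger XW_i+\Tr(P_0X)\,\tau_M .
\end{equation*}
This is CPTP, because $\sum_iW_iW_i^\dagger+P_0=\id_N$. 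For $E=\sum_ic_iF_i$ we get $EU|\psi\rangle=\sum_{d_i>0}c_i\sqrt{d_i}\,W_i|\psi\rangle$. This vector is annihilated by $P_0$, and by orthogonality $W_j^\dagger EU|\psi\rangle=c_j\sqrt{d_j}|\psi\rangle$. Therefore $\mathcal R(EU\psi U^\dagger E^\dagger)=\bigl(\sum_j|c_j|^2d_j\bigr)\psi$, which is \eqref{Eq:QECC-definition}.

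The main point needing care is the sufficiency construction. One must check that the diagonalized errors $F_i$ map the code space isometrically, up to the scale $\sqrt{d_i}$, onto mutually orthogonal subspaces. One must also check that the zero-eigenvalue directions act trivially on the code space, so that the recovery is well defined and trace preserving. The necessity step, deducing that each Kraus image is proportional to $|\psi\rangle$, is routine.
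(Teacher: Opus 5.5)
Your proof is correct and follows the standard argument. For necessity, the Kraus operators of the recovery are forced to act as scalars on the code space; this is the Kraus-level counterpart of the paper's lemma on the Stinespring dilation of the recovery channel. For sufficiency, you diagonalize $(C_{a,b})$ and recover on the mutually orthogonal images $F_i U(\mathcal H_M)$. The paper gives no proof of its own and cites Theorem~2.7 of \cite{gottesman2024surviving}, which, as far as I recall, uses essentially the same argument.
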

\begin{proof}
    See Theorem~2.7 of~\cite{gottesman2024surviving}.
\end{proof}

\begin{definition}[(Distance of a QECC)]
\label{Def:QECC-distance}
Let $U:\mathcal H_M\to(\mathbb C^2)^{\otimes N}$ be an encoding isometry and $\mathcal C\subseteq (\mathbb C^2)^{\otimes N}$ the corresponding code space. The \emph{distance} of $\mathcal C$ is
\begin{equation}
    d(\mathcal C)
    \coloneqq 
    \min\left\{
        \mathrm{wt}(F):
        F\in\mathcal L\bigl((\mathbb C^2)^{\otimes N}\bigr),\
        P_{\mathcal C}FP_{\mathcal C}
        \notin \mathbb C P_{\mathcal C}
    \right\}.
\end{equation}
If $\dim\mathcal C=K$ and $d(\mathcal C)=d$, we call $\mathcal C$ an
$((N,K,d))$ QECC. In particular, if $K=2^s$, the code encodes $s$
logical qubits into $N$ physical qubits.
\end{definition}

\begin{corollary}
\label{Cor:distance-corrects-errors}
An $((N,K,d))$ QECC corrects any $t$-qubit error for $t\leq
    \left\lfloor\frac{d-1}{2}\right\rfloor$. Equivalently, a code space with distance $d$ at least $2t+1$ corrects all $t$-qubit errors $\mathcal E_t^{(N)}$.
\end{corollary}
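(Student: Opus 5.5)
The plan is to reduce the statement to the Knill--Laflamme conditions (\zcref{Thm:Knill-Laflamme}) applied to the error set $\mathcal E=\mathcal E_t^{(N)}$, using only the definition of distance (\zcref{Def:QECC-distance}). The two formulations in the statement are equivalent: $t\leq\lfloor (d-1)/2\rfloor$ holds if and only if $2t+1\leq d$. So it suffices to show that if $d(\mathcal C)\geq 2t+1$, then $P_{\mathcal C}E_a^\dagger E_bP_{\mathcal C}\in\mathbb C P_{\mathcal C}$ for all $E_a,E_b\in\mathcal E_t^{(N)}$.

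\textbf{Step 1 (weight is subadditive under products).} Let $F,G$ have weights at most $t$. Then $F=F_S\otimes\id_{S^c}$ and $G=G_T\otimes\id_{T^c}$ with $|S|,|T|\leq t$. Adjoints do not change supports, so $F^\dagger G$ acts as the identity outside $S\cup T$. Hence
\[
\mathrm{wt}(F^\dagger G)\leq |S\cup T|\leq 2t<d(\mathcal C).
\]

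\textbf{Step 2 (distance gives scalar compressions).} By the minimality in \zcref{Def:QECC-distance}, every operator $X$ with $\mathrm{wt}(X)<d(\mathcal C)$ satisfies $P_{\mathcal C}XP_{\mathcal C}=c_XP_{\mathcal C}$ for some scalar $c_X$. Applying this to $X=F^\dagger G$ from Step 1 gives $P_{\mathcal C}F^\dagger GP_{\mathcal C}=c_{F,G}P_{\mathcal C}$.

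\textbf{Step 3 (extend by linearity).} Write general errors as $E_a=\sum_\alpha F_\alpha$ and $E_b=\sum_\beta G_\beta$ with each summand of weight at most $t$. The map $(E_a,E_b)\mapsto P_{\mathcal C}E_a^\dagger E_bP_{\mathcal C}$ is sesquilinear, so
\[
P_{\mathcal C}E_a^\dagger E_bP_{\mathcal C}=\sum_{\alpha,\beta}\overline{\phantom{c}}\!\!\;c_{F_\alpha,G_\beta}\,P_{\mathcal C}\Big/\!\!\;\text{(up to conjugation of coefficients)}
\]
— more precisely, it equals $\bigl(\sum_{\alpha,\beta}c_{F_\alpha,G_\beta}\bigr)P_{\mathcal C}=C_{a,b}P_{\mathcal C}$, since each term is already a scalar multiple of $P_{\mathcal C}$. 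This is exactly the Knill--Laflamme condition \zcref{Eq:Knill-Laflamme} on $\mathcal E_t^{(N)}$. By \zcref{Thm:Knill-Laflamme}, $(U,\mathcal E_t^{(N)})$ is therefore a QECC, i.e., a single recovery channel corrects all $t$-qubit errors, including coherent superpositions of errors with different supports.

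\textbf{Main obstacle.} There is no deep obstacle. The point needing care is that $\mathcal E_t^{(N)}$ contains sums of errors with different supports, which is exactly the case needed later for coherent defect superpositions. Such a sum can have weight larger than $t$, so the distance definition cannot be applied to $E_a^\dagger E_b$ directly. This is why Step 1 is carried out on individual summands and Step 3 passes to general errors by linearity.
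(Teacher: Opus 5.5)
Your proof is correct and follows essentially the paper's argument: bound $\mathrm{wt}(E_\alpha^\dagger E_\beta)\leq 2t<d$ for the individual weight-$\leq t$ summands, use the definition of distance to get scalar compressions $P_{\mathcal C}E_\alpha^\dagger E_\beta P_{\mathcal C}\in\mathbb C P_{\mathcal C}$, sum by linearity, and invoke the Knill--Laflamme conditions. The only blemish is the garbled first display in Step 3, which you should delete: no conjugation of coefficients arises because any scalars are absorbed into the summands $F_\alpha$, so your ``more precisely'' version is the correct one.
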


\begin{proof}
This is Corollary~2.8 of~\cite{gottesman2024surviving}. For completeness,
let $E_a,E_b\in\mathcal E_t^{(N)}$ and write
\begin{equation}
    E_a=\sum_\alpha E_\alpha,
    \qquad
    E_b=\sum_\beta E_\beta,
    \qquad
    \mathrm{wt}(E_\alpha),\mathrm{wt}(E_\beta)\leq t.
\end{equation}
Then $\mathrm{wt}(E_\alpha^\dagger E_\beta)\leq2t<d,$
and hence, by the definition of distance,
\begin{equation}
    P_{\mathcal C}E_a^\dagger E_b P_{\mathcal C}=\sum_{\alpha,\beta}P_{\mathcal C}E_\alpha^\dagger E_\beta P_{\mathcal C}
    =\sum_{\alpha,\beta} C'_{\alpha,\beta} P_{\mathcal C} = C_{a,b}P_{\mathcal C}.
\end{equation}
By the Knill-Laflamme conditions (\zcref{Thm:Knill-Laflamme}), $(U,\mathcal E_t^{(N)})$ is a QECC.
\end{proof}

\begin{lemma}[(Stinespring dilation of the recovery channel)]
\label{Lem:Stinespring-recovery}
Let $U:\mathcal H_M\to\mathcal H_N= (\mathbb C^2)^{\otimes N}$ be the encoding isometry of an $((N,K,d))$ QECC with $d\geq 2t+1$. Then there exist a purifying (dilation) Hilbert space $\mathcal H_S$, an
isometry
\begin{equation}
    \widehat R:\mathcal H_N\to\mathcal H_M\otimes\mathcal H_S,
\end{equation}
and for every $E\in\mathcal E_t^{(N)}$, a vector
$|\xi_E\rangle_S\in\mathcal H_S$, such that
\begin{equation}
    \widehat R E U
    =
    \id_M\otimes|\xi_E\rangle_S .
    \label{Eq:Stinespring-recovery}
\end{equation}
Consequently,
\begin{equation}
    \mathcal R(\cdot)=\Tr_S\left(\widehat R(\cdot)\widehat R^\dagger\right)
\end{equation}
is a recovery channel for all $t$-qubit errors.
\end{lemma}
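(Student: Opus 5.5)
The plan is to build the isometry $\widehat R$ explicitly from the Knill--Laflamme conditions, following the standard argument for the existence of recovery maps. By \zcref{Cor:distance-corrects-errors} (whose proof goes through \zcref{Thm:Knill-Laflamme}), the pair $(U,\mathcal E_t^{(N)})$ satisfies $P_{\mathcal C}E_a^\dagger E_bP_{\mathcal C}=C_{a,b}P_{\mathcal C}$ for all $E_a,E_b\in\mathcal E_t^{(N)}$. Since $\mathcal E_t^{(N)}$ is a finite-dimensional linear space, I will fix a basis $\{E_a\}_{a=1}^{p}$ of it and work with the $p\times p$ matrix $C=(C_{a,b})$. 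Because $U^\dagger P_{\mathcal C}=U^\dagger$, these conditions can be rewritten as
\begin{equation}
U^\dagger E_a^\dagger E_bU=C_{a,b}\id_M .
\end{equation}
This shows that $C$ is a Gram matrix. Indeed, for any $|\psi\rangle$, $C_{a,b}=\langle\psi|U^\dagger E_a^\dagger E_bU|\psi\rangle$, so $C$ is Hermitian and positive semidefinite.

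\textbf{Diagonalization.} Next I diagonalize $C=WDW^\dagger$ with $W$ unitary and $D=\mathrm{diag}(d_1,\ldots,d_p)$, $d_i\geq0$. I then pass to the rotated basis $F_i\coloneqq\sum_aW_{ai}E_a$, which satisfies
\begin{equation}
U^\dagger F_i^\dagger F_jU=d_i\delta_{ij}\id_M .
\end{equation}
Two consequences follow. If $d_i=0$, then $\|F_iU|\psi\rangle\|^2=0$ for every $|\psi\rangle$, so $F_iU=0$. If $d_i>0$, then $G_i\coloneqq d_i^{-1/2}F_iU$ is an isometry $M\to\mathcal H_N$, and the ranges of the different $G_i$ are mutually orthogonal.

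\textbf{Defining $\widehat R$.} Let $I_+\coloneqq\{i:d_i>0\}$ and $\mathcal K\coloneqq\bigoplus_{i\in I_+}\operatorname{ran}G_i$. I choose $\mathcal H_S$ to contain orthonormal vectors $\{|i\rangle\}_{i\in I_+}$ together with enough additional orthonormal "junk" vectors $|j\rangle$. It suffices that $K$ times the number of junk vectors is at least $\dim\mathcal K^\perp$; for instance, $\lceil 2^N/K\rceil$ junk vectors will do. The map is then defined in two pieces:
\begin{itemize}
\item on $\mathcal K$, by $\widehat R\,G_i|\psi\rangle\coloneqq|\psi\rangle_M\otimes|i\rangle_S$;
\item on $\mathcal K^\perp$, as any isometry into $\mathcal H_M\otimes\operatorname{span}\{|j\rangle\}$.
\end{itemize}
Both pieces are isometric, and their images are orthogonal, so $\widehat R$ is an isometry $\mathcal H_N\to\mathcal H_M\otimes\mathcal H_S$.

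\textbf{Action on errors.} For an arbitrary $E\in\mathcal E_t^{(N)}$, I expand $E=\sum_ic_iF_i$. Using $F_iU=0$ for $d_i=0$,
\begin{equation}
\widehat REU=\sum_{i\in I_+}c_i\sqrt{d_i}\,\widehat RG_i=\id_M\otimes|\xi_E\rangle_S,\qquad |\xi_E\rangle\coloneqq\sum_{i\in I_+}c_i\sqrt{d_i}\,|i\rangle .
\end{equation}
Tracing out $S$ then gives $\mathcal R(EU\psi U^\dagger E^\dagger)=\|\xi_E\|^2\psi$ for every pure $\psi$. This is exactly \zcref{Eq:QECC-definition} with $c_E=\|\xi_E\|^2$.

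The only real subtlety is the construction itself: one must pass to the diagonalizing basis so that the syndrome subspaces are orthogonal, handle the null directions ($d_i=0$), and pad $\mathcal H_S$ so that $\widehat R$ is a genuine isometry on all of $\mathcal H_N$, not just on the error-reachable subspace. Linearity of $E\mapsto|\xi_E\rangle$ then automatically covers coherent superpositions of errors, which is the feature needed later in the paper.
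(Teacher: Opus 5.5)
Your proof is correct, but it takes a different route from the paper's.

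The paper argues abstractly. It takes the recovery channel $\mathcal R$ whose existence follows from \zcref{Cor:distance-corrects-errors}, which amounts to the sufficiency direction of the Knill--Laflamme theorem cited from Gottesman. It picks any Stinespring dilation $\widehat R$ of $\mathcal R$. From $\Tr_S[\widehat R EU\psi U^\dagger E^\dagger\widehat R^\dagger]=c_E\psi$ it concludes $\widehat REU|\psi\rangle=|\psi\rangle\otimes|\xi_{E,\psi}\rangle$, since any purification of a pure state is a product. It then applies linearity to superpositions $|j\rangle+|k\rangle$ of basis vectors to show that $|\xi_{E,\psi}\rangle$ does not depend on $\psi$.

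You use only the identities $U^\dagger E_a^\dagger E_bU=C_{a,b}\id_M$, which the proof of the corollary derives directly from the distance. You diagonalize the Gram matrix $C$ to obtain mutually orthogonal syndrome isometries $G_i$, and then build $\widehat R$ by hand, padding $\mathcal H_S$ so that it is isometric on all of $\mathcal H_N$. Every step checks out, including the null directions $d_i=0$ and the dimension count for the padding space.

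In effect you re-prove the sufficiency direction of Knill--Laflamme constructively. This makes your argument self-contained and gives an explicit dilation space. It also gives an explicit, manifestly linear formula $|\xi_E\rangle=\sum_{i\in I_+}c_i\sqrt{d_i}\,|i\rangle$, which makes the handling of coherent superpositions of errors (the feature used later in the paper) transparent.

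The paper's argument is shorter and slightly stronger structurally. It shows that every Stinespring dilation of every valid recovery channel for $\mathcal E_t^{(N)}$ has the product form, rather than exhibiting one particular dilation. The price is that it treats the existence of $\mathcal R$ as a black box.
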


\begin{proof}
This is essentially Proposition~2.2 in \cite{gottesman2024surviving}. We briefly reproduce the proof here. Since $d\geq2t+1$, \zcref{Cor:distance-corrects-errors} implies that the code corrects every
$E\in\mathcal E_t^{(N)}$. Hence there exists a recovery channel
$\mathcal R\in\mathrm{CPTP}(\mathcal H_N,\mathcal H_M)$ such that,
for every pure state $|\psi\rangle\in\mathcal H_M$,
\begin{equation}
    \mathcal R
    \left(
        EU|\psi\rangle\langle\psi|U^\dagger E^\dagger
    \right)
    =
    c_E|\psi\rangle\langle\psi|.
\end{equation}
Let $\widehat R:\mathcal H_N\to\mathcal H_M\otimes\mathcal H_S$
be a Stinespring dilation of $\mathcal R$ so that
\begin{equation}
    \Tr_S
    \left[
        \widehat R EU|\psi\rangle\langle\psi|
        U^\dagger E^\dagger\widehat R^\dagger
    \right]
    =
    c_E|\psi\rangle\langle\psi|.
\end{equation}
Since the purification of a pure state has to be a product state, we have $\widehat R EU|\psi\rangle
    =
    |\psi\rangle\otimes|\xi_{E,\psi}\rangle$ for some vector $|\xi_{E,\psi}\rangle\in\mathcal H_S$. We now show that this vector is independent of $|\psi\rangle$. Let $\{|j\rangle\}_j$ be an orthonormal basis of $\mathcal H_M$, and apply the map
\begin{equation}
    \widehat R EU|j\rangle
    =
    |j\rangle\otimes|\xi_{E,j}\rangle, \quad \widehat R EU|k\rangle
    =
    |k\rangle\otimes|\xi_{E,k}\rangle.
\end{equation}
Applying the map to the superposed input state gives
\begin{equation}
    \widehat R EU (|j\rangle+|k\rangle)
    =
    (|j\rangle+|k\rangle)\otimes|\xi_{E,jk}\rangle,
\end{equation}
so that by linearity
\begin{equation}
    |j\rangle\otimes|\xi_{E,j}\rangle
    +
    |k\rangle\otimes|\xi_{E,k}\rangle
    =
    (|j\rangle+|k\rangle)\otimes|\xi_{E,jk}\rangle .
\end{equation}
Taking inner products with $\langle j|$ and $\langle k|$ reveals that (for $j\neq k$) $|\xi_{E,j}\rangle=|\xi_{E,k}\rangle=|\xi_{E,jk}\rangle$, so the vector on the purifying space is independent of $|\psi\rangle$.
\end{proof}

\begin{theorem}[(Quantum Gilbert--Varshamov bound)]
\label{Thm:quantum-GV}
Suppose that $n,k,d\in\mathbb N$ satisfy
\begin{equation}
    2^k
    \sum_{j=0}^{d-1}
    3^j\binom{n}{j}
    \leq
    2^n .
    \label{Eq:quantum-GV}
\end{equation}
Then there exists an $((n,2^k,d))$ QECC. For large $n$, such a code
exists if
\begin{equation}
    \frac{k}{n}
    \leq
    1-\frac{d}{n}\log 3
    -h\!\left(\frac{d}{n}\right),
    \label{Eq:quantum-GV-asymptotic}
\end{equation}
where $h(x)
    \coloneqq 
    -x\log x-(1-x)\log (1-x)$ is the binary entropy.
\end{theorem}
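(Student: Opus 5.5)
The plan is the standard random-stabilizer-code argument, which is the route behind~\cite[Theorem~7.1]{gottesman2024surviving}. I will identify $n$-qubit Pauli operators modulo phases with vectors in $\mathbb F_2^{2n}$, with commutation given by the symplectic form. An $[[n,k]]$ stabilizer code then corresponds to an isotropic subspace $S\subseteq\mathbb F_2^{2n}$ of dimension $n-k$. Its code space $\mathcal C$ has dimension $2^k$, and its normalizer is the symplectic complement $S^\perp$, of dimension $n+k$.

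First, I reduce the distance condition of \zcref{Def:QECC-distance} to a statement about Paulis. Any $F$ of weight $w$ expands in Pauli operators of weight at most $w$, and $P_{\mathcal C}$ acts on each Pauli $P$ in one of three ways:
\begin{itemize}
\item if $P\notin S^\perp$, then $P$ anticommutes with some stabilizer, so $P_{\mathcal C}PP_{\mathcal C}=0$;
\item if $P\in S$ up to phase, then $P_{\mathcal C}PP_{\mathcal C}\in\mathbb C P_{\mathcal C}$;
\item only $P\in S^\perp\setminus S$ can give a term outside $\mathbb C P_{\mathcal C}$.
\end{itemize}
Hence $d(\mathcal C)\geq d$ as soon as no nonidentity Pauli of weight at most $d-1$ lies in $S^\perp\setminus S$. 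To keep the argument simple, I will demand the stronger condition that no such Pauli lies in $S^\perp$ at all.

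Second comes the counting step, which is the heart of the proof. Choose $S$ uniformly at random among isotropic subspaces of dimension $n-k$. The symplectic group acts transitively on nonzero vectors of $\mathbb F_2^{2n}$ and maps isotropic subspaces to isotropic subspaces. Therefore $\Pr[v\in S^\perp]$ is the same for every nonzero $v$. Averaging over $v$ shows that this probability equals
\[
    \frac{2^{n+k}-1}{4^n-1}\leq 2^{k-n}.
\]
There are $\sum_{j=1}^{d-1}3^j\binom nj$ nonidentity Paulis of weight at most $d-1$. By the union bound, the probability that any of them lies in $S^\perp$ is at most
\[
    2^{k-n}\Bigl(\sum_{j=0}^{d-1}3^j\binom nj-1\Bigr),
\]
which is strictly less than $1$ under \zcref{Eq:quantum-GV}. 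So some $S$ works, and it yields an $((n,2^k,d))$ code. The only point needing care is the transitivity claim (Witt's theorem over $\mathbb F_2$, or an explicit Clifford argument), and I expect it to be the main technical step.

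Third, the asymptotic form follows from the entropy estimate. For $d/n<1/2$,
\[
    \sum_{j=0}^{d-1}3^j\binom nj\leq d\,3^{d}\,2^{nh(d/n)}.
\]
Taking logarithms, \zcref{Eq:quantum-GV} holds whenever
\[
    \frac kn\leq 1-\frac dn\log3-h\!\left(\frac dn\right)-\frac{\log d}{n}.
\]
The $\tfrac{\log d}{n}$ correction vanishes for large $n$. In the application the rate $\gamma$ is chosen strictly below $g(\tau)$, so this slack is absorbed, and that is the sense of ``for large $n$'' in \zcref{Eq:quantum-GV-asymptotic}.
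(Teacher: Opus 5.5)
Your finite-$n$ argument is correct. It is the standard random-stabilizer union bound. The paper gives no proof of its own and simply cites Gottesman's Theorem~7.1, which rests on this same counting over stabilizer codes. The transitivity step you flag is routine: nondegeneracy gives $w$ with $\omega(v,w)=1$ for any $v\neq0$, and induction on $\{v,w\}^\perp$ completes a symplectic basis, so Witt's theorem is not needed. Like the cited bound, you obtain $d(\mathcal C)\geq d$ rather than the equality written into the paper's definition of an $((N,K,d))$ code. This is harmless, because the paper only ever uses $d(\mathcal C)\geq 2\ell t+1$.

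The one step that does not prove what is stated is the asymptotic clause. Your estimate $\sum_{j=0}^{d-1}3^j\binom{n}{j}\leq d\,3^d2^{nh(d/n)}$ only shows that the displayed condition \emph{with an extra} $-\frac{\log d}{n}$ suffices. Noting that this correction tends to zero does not let you drop it. Re-reading ``for large $n$'' as ``up to vanishing slack'' proves a weaker statement. That weaker statement does cover the paper's application, since there $q_n/(\ell m)\to\gamma<g(\tau)$ strictly. It does not cover rates within $\frac{\log d}{n}$ of the displayed bound.

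No slack is actually needed. If $d\leq n/2$, then for $1\leq j\leq d-1$ consecutive terms satisfy
\[
\frac{3^j\binom{n}{j}}{3^{j-1}\binom{n}{j-1}}=\frac{3(n-j+1)}{j}>3,
\]
so the sum is dominated by a geometric series:
\[
\sum_{j=0}^{d-1}3^j\binom{n}{j}\leq\frac32\,3^{d-1}\binom{n}{d-1}\leq 3^d\,2^{nh(d/n)}.
\]
The last step uses $\binom{n}{d-1}\leq2^{nh((d-1)/n)}\leq2^{nh(d/n)}$ and the monotonicity of $h$ on $[0,1/2]$. Hence $k\leq n-d\log3-nh(d/n)$ already implies the finite condition. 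If $d>n/2$, the right-hand side $1-\frac{d}{n}\log3-h(\frac{d}{n})$ is negative, so the asymptotic condition is vacuous.
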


\begin{proof}
    See~Theorem 7.1 of \cite{gottesman2024surviving}.
\end{proof}

%%%%%%%%%%%%%%%%%%%%%%%%%%%%%%%%%%%%%%%%%%%%%%%%%%%%%%%%%%%%%%%%%%%%%%%%%%%%%%%%
\section{Proof of~\zcref{lem:blocking}}
\label{app:proof-of-lem-blocking}

\begin{proof}
By~\zcref{Prop:alternative-characterization}, there exist a system $E$,
a Stinespring isometry $V_\Phi\in\mathcal{L}(A,BE)$ for $\Phi$, and a
permutation-covariant extension channel
$\mathcal N^{(n)}_{A^n\to B^nE^n}$ of
$\mathcal N^{(n)}_{A^n\to B^n}$ such that
\begin{equation}
    \mathcal N^{(n)}_{A^n\to B^nE^n}(\rho)
    =
    \sum_{t,t'\in\mathcal T}
    \beta_{tt'}K_t\rho K_{t'}^\dagger,
    \label{Eq:blocking-original-extension}
\end{equation}
where $\beta=(\beta_{tt'})_{t,t'\in\mathcal T}$ is positive
semidefinite and
\begin{equation}
    K_t
    \in
    \mathcal K\!\left(
        A^n\to(BE)^n,
        V_\Phi^{\otimes(n-r)}
    \right)
    \qquad
    \text{for every }t\in\mathcal T.
\end{equation}

Let $V_{\mathcal E}:Q\to A^kE_{\mathcal E}$ and
$V_{\mathcal D}:B^k\to QE_{\mathcal D}$ be Stinespring isometries for
$\mathcal E$ and $\mathcal D$, respectively, and define
\(
    F\coloneqq E_{\mathcal D}E^kE_{\mathcal E}.
\)
Then
\begin{equation}
    W
    \coloneqq
    \bigl(
        V_{\mathcal D}\otimes\id_{E^kE_{\mathcal E}}
    \bigr)
    \bigl(
        V_\Phi^{\otimes k}\otimes\id_{E_{\mathcal E}}
    \bigr)
    V_{\mathcal E}
    :
    Q\longrightarrow QF
    \label{Eq:effective-dilation}
\end{equation}
is a Stinespring isometry for $\Lambda$; in particular,
\begin{equation}
    \Lambda(\rho)
    =
    \Tr_F\!\left[W\rho W^\dagger\right].
\end{equation}

We now construct an extension channel
$\mathcal M^{(m)}_{Q^m\to Q^mF^m}$ of $\mathcal M^{(m)}$. We identify
\begin{equation}
    F^m
    =
    E_{\mathcal D}^mE^{mk}E_{\mathcal E}^m
\end{equation}
through the natural blockwise regrouping of tensor factors and use the
same convention in the expression below. Define
\begin{align}
    \mathcal M^{(m)}_{Q^m\to Q^mF^m}(\rho)
    \coloneqq{}&
    \left(
        \Ad_{V_{\mathcal D}}^{\otimes m}
        \otimes
        \idc_{E^{mk}E_{\mathcal E}^m}
    \right)
    \Bigg[
        \Tr_{B^sE^s}
        \Bigg(
            \left(
                \mathcal N^{(n)}_{A^n\to B^nE^n}
                \otimes
                \idc_{E_{\mathcal E}^m}
            \right)
            \left(
                \Ad_{V_{\mathcal E}}^{\otimes m}(\rho)
                \otimes
                \varphi_A^{\otimes s}
            \right)
        \Bigg)
    \Bigg],
    \label{Eq:block-extension}
\end{align}
where $\Ad_V(\cdot)\coloneqq V(\cdot)V^\dagger$. Here
$V_{\mathcal D}^{\otimes m}$ acts on $B^{mk}$, grouped into $m$
blocks of size $k$. Tracing out
$F^m=E_{\mathcal D}^mE^{mk}E_{\mathcal E}^m$ and using
\begin{equation}
    \Tr_{E_{\mathcal E}}\circ\Ad_{V_{\mathcal E}}
    =\mathcal E,
    \qquad
    \Tr_{E^n}\circ
    \mathcal N^{(n)}_{A^n\to B^nE^n}
    =
    \mathcal N^{(n)}_{A^n\to B^n},
    \qquad
    \Tr_{E_{\mathcal D}}\circ\Ad_{V_{\mathcal D}}
    =\mathcal D,
\end{equation}
we obtain
\begin{equation}
    \Tr_{F^m}
    \mathcal M^{(m)}_{Q^m\to Q^mF^m}(\rho)
    =
    \mathcal M^{(m)}_{Q^m\to Q^m}(\rho).
\end{equation}
Thus $\mathcal M^{(m)}_{Q^m\to Q^mF^m}$ is an extension of
$\mathcal M^{(m)}$.

We next verify permutation covariance. Let $\sigma\in S_m$, and let
$\sigma^{[k]}\in S_n$ be the permutation that permutes the first $m$
blocks of size $k$ according to $\sigma$, preserves the order within
each block, and fixes the final $s$ systems. The partial trace over the
final $s$ output systems commutes with these block permutations because
$\sigma^{[k]}$ acts trivially on the tail. Moreover, the tensor-power
isometries $V_{\mathcal E}^{\otimes m}$ and
$V_{\mathcal D}^{\otimes m}$ intertwine the corresponding block
permutations, while $\varphi_A^{\otimes s}$ is unchanged. Therefore,
permutation covariance of
$\mathcal N^{(n)}_{A^n\to B^nE^n}$ under $\sigma^{[k]}$ implies
\begin{equation}
    \mathcal M^{(m)}_{Q^m\to Q^mF^m}
    \left(
        U_{Q^m}^{\sigma}\rho
        \bigl(U_{Q^m}^{\sigma}\bigr)^\dagger
    \right)
    =
    U_{(QF)^m}^{\sigma}
    \mathcal M^{(m)}_{Q^m\to Q^mF^m}(\rho)
    \bigl(U_{(QF)^m}^{\sigma}\bigr)^\dagger.
\end{equation}
Hence $\mathcal M^{(m)}_{Q^m\to Q^mF^m}$ is permutation covariant.

It remains to verify the defect-support condition. Since each $K_t$
belongs to the generating set above, there exist a permutation
$\pi_t\in S_n$ and an operator
$L_t\in\mathcal L(A^r,(BE)^r)$ such that
\begin{equation}
    K_t
    =
    U_{(BE)^n}^{\pi_t}
    \left(
        V_\Phi^{\otimes(n-r)}
        \otimes L_t
    \right)
    \bigl(
        U_{A^n}^{\pi_t}
    \bigr)^\dagger.
\end{equation}
Let $S_t\subseteq[n]$ be the corresponding set of at most $r$ defect
positions. Thus $K_t$ acts as $V_\Phi$ on every tensor factor indexed
by $S_t^c$, while its action on the factors indexed by $S_t$ may be
arbitrary and correlated across those factors.

For each $j\in[m]$, let
\begin{equation}
    I_j
    \coloneqq
    \{(j-1)k+1,\ldots,jk\}
\end{equation}
be the set of channel positions in the $j$th block, and define
\begin{equation}
    G_t
    \coloneqq
    \left\{
        j\in[m]:
        I_j\cap S_t\neq\varnothing
    \right\}.
\end{equation}
Since every affected block contains at least one element of $S_t$,
\begin{equation}
    |G_t|
    \leq
    |S_t|
    \leq
    r.
\end{equation}
Consequently, $K_t$ acts as $V_\Phi^{\otimes k}$ on every block
$I_j$ with $j\notin G_t$.

Let $Z\coloneqq B^sE^s$ denote the discarded output system, and choose
a product orthonormal basis $\{|z\rangle_Z\}_z$ of $Z$. For every
$t\in\mathcal T$ and $z$, define
\begin{align}
    \widehat K_{t,z}
    \coloneqq{}&
    \bigl(
        V_{\mathcal D}^{\otimes m}
        \otimes
        \id_{E^{mk}E_{\mathcal E}^m}
    \bigr)
    \langle z|_Z
    \bigl(
        K_t\otimes\id_{E_{\mathcal E}^m}
    \bigr)
    \bigl(
        V_{\mathcal E}^{\otimes m}
        \otimes|\varphi\rangle_A^{\otimes s}
    \bigr)
    \nonumber\\
    &:
    Q^m\longrightarrow(QF)^m.
    \label{Eq:induced-Kraus}
\end{align}
Substituting~\zcref{Eq:blocking-original-extension} into
\zcref{Eq:block-extension} and evaluating the partial trace over $Z$
in the basis $\{|z\rangle_Z\}_z$ gives
\begin{equation}
    \mathcal M^{(m)}_{Q^m\to Q^mF^m}(\rho)
    =
    \sum_{t,t'\in\mathcal T}\sum_z
    \beta_{tt'}
    \widehat K_{t,z}\rho\widehat K_{t',z}^\dagger.
    \label{Eq:blocking-decomposition}
\end{equation}

For every block $j\notin G_t$, the operator $\widehat K_{t,z}$ acts
as
\begin{equation}
    W
    =
    \bigl(
        V_{\mathcal D}\otimes\id_{E^kE_{\mathcal E}}
    \bigr)
    \bigl(
        V_\Phi^{\otimes k}\otimes\id_{E_{\mathcal E}}
    \bigr)
    V_{\mathcal E}.
\end{equation}
Since $|G_t|\leq r$, there are at least $m-r$ such blocks. Choose any
$m-r$ of them. The remaining $r-|G_t|$ unaffected blocks, together
with the $|G_t|$ affected blocks, can be absorbed into an arbitrary
operator on the remaining $r$ blocks. Therefore, there exist a
permutation $\eta_t\in S_m$ and an operator
$L_{t,z}^{(r)}\in\mathcal L(Q^r,(QF)^r)$ such that
\begin{equation}
    \widehat K_{t,z}
    =
    U_{(QF)^m}^{\eta_t}
    \left(
        W^{\otimes(m-r)}
        \otimes
        L_{t,z}^{(r)}
    \right)
    \bigl(
        U_{Q^m}^{\eta_t}
    \bigr)^\dagger.
\end{equation}
Hence
\begin{equation}
    \widehat K_{t,z}
    \in
    \mathcal K\!\left(
        Q^m\to(QF)^m,
        W^{\otimes(m-r)}
    \right)
    \qquad
    \text{for every }t,z.
    \label{Eq:Khat-defect-space}
\end{equation}

Taking the normalized Choi state in
\zcref{Eq:blocking-decomposition} gives
\begin{equation}
    J\!\left(
        \mathcal M^{(m)}_{Q^m\to Q^mF^m}
    \right)
    =
    \sum_{t,t'\in\mathcal T}\sum_z
    \beta_{tt'}
    |\widehat K_{t,z}\rangle\!\rangle
    \langle\!\langle\widehat K_{t',z}|.
\end{equation}
By~\zcref{Eq:Khat-defect-space} and
\zcref{lem:Defect subspace correspondence}, every vector
$|\widehat K_{t,z}\rangle\!\rangle$ belongs to
\begin{equation}
    \mathcal V\!\left(
        \mathcal H_{QFQ'}^{\otimes m},
        |W\rangle\!\rangle_{QFQ'}^{\otimes(m-r)}
    \right),
\end{equation}
where $Q'\simeq Q$ is the reference system used in the Choi
construction. It follows that
\begin{equation}
    \supp
    J\!\left(
        \mathcal M^{(m)}_{Q^m\to Q^mF^m}
    \right)
    \subseteq
    \operatorname{span}
    \mathcal V\!\left(
        \mathcal H_{QFQ'}^{\otimes m},
        |W\rangle\!\rangle_{QFQ'}^{\otimes(m-r)}
    \right).
\end{equation}

Permutation covariance of
$\mathcal M^{(m)}_{Q^m\to Q^mF^m}$ implies that its Choi state is
invariant under simultaneous permutations of the $m$ composite systems
$Q_iF_iQ_i'$. Moreover, $|W\rangle\!\rangle_{QFQ'}$ is a purification
of $J(\Lambda)$, and
\begin{equation}
    \Tr_{F^m}
    J\!\left(
        \mathcal M^{(m)}_{Q^m\to Q^mF^m}
    \right)
    =
    J\!\left(
        \mathcal M^{(m)}_{Q^m\to Q^m}
    \right).
\end{equation}
Thus the Choi state of $\mathcal M^{(m)}$ admits a
permutation-invariant extension satisfying the required defect-support
condition. Hence $\mathcal M^{(m)}$ is an
$\binom{m}{r}$-Choi-almost-iid channel along $\Lambda$.
\end{proof}
%%%%%%%%%%%%%%%%%%%%%%%%%%%%%%%%%%%%%%%%%%%%%%%%%%%%%%%%%%%%%%%%%%%%%%%%%%%%%%%%
\section{Proof of~\zcref{Prop:outer-QEC}}
\label{app:proof-of-QEC-Prop}
\begin{proof}
Set $d\coloneqq |Q|=2^\ell$. The continuity of Stinespring dilations~\cite{kretschmann2008information,singh2026short} states that if
$\Phi_1,\Phi_2\in\mathrm{CPTP}(A,B)$ have Stinespring isometries $V_1,V_2\in\mathcal{L}(A,BE)$ with any common dilation space $E$, then
\begin{equation}
    \inf_{U_E}
    \left\|
        (\id_B\otimes U_E)V_1-V_2
    \right\|_\infty^2
    \leq
    2\|\Phi_1-\Phi_2\|_\diamond
    \leq
    4\inf_{U_E}
    \left\|
        (\id_B\otimes U_E)V_1-V_2
    \right\|_\infty ,
    \label{Eq:Stinespring-continuity}
\end{equation}
where the infimum is over unitaries on $E$. Choose $E$ to be the minimal Stinespring dilation space for $\Lambda\in\mathrm{CPTP}(Q,Q)$, and apply \zcref{Eq:Stinespring-continuity} to $\Phi_1=\Lambda$ and $\Phi_2=\mathrm{id}_Q$. Let
\begin{equation}
    V_0\in\mathcal{L}(Q,QE),
    \quad
    V_0|\psi\rangle
    \coloneqq
    |\psi\rangle_Q|0\rangle_E,
\end{equation}
be a fixed Stinespring dilation of $\mathrm{id}_Q$. Since $\|\Lambda-\mathrm{id}_Q\|_\diamond\leq\varepsilon$, we may therefore
choose a minimal Stinespring isometry $V\in\mathcal{L}(Q,QE)$ of $\Lambda$ satisfying $\|V-V_0\|_\infty\leq\sqrt{2\varepsilon}$. Equivalently, let $\Delta\coloneqq V-V_0$, then
\begin{equation}
    V=V_0+\Delta,
    \qquad
    \|\Delta\|_\infty\leq\sqrt{2\varepsilon}.
    \label{Eq:V-perturbation}
\end{equation}
Since the Choi rank of $\Lambda$ is at most $d^2$, the minimal dilation has dimension $d_E\leq d^2$.

By \zcref{cor:fixed-Stinespring-representation} and condition (ii), $\mathcal{M}^{(m)}$ admits the extension and defect-operator representation in \zcref{Prop:alternative-characterization} in terms of the minimal dilation $V$: there exists a
permutation-covariant extension $\mathcal M^{(m)}_{Q^m\to Q^mE^m}$ of $\mathcal M^{(m)}$ such that
\begin{equation}
    \mathcal M^{(m)}_{Q^m\to Q^mE^m}(\rho)
    =
    \sum_{a,b\in\mathcal T}
    \beta_{ab}K_a\rho K_b^\dagger ,
    \label{Eq:outer-extension}
\end{equation}
where $\beta\geq0$, $\Tr\beta=1$, and
$\{K_a\}_{a\in\mathcal T}$ is an orthonormal basis of $\operatorname{span}
    \mathcal K\!\left(
        Q^m\to(QE)^m,V^{\otimes(m-r)}
    \right)$. Moreover, by~\zcref{lem:Orthonormal basis of the defect subspace},
\begin{equation}
    |\mathcal T|
    \leq
    \binom mr(d^2d_E)^r,
    \label{Eq:T-cardinality}
\end{equation}
and each basis element $K_a$ can be chosen to take the form
\begin{equation}
    K_a
    =U^{\pi_a}_{(QE)^m}
    \left(V^{\otimes(m-r)}\otimes L_a\right)
    (U^{\pi_a}_{Q^m})^\dagger,
\end{equation}
for some permutation $\pi_a\in S_m$ and some arbitrary joint operator $L_a:Q^r\to(QE)^r$. Since the $K_a$ are orthonormal with respect to
the normalized Hilbert--Schmidt inner product,
\begin{equation}
    1
    =
    d^{-m}\Tr(K_a^\dagger K_a)
    =
    d^{-r}\Tr(L_a^\dagger L_a),
\end{equation}
and hence
\begin{equation}
    \|L_a\|_\infty
    \leq\|L_a\|_2
    =
    d^{r/2}.
    \label{Eq:La-bound}
\end{equation}
Expand the $m-r$ tensor power of $V=V_0+\Delta$ in $K_a$. For each subset
$J\subseteq[m-r]$, define
\begin{equation}
    T_J
    \coloneqq 
    \bigotimes_{i=1}^{m-r} X_i^{(J)},
    \qquad
    X_i^{(J)}
    \coloneqq 
    \begin{cases}
        \Delta, & i\in J,\\
        V_0,    & i\notin J.
    \end{cases}
\end{equation}
Then
\begin{equation}
    V^{\otimes(m-r)}
    =
    (V_0+\Delta)^{\otimes(m-r)}
    =
    \sum_{J\subseteq[m-r]}T_J,
\end{equation}
and hence
\begin{equation}
    K_a
    =
    \sum_{J\subseteq[m-r]}
    U^{\pi_a}_{(QE)^m}
    \bigl(
        T_J\otimes L_a
    \bigr)
    (U^{\pi_a}_{Q^m})^\dagger .
\end{equation}
Split this expansion into $K_a=K_a^{\leq t}+K_a^{>t}$ with
\begin{equation}
    K_a^{\leq t}
    \coloneqq 
    \sum_{\substack{J\subseteq[m-r]\\ |J|\leq t-r}}
    U^{\pi_a}_{(QE)^m}
    \bigl(
        T_J\otimes L_a
    \bigr)
    (U^{\pi_a}_{Q^m})^\dagger ,\qquad
    K_a^{>t}
    \coloneqq 
    \sum_{\substack{J\subseteq[m-r]\\ |J|\geq t-r+1}}
    U^{\pi_a}_{(QE)^m}
    \bigl(
        T_J\otimes L_a
    \bigr)
    (U^{\pi_a}_{Q^m})^\dagger .
\end{equation}
For the term indexed by the subset \(J\), \(L_a\) acts on \(r\) defective blocks, while, among the remaining \(m-r\) blocks, \(\Delta\) acts on precisely \(|J|\) blocks and \(V_0\) acts on all others. Since \(|J|\leq t-r\), every term in \(K_a^{\leq t}\) acts as \(V_0\) on at least
\begin{equation}
    m-r-|J|\geq m-r-(t-r)=m-t
\end{equation}
blocks. Equivalently, each term in \(K_a^{\leq t}\) can differ from \(V_0\) on at most \(t\) blocks.

The strategy is to correct both the residual error and the error caused by the defects for $K_a^{\leq t}$ via quantum error-correction, and bound the operator norm of $K_a^{>t}$, which consists of branches with too much, and hence not correctable, residual error. Using \zcref{Eq:La-bound} and $\sqrt{2\varepsilon}<1$, we obtain
\begin{equation}
    \|K_a^{>t}\|_\infty
    \leq
    d^{r/2}
    \sum_{j=t-r+1}^{m-r}
    \binom{m-r}{j}
    \|\Delta\|_\infty^j
    \leq
    2^{m-r}d^{r/2}
    (2\varepsilon)^{\frac{t-r+1}{2}} .
    \label{Eq:Ka-high-bound}
\end{equation}

Following the discussions in the end of \zcref{Rem:Kraus-defect}, we diagonalize the positive semidefinite coefficient matrix
$\beta=(\beta_{ab})_{a,b\in\mathcal T}$. Let
$\lambda_\mu\geq0$ be its nonzero eigenvalues and let
\begin{equation}
    u^{(\mu)}
    =
    \bigl(u_a^{(\mu)}\bigr)_{a\in\mathcal T}
    \in\mathbb C^{|\mathcal T|}
\end{equation}
be corresponding orthonormal eigenvectors. Then $\beta_{ab}
    =
    \sum_\mu
    \lambda_\mu
    u_a^{(\mu)}
    \overline{u_b^{(\mu)}}$. Define
\begin{equation}
    A_\mu
    \coloneqq 
    \sqrt{\lambda_\mu}
    \sum_{a\in\mathcal T}
    u_a^{(\mu)}K_a .
    \label{Eq:Kraus-from-beta}
\end{equation}
It follows that
\begin{equation}
    \sum_\mu A_\mu\rho A_\mu^\dagger=
    \sum_{a,b\in\mathcal T}
    \left(
        \sum_\mu
        \lambda_\mu
        u_a^{(\mu)}
        \overline{u_b^{(\mu)}}
    \right)
    K_a\rho K_b^\dagger=
    \sum_{a,b\in\mathcal T}
    \beta_{ab}K_a\rho K_b^\dagger .
\end{equation}
Thus $\{A_\mu\}_\mu$ is a Kraus representation of the extended channel. We can construct the Stinespring isometry $W\coloneqq 
    \sum_\mu A_\mu\otimes|\mu\rangle_F\in\mathcal{L}(Q^m,Q^mE^mF)$ of the extended channel, where $\{|\mu\rangle\}_\mu$ is an O.N.B. of the dilation space. Correspondingly, define
\begin{equation}
    A_\mu^{\leq t}
    \coloneqq 
    \sqrt{\lambda_\mu}
    \sum_{a\in\mathcal T}
    u_a^{(\mu)}K_a^{\leq t},
    \qquad
    A_\mu^{>t}
    \coloneqq 
    \sqrt{\lambda_\mu}
    \sum_{a\in\mathcal T}
    u_a^{(\mu)}K_a^{>t},
\end{equation}
and
\begin{equation}
    W^{\leq t}
    \coloneqq 
    \sum_\mu A_\mu^{\leq t}\otimes|\mu\rangle_F,
    \qquad
    W^{>t}
    \coloneqq 
    \sum_\mu A_\mu^{>t}\otimes|\mu\rangle_F.
\end{equation}

We first bound the operator norm of $W^{>t}$. Let $|v_a\rangle_F
    \coloneqq 
    \sum_\mu
    \sqrt{\lambda_\mu}\,
    u_a^{(\mu)}|\mu\rangle_F$, then
\begin{equation}
    W^{>t}
    =
    \sum_{a\in\mathcal T}
    K_a^{>t}\otimes|v_a\rangle_F .
    \label{Eq:W-bad-expansion}
\end{equation}
Recall that the eigenvectors $u^{(\mu)}$ are normalized and
$\Tr\beta=1$,
\begin{equation}
    \sum_{a\in\mathcal T}\|v_a\|^2
    =
    \sum_{a\in\mathcal T}\sum_\mu
    \lambda_\mu |u_a^{(\mu)}|^2
    =
    \sum_\mu\lambda_\mu
    =
    1.
\end{equation}
Therefore, by the triangle inequality and Cauchy--Schwarz,
\begin{equation}
    \|W^{>t}\|_\infty
    \leq
    \sum_{a\in\mathcal T}
    \|K_a^{>t}\|_\infty\|v_a\|
    \leq
    \max_{a\in\mathcal T}\|K_a^{>t}\|_\infty
    \left(
        |\mathcal T|
        \sum_{a\in\mathcal T}\|v_a\|^2
    \right)^{1/2}
    =
    |\mathcal T|^{1/2}
    \max_{a\in\mathcal T}\|K_a^{>t}\|_\infty .
\end{equation}
Combining with~\zcref{Eq:T-cardinality} and~\zcref{Eq:Ka-high-bound}, we obtain
\begin{equation}
    \|W^{>t}\|_\infty
    \leq
    2^{\,m+\frac{5\ell-2}{2}r}
    \binom mr^{1/2}
    (2\varepsilon)^{\frac{t-r+1}{2}}\eqqcolon \delta.
    \label{Eq:W-bad-bound-final}
\end{equation}

It remains to show that errors within $W^{\leq t}$ can be corrected exactly by
the QECC. Fix an orthonormal basis $\{|z\rangle\}_z$ of the extension space $E^m$, and define
\begin{equation}
    B_{\mu,z}
    \coloneqq 
    \bigl(
        \id_{Q^m}\otimes\langle z|_{E^m}
    \bigr)
    A_\mu^{\leq t}
    \in\mathcal L(Q^m).
\end{equation}
We claim that these are $\ell t$-qubit errors (see~\zcref{Def:t-qubit-error}):
\begin{equation}
    B_{\mu,z}\in
    \mathcal E_{\ell t}^{(\ell m)}
    \qquad
    \text{for every $\mu,z$}.
\end{equation}
To see this, recall that each term entering $K_a^{\leq t}$ acts as $V_0$ on at least $m-t$
blocks. Since $V_0|\psi\rangle
    =
    |\psi\rangle_Q|0\rangle_E$, contracting the environment $E^m$ with $\langle z|$ leaves the identity
operator on these $m-t$ blocks. Hence each term entering $B_{\mu,z}$ acts non-trivially on at most $t$ blocks, that is $\ell t$ qubits. Since $B_{\mu,z}$ is a linear combination of these terms, it is an $\ell t$-qubit error.

By assumption, the QECC has distance $2\ell t+1$ and therefore corrects all
$\ell t$-qubit errors (see~\zcref{Cor:distance-corrects-errors}). By~\zcref{Lem:Stinespring-recovery}, there exists a Stinespring isometry $\widehat R:Q^m\longrightarrow M S$ of the recovery channel
such that for every $\mu,z$ there exists a
vector $|\xi_{\mu,z}\rangle_S$ such that
\begin{equation}
    \widehat R B_{\mu,z}U
    =
    \id_M\otimes|\xi_{\mu,z}\rangle_S .
\end{equation}
Since
\begin{equation}
    A_\mu^{\leq t}
    =
    \sum_z B_{\mu,z}\otimes|z\rangle_{E^m},
\end{equation}
we obtain
\begin{equation}
\label{eq:correctable}
    \bigl(
        \widehat R\otimes\id_{E^mF}
    \bigr)
    W^{\leq t}U
    =
    \sum_{\mu,z}
    \widehat R B_{\mu,z}U
    \otimes|z\rangle_{E^m}|\mu\rangle_F
    =
    \id_M\otimes|\xi\rangle_{SE^mF},
\end{equation}
where $|\xi\rangle
    \coloneqq 
    \sum_{\mu,z}
    |\xi_{\mu,z}\rangle_S
    |z\rangle_{E^m}
    |\mu\rangle_F$. Thus the $W^{\leq t}$ part can be exactly corrected.

For the last step, define the isometry
\begin{equation}
    Y
    \coloneqq 
    \bigl(
        \widehat R\otimes\id_{E^mF}
    \bigr)WU
    :
    M\longrightarrow MSE^mF .
\end{equation}
Since $W$ is the Stinespring dilation of the extended channel $\mathcal M^{(m)}_{Q^m\to Q^mE^m}$,
\begin{equation}
    \Tr_{SE^mF}\circ\Ad_Y
    =
    \mathcal R\circ\mathcal M^{(m)}\circ\mathcal U,
\end{equation}
so $Y$ is the Stinespring dilation of the error-corrected channel. From $W=W^{\leq t}+W^{>t}$ and~\zcref{eq:correctable}, we have
\begin{equation}
    Y
    =
    \id_M\otimes|\xi\rangle
    +
    \bigl(
        \widehat R\otimes\id_{E^mF}
    \bigr)W^{>t}U,
\end{equation}
and therefore combining with~\zcref{Eq:W-bad-bound-final},
\begin{equation}
    \left\|
        Y-\id_M\otimes|\xi\rangle
    \right\|_\infty
    \leq
    \|W^{>t}\|_\infty
    \le
    \delta.
    \label{Eq:Y-xi-bound}
\end{equation}
Define the normalized state\footnote{Notice that $\|\xi\|>0$, otherwise~\zcref{Eq:Y-xi-bound} would imply $1\leq\delta$, in which case the proposition trivially holds since the diamond-norm distance is at most 2.} $|\widehat\xi\rangle
    \coloneqq 
    \frac{|\xi\rangle}{\|\xi\|}$, then
\begin{align}
    \left\|
        Y-\id_M\otimes|\widehat\xi\rangle
    \right\|_\infty
    &\leq
    \left\|
        Y-\id_M\otimes|\xi\rangle
    \right\|_\infty
    +
    \left\|
        \id_M\otimes
        \bigl(
            |\xi\rangle-|\widehat\xi\rangle
        \bigr)
    \right\|_\infty
    \nonumber\\
    &=
    \left\|
        Y-\id_M\otimes|\xi\rangle
    \right\|_\infty
    +
    \bigl|\,\|\xi\|-1\,\bigr|
    \leq2\delta .
\end{align}
Finally, since $\id_M\otimes|\widehat\xi\rangle$ is a Stinespring isometry of $\mathrm{id}_M$, the upper continuity bound in~\zcref{Eq:Stinespring-continuity} gives
\begin{equation}
    \frac12
    \left\|
        \mathcal R\circ\mathcal M^{(m)}\circ\mathcal U
        -\mathrm{id}_M
    \right\|_\diamond
    \leq
    \left\|
        Y-\id_M\otimes|\widehat\xi\rangle
    \right\|_\infty
    \leq
    2^{\,m+1+\frac{5\ell-2}{2}r}
    \binom mr^{1/2}
    (2\varepsilon)^{\frac{t-r+1}{2}} .
\end{equation}
This proves the proposition.
\end{proof}
%%%%%%%%%%%%%%%%%%%%%%%%%%%%%%%%%%%%%%%%%%%%%%%%%%%%%%%%%%%%%%%%%%%%%%%%%%%%%%%%

\end{document}